\documentclass[letterpaper, 11pt]{article}
\usepackage{graphicx} 
\usepackage[margin=2.15cm]{geometry}
\usepackage{algorithm}
\usepackage{algpseudocode}
\usepackage{amsthm}
\usepackage{amssymb,amsmath}  
\usepackage{thmtools}  
\usepackage{mathtools}
\usepackage[colorlinks=true, allcolors=blue]{hyperref}
\usepackage[capitalize]{cleveref}
\usepackage[utf8]{inputenc}
\usepackage[T1]{fontenc}
\usepackage{lmodern}

\declaretheoremstyle[headfont=\itshape,notefont=\bfseries,bodyfont=\normalfont]{claimproofstyle}
\declaretheorem[style=claimproofstyle,numbered=no,name=Proof of claim]{claimproof}

\usepackage{todonotes}
\usepackage{placeins}
\usepackage{subcaption}
\usepackage{booktabs}
\usepackage{authblk}
\usepackage{xspace}
\usepackage{newtxtext}
\usepackage{newtxmath}
\usepackage{tikz}
\usetikzlibrary{calc, shapes, backgrounds, graphs, arrows.meta, bending}
\usepackage{enumitem}
\setlist[1]{labelindent=\parindent}
\setlist[enumerate]{label=(\arabic*)}
\newtheorem{example}{Example}
\newtheorem{definition}{Definition}[section]
\newtheorem{lemma}{Lemma}[section]
\newtheorem{theorem}{Theorem}[section]
\newtheorem{claim}{Claim}
\newtheorem{fact}{Fact}
\newtheorem{corollary}{Corollary}[section]
\newtheorem{prop}{Proposition}
\newtheorem{observation}{Observation}[section]

\newcommand{\SkylineProblem}{\textsc{Minimum Skyline Point}\xspace}
\newcommand{\TwoTreesProblem}{\textsc{Two Trees SUS}\xspace}

\newcommand{\PairSUS}{\textsc{Shortest Unique String Pair}\xspace}
\newcommand{\SUS}{\textsc{Shortest Unique Substring}\xspace}
\newcommand{\SAS}{\textsc{Shortest Absent Substring}\xspace}
\newcommand{\SUSDiff}{\textsc{Shortest Exclusive Substring}\xspace}
\newcommand{\SkylineDiff}{\textsc{Minimum Exclusive Point}\xspace}
\newcommand{\TwoFamilies}{\textsc{Two String Families LCP}\xspace}
\newcommand{\LCP}{\textsf{LCP}}
\newcommand{\PairSES}{\textsc{Shortest Exclusive String Pair}\xspace}

\def\dd{\mathinner{.\,.}}
\newcommand{\cO}{\mathcal{O}}
\newcommand{\proofsubparagraph}[1]{\paragraph{#1}}

\newcommand{\Occ}{\mathrm{Occ}}
\newcommand{\per}{\textsf{per}}
\newcommand{\sd}{\textsf{sd}}
\newcommand{\str}{\textsf{str}}
\newcommand{\ST}{\textsf{ST}}
\newcommand{\TRIE}{\textsf{TRIE}}

\newcommand{\LCPs}{\mathsf{LCPs}}
\renewcommand{\P}{\mathcal{P}}
\newcommand{\Q}{\mathcal{Q}}
\renewcommand{\L}{\mathcal{L}}
\newcommand{\R}{\mathcal{R}}

\newcommand{\Lroot}{\textsf{Lroot}}
\newcommand{\Lrep}{\textsf{Lrepr}}

\newcommand{\Sync}{\mathbf{Sync}\xspace}
\newcommand{\maxLCP}{\mathsf{maxLCP}}

\newcommand{\prev}{\mathsf{prev}}
\newcommand{\nextt}{\mathsf{next}}

\makeatletter
\renewcommand{\paragraph}{%
  \@startsection{paragraph}{4}%
  {\z@}{0.5ex \@plus 1ex \@minus .2ex}{-1em}%
  {\normalfont\normalsize\bfseries}%
}
\makeatother

\newcommand{\defproblem}[3]{
\vspace{2mm}
\noindent\fbox{
   \begin{minipage}{0.96\textwidth}
   \textsc{#1}\\
   {\bf{Input:}} #2  \\
   {\bf{Output:}} #3
   \end{minipage}
   }
   \vspace{2mm}
}

\newcommand{\DrawRun}[4]{
	\def\h{0.5};
	\def\ext{0.4};

    \filldraw [fill=#4] (#1, #2)
		-- +(-\ext, 0)
		-- +(-\ext - 0.1, 0.5 * \h)
        -- +(-\ext, \h)
		-- +(#3 + \ext, \h)
		-- +(#3 + \ext + 0.1, 0.5 * \h)
        -- +(#3 + \ext, 0)
		-- +(0, 0);

    \foreach \i in {0,...,#3} {
        \draw (#1 + \i, #2) -- +(0, \h);
    }
}

\title{Faster Algorithms for Shortest Unique or Absent Substrings}

\author[1]{Panagiotis Charalampopoulos}
\author[1]{Manal Mohamed}
\author[2,3,4]{Solon~P.~Pissis}
\author[3]{Hilde Verbeek}
\author[5]{\\Wiktor Zuba}

\affil[1]{King's College London, UK\\
    \texttt{$\{$p.charalampopoulos, manal.1.mohamed$\}$@kcl.ac.uk}}
\affil[2]{The Cyprus Institute, Nicosia, Cyprus\\
    \texttt{s.pissis@cyi.ac.cy}}
\affil[3]{CWI, Amsterdam, The Netherlands\\
    \texttt{hilde.verbeek@cwi.nl}}
\affil[4]{Vrije Universiteit, Amsterdam, The Netherlands}
\affil[5]{University of Warsaw, Poland\\
    \texttt{w.zuba@mimuw.edu.pl}}
\date{\vspace{-5ex}}

\begin{document}
\maketitle

\begin{abstract}
We revisit two well-known algorithmic problems on strings:
computing a \emph{shortest unique substring} (SUS) and a \emph{shortest absent substring} (SAS) of a string $S$ of length $n$. Both problems admit folklore $\cO(n)$-time solutions using the suffix tree of $S$. However, for small alphabets, this complexity is not necessarily optimal in the word RAM model, where a string of length $n$ over alphabet $[0,\sigma)$ can be stored in $\cO(n \log \sigma/\log n)$ space and read in $\cO(n \log \sigma/\log n)$ time.

We present an $\cO(n \log \sigma/\sqrt{\log n})$-time algorithm for computing a SUS of $S$.
This algorithm decomposes the problem according to the length and the period of the sought substring and uses several tools and techniques, such as synchronizing sets, the analysis of runs, and wavelet trees, to reduce the computation of a SUS to a simple geometric problem.
Further, we adapt this algorithm and combine it with an efficient construction of de Bruijn sequences in order to obtain an $\cO(n \log \sigma/\sqrt{\log n})$-time algorithm for computing a SAS of $S$. 
\end{abstract}

\section{Introduction}\label{sec:intro}

Given a string $S$ over an alphabet $\Sigma$, a string $P$ is called
\emph{unique} in~$S$ if it occurs exactly once in~$S$ as a substring.
A unique substring $P$ of $S$ is called a \emph{shortest unique substring} (SUS) of~$S$ if there is no shorter string that is unique in $S$.
Similarly, a string $P$ over $\Sigma$ is said to be \emph{absent} from~$S$ if it does not occur in $S$ as a substring.
A string~$P$ that is absent from~$S$ is called a \emph{shortest absent substring} (SAS) of $S$ if no shorter string that is absent from $S$ exists.
For example, for string $S=\texttt{gcattgcgtaggt}$ over the alphabet $\Sigma=\{\texttt{a},\texttt{c},\texttt{g},\texttt{t}\}$, we have that $\texttt{ag}$ is a SUS of $S$ and $\texttt{aa}$ is a SAS of $S$.

Shortest unique substrings (SUSs) and shortest absent substrings (SASs) have a wide range of applications in bioinformatics, information retrieval, and data compression. In bioinformatics, SUSs are employed in alignment-free sequence comparison methods~\cite{DBLP:journals/bmcbi/HauboldPMW05}, whereas in information retrieval they are used to extract minimal text snippets from a document collection containing a query term~\cite{DBLP:conf/icde/PeiWY13}.
Note that the notion of SUS used in~\cite{DBLP:conf/icde/PeiWY13} is position-dependent: for a given position in the input string, one seeks a SUS covering that position. In contrast, in this work we consider the global variant of the problem, where the goal is to compute a shortest substring that is unique in the entire string.
SASs are likewise of significant importance in bioinformatics and data compression. They provide highly specific genomic signatures for pathogens such as SARS-CoV-2~\cite{DBLP:journals/bioinformatics/PratasS21}, thereby facilitating the development of rapid diagnostic assays and targeted therapeutics~\cite{DBLP:journals/bioinformatics/SilvaPCPF15}. Moreover, SASs are utilized in alignment-free sequence comparison methods~\cite{DBLP:journals/iandc/Charalampopoulos18} and constitute the foundational concept of compression schemes based on  antidictionaries~\cite{DBLP:journals/pieee/CrochemoreMRS00}.

It is a classical exercise to compute a SUS or a SAS of a string $S$ of length $n$ in $\cO(n)$ time using the suffix tree of $S$~\cite{DBLP:conf/focs/Weiner73}.
In particular, all SUSs and SASs are naturally encoded  in the suffix tree of $S$; see \cref{fig:SUS-SAS} for an illustration.

\begin{figure}[ht]
    \centering
    \includegraphics[width=0.75\linewidth]{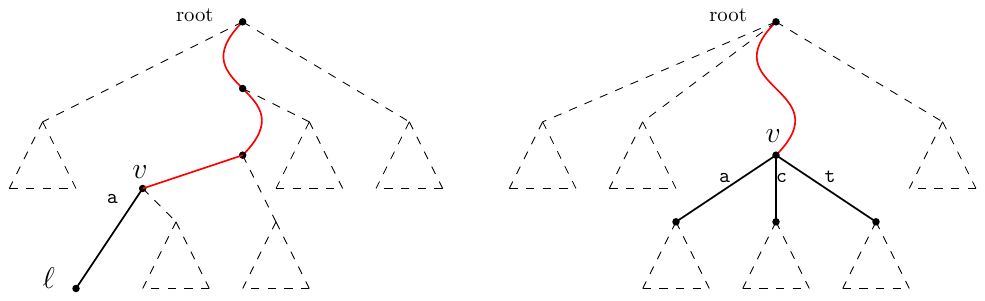}
    \caption{A schematic illustration of two suffix trees.
    Left, the root-to-$v$ path-label, appended with letter $\texttt{a}$, is a \emph{unique substring} that occurs as a prefix of the suffix represented by the leaf $\ell$.
    Right, the root-to-$v$ path-label, appended with letter $\texttt{g}\notin\{\texttt{a},\texttt{c},\texttt{t}\}$, yields an \emph{absent substring}.}
    \label{fig:SUS-SAS}
\end{figure}

Here, we consider the unit-cost word RAM model of computation with word size $w= \Theta(\log n)$ for inputs of size $n$, and a standard instruction set including arithmetic operations, bitwise Boolean operations, and shifts.
We measure the space complexity of our algorithms in terms of the number of machine words used.
A \emph{packed representation} of a string $S$ over an integer alphabet $\Sigma=[0,\sigma)$ stores $\Theta(\log_\sigma n)$ letters per machine word (possibly apart from the last one), thus representing $S$ in $\cO(1+|S|/\log_{\sigma}n)$ machine words.
A string given in this representation is referred to as a \emph{packed string}.

A large body of work exploits bit-level parallelism in the word RAM model to accelerate classical string processing tasks when the input consists of packed strings.
In particular, the problems for which speed-ups have been obtained include
pattern matching~\cite{DBLP:journals/spe/Baeza-Yates89,DBLP:conf/iwoca/Belazzougui10,DBLP:journals/tcs/Ben-KikiBBGGW14,DBLP:journals/jda/Bille11,DBLP:conf/cpm/BreslauerGG12,DBLP:journals/ipl/Fredriksson03,DBLP:journals/ipl/GrabowskiF08,DBLP:conf/cpm/NavarroR98},
text indexing~\cite{DBLP:conf/cpm/BilleGS17,DBLP:conf/cpm/0001G15,DBLP:journals/ieicet/TakagiISA17}, computing palindromes~\cite{DBLP:conf/cpm/Charalampopoulos22}, constructing longest common extension data structures~\cite{DBLP:conf/stoc/KempaK19}, computing a longest common substring~\cite{TALG-Charalampopoulos25},
constructing the BWT~\cite{DBLP:conf/stoc/KempaK19},
computing Lempel-Ziv (LZ77) factorizations~\cite{DBLP:conf/spire/Ellert23,DBLP:conf/focs/KempaK24},
counting squares and locating runs~\cite{DBLP:conf/mfcs/Charalampopoulos25}, computing Lyndon arrays~\cite{DBLP:conf/esa/BannaiE23},
computing covers~\cite{DBLP:conf/spire/RadoszewskiZ24}, and constructing compressed suffix trees and compressed suffix arrays~\cite{DBLP:conf/soda/KempaK23}.

In this paper, we ask the following question for a string of length $n$: 
\begin{center}
    \emph{Can a SUS (or a SAS) be computed in $o(n)$ time in the packed setting?}
\end{center}

\subparagraph{Our results.}
We answer this question in the affirmative, improving upon the folklore linear-time solutions for computing a single SUS or SAS in the packed setting.

\begin{restatable}[SUS]{theorem}{sus}\label{the:sus}
Given a packed string $S$ of length $n$ over an integer alphabet $[0,\sigma)$, 
a~shortest unique substring of $S$ can be computed in $\cO({n\log \sigma}/{\sqrt{\log n}})$ time.
\end{restatable}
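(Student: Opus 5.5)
The plan follows the abstract: split by the length $\ell$ of the sought SUS, and for long candidates split further by whether they are periodic. Set $\tau=\Theta(\sqrt{\log n}\cdot\log_\sigma n)$, chosen so that $n/\tau = n\log\sigma/\sqrt{\log n}$ up to constants. A preliminary check answers a cheaper question: is there a unique substring of length at most $\tau' = \epsilon\log_\sigma n$? Every such length-$\tau'$ substring fits in a word, there are only $n^{\epsilon}$ of them, and the packed representation lets us build the counts by scanning. Doing this naively costs $\cO(n)$, though. So for short lengths I would instead use a sampled approach: all lengths $\ell\le\tau$ are handled through the synchronizing-set machinery described next, and the tiny lengths $\ell<\tau'$ are treated by tabulating over $\cO(n/\tau')$ blocks, using lookup tables of size $n^{\cO(\epsilon)}$ for all windows of a block.

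The main tool is a $\tau$-synchronizing set $\Sync$ of $S$ of size $\cO(n/\tau)$. It can be computed in $\cO(n/\tau)$ time for packed strings (Kempa--Kociumaka). Consider a candidate SUS $P=S[i\dd i+\ell)$ with $\ell\ge 3\tau$ whose first $3\tau$ letters are not periodic with period at most $\tau/3$. Then $P$ contains a synchronizing position $s$ within its first $\tau$ positions, and all occurrences of $P$ are synchronized consistently at the same offset. So $P$ is unique iff no other $s'\in\Sync$ has both of the following: the reversed left context of $s'$ agreeing with that of $s$ for $s-i$ letters, and the right context agreeing for $i+\ell-s$ letters. I would sort the $\cO(n/\tau)$ synchronizing positions by their suffixes and by their reversed prefixes, which is doable in $\cO(n/\tau)$ time using the packed sparse suffix sorting. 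For each $s$, the neighbours in each order give the longest left extension $a_s$ and right extension $b_s$ that are shared with some other position. The difficulty is that these must be shared by the \emph{same} other position. This is exactly a two-dimensional dominance problem: we want the minimum of $x+y$ such that no other point dominates in both LCP coordinates. I would solve it with a wavelet tree over rank pairs, using the $\cO(m\sqrt{\log m})$-style or packed wavelet-tree construction that fits the budget. The goal is a ``minimum skyline point''-type subproblem: for each $s$, minimize $x+y$ over $x>a$ and $y>b$ pairs not dominated.

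Periodic candidates, and the regime $\tau'\le\ell<3\tau$ where the synchronizing property fails near periodic regions, are handled via runs. Each periodic fragment with period at most $\tau/3$ extends to a run. A unique periodic substring must be determined by the run's boundaries together with its Lyndon root and exponent. The relevant runs with period at most $\tau$ number $\cO(n/\tau)$ and can be enumerated in the packed setting. For each run, the shortest unique fragment containing a run end (of the form root-power plus one mismatching letter on a side) is compared across runs grouped by Lyndon root using the same geometric reduction, with the exponent as an extra coordinate. For $\ell<3\tau$ non-periodic, synchronizing positions still exist when $\tau$ is chosen relative to $\ell$. I would therefore iterate over geometric scales $\tau_k=2^k$ up to $\tau$, where each scale costs $\cO(n/\tau_k)$. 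This geometric sum is dominated by the smallest scale $\tau'$, which gives $\cO(n/\log_\sigma n)$, within budget.

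The main obstacle is the geometric step: computing, among $m=\cO(n/\tau)$ points with two LCP-based coordinates, the minimum unique extension in $\cO(m\sqrt{\log n})$ time. This is where the $\sqrt{\log n}$ loss, and hence the choice of $\tau$, comes from. I would first try an offline sweep over one coordinate with a wavelet tree on the other, answering ``range max LCP'' queries in $\cO(\sqrt{\log n})$ amortized time, following the longest common substring technique of Charalampopoulos et al.
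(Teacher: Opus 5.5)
Your skeleton matches the paper's: tabulation for tiny lengths, synchronizing-set anchors for aperiodic candidates, runs grouped by Lyndon root for periodic ones, and a two-dimensional ``unique dominance'' subproblem. The gap is the step you yourself call the main obstacle, and the proof depends on it. For a fixed anchor $s$, the feasible pairs $(x,y)$ form a staircase with one step per explicit ancestor of $s$'s leaf in the trie of reversed left contexts. A single ``range max LCP'' query per anchor therefore does not give the minimum of $x+y$, and the total number of steps over all anchors can be superlinear. You give no algorithm that handles all these staircases at once. The packed wavelet-tree technique from the longest-common-substring work does not supply one in general. Its $\sqrt{\log n}$ saving exists only because left contexts have length $\cO(\log_\sigma n)$, which bounds the tree height, and right contexts are capped at $\beta=2^{\sqrt{\log n}}$. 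Then each entry of the per-node LCP lists takes $\cO(\sqrt{\log n})$ bits and the lists can be processed in packed form. With uncapped right contexts (a SUS may be very long), merely writing down these lists costs $\Omega(n)$ in the worst case. Your accounting is also off: for $\tau=\Theta(\sqrt{\log n}\log_\sigma n)$ one gets $n/\tau=\Theta(n\log\sigma/\log^{3/2}n)$, not $n\log\sigma/\sqrt{\log n}$.

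The missing idea is to split the aperiodic regime by length and use two very different samplings.

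\textbf{Medium lengths.} For lengths in $(\frac15\log_\sigma n,2^{\sqrt{\log n}})$ the paper uses a single $\tau=\lfloor\frac1{15}\log_\sigma n\rfloor$. No ladder of scales is needed, since a $\tau$-synchronizing set synchronizes every aperiodic fragment of length at least $3\tau$. Left contexts are restricted to fewer than $\tau$ letters and right contexts to $2^{\sqrt{\log n}}$ letters. At each wavelet-tree node, i.e., each left-context prefix, a shortest unique prefix is computed from the packed LCP list (\cref{lem:med:wavelet,lem:med:lcps}). Candidates starting inside a $\tau$-run are treated separately as prefix families in linear time (\cref{lem:med:prefix}).

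\textbf{Long lengths.} For lengths at least $\log^4 n$ the paper switches to $\tau=\lfloor\frac13\log^4 n\rfloor$, leaving only $N=\cO(n/\log^4 n)$ anchors, so an $\cO(N\log^3 N)$ algorithm is affordable. It takes heavy-path decompositions of the two compacted tries and builds one \SkylineProblem instance per pair of heavy paths sharing a leaf. Each leaf lies in $\cO(\log^2 N)$ such pairs, and each instance is solved by a linear sweep (\cref{lem:long:two-trees-algo,lem:skyline-alg}).

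\textbf{Periodic case.} This is likewise run at both scales, with sparse-Lyndon roots for the larger one. Its key fact is \cref{lem:per:substring}: only runs with exponent $e^{\max}_\lambda-2$, $e^{\max}_\lambda-1$ or $e^{\max}_\lambda$ can contain a unique periodic fragment. So each run becomes at most three points in $[0,2r-1]^2$, and no extra ``exponent'' coordinate is needed. Also, a periodic SUS lies entirely inside one run. Your ``root power plus a mismatching letter'' fragments leave the run and belong to the aperiodic case.
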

\begin{restatable}[SAS]{theorem}{sas}
\label{the:sas}
Given a packed string $S$ of length $n$ over an integer alphabet $[0,\sigma)$, 
a~shortest absent substring of $S$ can be computed in $\cO({n\log \sigma}/{\sqrt{\log n}})$ time.
\end{restatable}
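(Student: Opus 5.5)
We prove \cref{the:sas} by reducing to the machinery behind \cref{the:sus} together with a counting argument on the answer length. Let $\ell$ be the length of a SAS of $S$. Since $S$ has at most $n-k+1$ distinct substrings of length $k$, an absent string of length $k$ exists as soon as $\sigma^k > n-k+1$. Hence $\ell \le \lceil \log_\sigma (n+1)\rceil$. We split into two regimes according to a threshold $\tau = \delta\log_\sigma n$ for a small constant $\delta<1/2$.

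\emph{Short regime} ($\ell \le \tau$). For each $k\le \tau$ we mark the distinct length-$k$ substrings of $S$ in a bitmap of size $\sigma^k \le n^{\delta}$. We scan $S$ word by word and use a precomputed lookup table indexed by pairs of consecutive packed words (blocks of $\Theta(\log_\sigma n)$ letters). Each table entry lists the $\cO(\log_\sigma n)$ length-$k$ windows starting in the first word. The table is too large to use naively, so we instead process $S$ in blocks of $\Theta(\log_\sigma n)$ letters and collapse duplicate blocks: there are at most $\sigma^{\cO(\tau)} = n^{\cO(\delta)}$ distinct blocks of length $\tau+\tau'$. We therefore enumerate only the distinct blocks, obtained by sorting block codes, and for each distinct block we insert its windows explicitly. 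This costs $\cO(n/\log_\sigma n + n^{\cO(\delta)}\log n)$ per $k$. Summed over $k\le\tau$, the total is $\cO(n\log\sigma/\sqrt{\log n})$ after choosing block lengths appropriately; here one uses $\tau=\cO(\log_\sigma n)$ together with the bound of $\cO(\sqrt{\log n})$-fold overhead allowed. Once the bitmap for $k$ is built, an unmarked entry is found by word-parallel scanning in $\cO(\sigma^k/\log n)$ time, and the smallest $k$ with an unmarked entry gives a SAS.

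\emph{Long regime} ($\tau<\ell\le\lceil\log_\sigma(n+1)\rceil$). Here $\ell=\Theta(\log_\sigma n)$ and only $\cO(1/\delta)$ candidate lengths remain, so we test each candidate $k$ in increasing order. For fixed $k$, an absent string of length $k$ exists iff $S$ has fewer than $\sigma^k$ distinct length-$k$ substrings, which by counting requires $\sigma^k \le n$. To find an explicit absent string we use the de Bruijn approach. We construct a de Bruijn sequence $D_k$ of order $k$ over $[0,\sigma)$, of length $\sigma^k+k-1=\cO(n)$, in packed form in $\cO(\sigma^k k\log\sigma/\log n)$ time. The task then becomes: find a length-$k$ substring of $D_k$ that does not occur in $S$. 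Equivalently, in $T=D_k\$S$ we seek a length-$k$ substring occurring in the $D_k$ part that does not occur in the $S$ part; this is an \SUSDiff-type query restricted to the fixed length $k$. This is the step where we adapt the SUS algorithm. Its decomposition by length and period (synchronizing sets for aperiodic candidates, runs for periodic ones, and the reduction to a geometric problem solved with wavelet trees) works verbatim with the uniqueness condition replaced by ``occurs in $D_k$ but not in $S$''. Concretely, when comparing LCP-values across synchronizing positions, we ask for a point from the $D_k$ family whose maximal LCP with any point from the $S$ family is below $k$. That is the \SkylineDiff problem, and we solve it in the same $\cO(n\log\sigma/\sqrt{\log n})$ time as \SkylineProblem. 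Because $|T|=\cO(n)$, the overall cost is $\cO(n\log\sigma/\sqrt{\log n})$ per candidate $k$, hence also in total.

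The main obstacle is the adaptation in the long regime. The SUS machinery certifies uniqueness through counts, whereas exclusivity needs the asymmetric condition, and periodic substrings of $D_k$ may interact with long runs in $S$. I would handle this first by checking that de Bruijn sequences have no long runs, since every length-$k$ window is distinct. Then only the periodic candidates of $S$ matter, and for those absence from $S$ is the only test needed.
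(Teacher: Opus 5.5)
Your framework of exclusive-substring queries on a de Bruijn sequence versus $S$ matches the paper's, but the way you use it does not achieve the claimed bound. The interval $(\delta\log_\sigma n,\lceil\log_\sigma(n+1)\rceil]$ contains $\Theta(\log_\sigma n)$ integers, not $\cO(1/\delta)$; that count is constant only when $\sigma\ge n^{\Omega(1)}$. Each fixed-length query processes all of $S$ and costs $\cO(n\log\sigma/\sqrt{\log n})$. Testing the candidates one by one therefore only gives $\cO(\log_\sigma n\cdot n\log\sigma/\sqrt{\log n})=\cO(n\sqrt{\log n})$. Even binary search over $k$ (absence is monotone in the length) leaves an extra $\log\log n$ factor for small $\sigma$.

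The missing idea is that you should not fix the length at all. Every string of length at most $k-1$ is a prefix of some string of length $k-1$, so a de Bruijn sequence $D_{k-1}$ of order $k-1$ contains \emph{all} strings of length at most $k-1$. Hence a single unrestricted call to \cref{thm:difference} with $S_1=D_{k-1}$, $S_2=S$ and $k=\lfloor\log_\sigma n\rfloor+1$ returns a SAS whenever one of length at most $k-1$ exists. This is what the paper does, and it makes your short regime unnecessary. That matters, because the short regime's accounting is also wrong: repeating an $\cO(n/\log_\sigma n)$-time pass for each of the $\Theta(\log_\sigma n)$ values of $k$ costs $\Theta(n)$. You would have to deduplicate the blocks once and reuse them for every $k$.

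Two further steps fail as stated. First, $|D_k|=\sigma^k+k-1$ is not $\cO(n)$ at the last candidate length. There only $\sigma^{k-1}\le n$ is guaranteed, so $\sigma^k$ can be as large as $\sigma n$. For example, take $\sigma\approx\sqrt n$ and $S$ a de Bruijn sequence of order $2$: the SAS has length $3$ and $\sigma^3\approx n^{3/2}$. The paper handles the case $\sigma^k=\omega(n)$ by generating only a length-$(n+1)$ prefix of a de Bruijn sequence of order $k$. This prefix has $n-k+2$ distinct length-$k$ substrings while $S$ has at most $n-k+1$, so by pigeonhole one is absent, and \cref{thm:difference} applied to this prefix and $S$ finds it.

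Second, your construction time $\cO(\sigma^k k\log\sigma/\log n)$ equals $\Theta(\sigma^k)=\Theta(n)$ when $k\approx\log_\sigma n$, which already exceeds the target. You need the $\cO(\sigma^k/k)$-time packed construction of \cref{lem:deBruijn}, which implements Duval's generation of Lyndon words with $\cO(1)$ amortized word operations per word, together with its prefix version. Finally, your claim that de Bruijn sequences have no long runs is false, since $0^k$ occurs in $D_k$. It becomes irrelevant once you treat the exclusive-substring solver as a black box.
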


\subparagraph{Our techniques and paper organization.} 
We obtain \cref{the:sus} by decomposing the problem based on the \emph{length} and the \emph{period} of the sought substring of $S$. 
For short substrings, we employ tabulation. 
For medium-length aperiodic substrings, we modify an existing technique for longest common substrings~\cite{TALG-Charalampopoulos25} that is based on an efficient construction of wavelet trees~\cite{DBLP:journals/tcs/MunroNV16}.
For long aperiodic substrings, we combine sampling via string synchronizing sets~\cite{DBLP:conf/stacs/EllertK26,DBLP:conf/stoc/KempaK19} with a heavy-light decomposition~\cite{DBLP:journals/jcss/SleatorT83} of two compacted tries $T$ and $T^R$. Then, using pairs of heavy paths, we reduce the search to a 2D geometric problem that, as we show, underlies the computation of SUSs.
For periodic substrings, we exploit the runs in $S$~\cite{DBLP:journals/tcs/IliopoulosMS97}: we group them by their Lyndon roots and sparse-Lyndon roots~\cite{DBLP:conf/mfcs/Charalampopoulos25} and use this grouping to reduce the search to another instance of said geometric problem.
See \Cref{sec:SUS}. We also show that any length-$n$ SUS instance over $[0,\sigma)$ can be reduced in $\cO(\frac{n}{\log_{\sigma} n})$ time to an instance of length $\cO(n\log\sigma)$ over a binary alphabet.
This implies that the binary alphabet constitutes a hardest case for computing a SUS. In particular, any algorithm designed for the binary case can be applied to general alphabets via this reduction,
without any increase in the asymptotic running time. See \Cref{sec:bin}.

We obtain \cref{the:sas} using the same framework, 
augmented with a new efficient construction of de Bruijn sequences~\cite{deBruijn1946} in packed representation. See \Cref{sec:SAS}.

We start in \Cref{sec:prel} with the necessary preliminaries. 
Apart from the preliminaries, we also sketch
the folklore linear-time solutions for computing all SUSs and SASs using suffix trees.

\subparagraph{Other related work.}
Kempa and Kociumaka~\cite{DBLP:conf/stoc/KempaK25} studied the hardness hierarchy for problems whose fastest known word-RAM algorithms run in  $\cO(n\sqrt{\log n})$ time on inputs of $\Theta(n)$ machine words.
This class includes  string processing problems such as constructing the BWT~\cite{DBLP:conf/stoc/KempaK19}, computing a longest common substring~\cite{TALG-Charalampopoulos25}, and computing LZ77 factorizations~\cite{DBLP:conf/focs/KempaK24}.
For these problems, they showed that the known $\cO(n/\sqrt{\log n})$-time algorithms for \emph{binary} strings---equivalently, $\cO(n\sqrt{\log n})$-time algorithms when the input size is measured in machine words---are conditionally optimal, with conditional lower bounds established via non-trivial reductions to a variant of dictionary matching.
Computing a SUS (\cref{the:sus}) or a SAS (\cref{the:sas}) for binary strings shares the same time complexity.
It remains an interesting open problem whether one can improve upon our algorithms or prove that they are conditionally optimal.
Let us note that proving a conditional lower bound for our problem using the framework of Kempa and Kociumaka~\cite{DBLP:conf/stoc/KempaK25} seems challenging. The reductions in said work highlight as the underlying hardness in the studied problems the task of looking for an occurrence of some (sub)string $P$ in a string $S$ (e.g., as in the longest common substring problem). The problems we study here are of a different flavor: in SUS we look for substrings that do not occur elsewhere in $S$ while in SAS we look for strings that do not occur at all in $S$.

\section{Preliminaries}\label{sec:prel}

\subparagraph{Strings.} An \emph{alphabet} $\Sigma$ is a finite set of elements called \emph{letters}. 
A \emph{string} $S=S[0] S[1] \cdots S[n-1]$ of \emph{length} $|S|=n$ is a sequence of $n$ letters from $\Sigma$.
We refer to each $i\in [0,n)$ as a \emph{position} of $S$.
We consider throughout an \emph{integer} alphabet $\Sigma=[0,\sigma)$ with $\sigma=n^{\cO(1)}$. 
A string $P$ is a \emph{substring} of a string $S$ if we have $P=S[i] \cdots S[i+|P|-1]$ for some position $i$ of $S$.
In this case, we say that $P$ \emph{occurs} at position $i$ of $S$.
Such an occurrence is called a \emph{fragment} of~$S$; we denote it by either $S[i\dd i+|P|)$ or $S[i\dd i+|P|-1]$.
The set of the starting positions of the \emph{occurrences} of a string $P$ in $S$ is denoted by $\Occ(P,S)$.
A \emph{prefix} of $S$ is a fragment of the form $S[0\dd j)$, and a  \emph{suffix} of $S$ is a fragment of the form $S[i \dd n)$.
A substring $P$ of~$S$ is called \emph{proper} if $P \neq S$.
The \emph{reverse} of string $S$, which we denote by $S^R$, is defined as $S^R=S[n-1]S[n-2]\cdots S[0]$.
 The \emph{concatenation} of two strings $S$ and  $S'$ is denoted by $S\cdot S'$ and the concatenation of $k$ copies of a string $S$ is denoted by $S^k$.

\begin{definition}[Period]
An integer $p>0$ is a \emph{period} of a string $P$ if $P[i] = P[i + p]$, for all $i \in [0,|P|-p)$. The smallest period of $P$ is called \emph{the period} of $P$ and is denoted by $\per(P)$. A string $P$ is called \emph{periodic} if $\per(P)\leq |P|/2$ and \emph{aperiodic} otherwise.\label{def:periodic}
\end{definition}

\begin{example}
For string $P=\texttt{abaaabaaabaaaba}$, we have $\per(P)=4$. Note that $8$ is also a period of $P$. Since $\textsf{per}(P)=4\leq |P|/2=15/2$, $P$ is periodic.   
\end{example}

\begin{definition}[Run]\label{def:run}
A fragment $F=S[i\dd j]$ of a string $S$ is called a \emph{run} of $S$ if 
$\per(F) \leq |F|/2$, and extending $F$ either to the left or to the right (if possible) would result in an increase of its period, that is,
$S[i-1] \ne S[i-1+\per(F)]$ (or $i = 0$) and $S[j+1] \ne S[j+1-\per(F)]$ (or $j=|S|-1$). 
\end{definition}
\begin{definition}[Lyndon Root]
The \emph{Lyndon root} of a run $F$, denoted by $\Lroot(F)$, is defined as the lexicographically smallest rotation of the prefix $F[0 \dd \textsf{per}(F))$ of $F$.
\end{definition}

\begin{definition}[Lyndon Representation]\label{def:canonical} The Lyndon representation of a run $F$ is a quadruple $\Lrep(F) =
(\lambda, e, \alpha, \beta)$ such that:
\begin{itemize}
\item $\lambda=\Lroot(F)$, and
\item $F=P\cdot \lambda^e \cdot T$, where $P$ is a (possibly empty) suffix of $\lambda$ with $|P|=\alpha< |\lambda|$, and $T$ is a (possibly empty) prefix of $\lambda$ with $|T|=\beta< |\lambda|$.
\end{itemize}
\end{definition}

\begin{example}
Let $S=\texttt{\underline{abaaabaaabaaaba}b}$. The \emph{underlined} fragment $F = S[0\dd 14]=\texttt{abaaabaaabaaaba}$ is a run with $\per(F)=4\leq |F|/2=15/2$. Observe that extending $F$ to the right yields the fragment $S[0\dd 15]=S$, whose period is $14$; hence extending $F$ increases its period.
Moreover, we have $\Lroot(F)=\texttt{aaab}$
and $\Lrep(F) = (\texttt{aaab},3, 2, 1)$.
\end{example}

\begin{definition}[$\tau$-Run]
    A run $R$ of a string $S$ is called a \emph{$\tau$-run}, for an integer $\tau >0$,
    if $|R| \ge 3\tau - 1$ and $\per(R) \le \frac13 \tau$.
\end{definition}

\begin{lemma}[Lemma 2.8~\cite{TALG-Charalampopoulos25}]
\label{lem:per:find-roots}
Let $S$ be a string of length $n$ over an integer alphabet $[0,\sigma)$, with $\sigma=n^{\cO(1)}$, and let  $\tau>0$ be an integer. Then $S$ contains 
$\cO(n/\tau)$ $\tau$-runs.
Moreover, if $\tau \leq \frac{1}{4} \log_{\sigma} n$, all $\tau$-runs in
$S$ can be computed and grouped by their Lyndon roots in $\cO(n/\tau)$ time. 
Within the same time bound, for each $\tau$-run, we can compute the two leftmost occurrences of its Lyndon root.
\end{lemma}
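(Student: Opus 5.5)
We first bound the number of $\tau$-runs and then give the algorithm; the Lyndon-root bookkeeping is the delicate part.

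\textbf{Counting.} Fix a $\tau$-run $R$ and let $p=\per(R)\le \tau/3$, so $|R|\ge 3\tau-1$. Since $p\le|R|/3$, Fine and Wilf's periodicity lemma implies that two distinct runs with periods $p\le p'$ overlap in fewer than $p+p'$ positions. In particular, two $\tau$-runs overlap in fewer than $\tfrac23\tau$ positions. Partition $[0,n)$ into blocks of length $\lfloor\tau/2\rfloor$ (or use a charging argument). Each $\tau$-run has length at least $3\tau-1$, so it fully contains a middle window of length about $\tau$. Because distinct $\tau$-runs overlap by less than $\tfrac23\tau$, a fixed position can be the ``centre'' (for instance, the position $i+\tau$ of a run starting at $i$) of only $\cO(1)$ $\tau$-runs; a cleaner variant of the same fact is that starting positions of distinct $\tau$-runs differ by at least $\tau/3$ or so. We therefore charge each run to the $\Theta(\tau)$ positions just after its start, where distinct runs' charged intervals are nearly disjoint. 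This gives $\cO(n/\tau)$ $\tau$-runs.

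\textbf{Algorithm for $\tau\le\frac14\log_\sigma n$.} Sample positions $j\in\tau\mathbb{Z}$. Every $\tau$-run contains an aligned window $S[j\dd j+2\tau)$ lying entirely inside it, since $|R|\ge 3\tau-1$. Such a window has length $2\tau$ and period $p\le\tau/3$, hence is periodic. A string of length $2\tau$ over $[0,\sigma)$ fits in $2\tau\log\sigma\le\frac12\log n$ bits, so we precompute, in $\cO(\sqrt n\,\mathrm{poly}\log n)$ time, a lookup table giving for every such string its smallest period (if at most $\tau/3$), the minimal rotation offset of the period prefix (its Lyndon root), and the root's packed code. Also tabulate, for pairs of words, the length of the common periodic extension, i.e.\ the first mismatch position $S[k]\ne S[k-p]$. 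This uses a word-level XOR of $S[k\dd k+w)$ and $S[k-p\dd k-p+w)$ followed by most-significant-bit extraction. With these tables, each sampled window with small period is extended left and right in $\cO(1+|R|/\log_\sigma n)$ time. Consecutive samples inside the same run are detected by checking periodicity continuation, so each run is extended once. The total time is $\cO(n/\tau+\sum|R|/\log_\sigma n)$, and the sum of $|R|$ over $\tau$-runs is $\cO(n)$ by the overlap bound, since it is at most $n$ plus the overlaps. This totals $\cO(n/\tau)$ because $\tau\le\log_\sigma n$. Keep only runs with $|R|\ge 3\tau-1$.

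\textbf{Grouping and leftmost occurrences.} Each Lyndon root has length at most $\tau/3$, so it is encoded in $o(\log n)$ bits as an integer. We radix sort the $\cO(n/\tau)$ codes in linear time in two passes of base $\sqrt n$. For the two leftmost occurrences of $\lambda$ in $S$, note that they need not lie in runs. Here $\lambda$ has length $\ell\le\tau/3$, and there are at most $\cO(n/\tau)$ distinct roots. Scan $S$ in chunks: for each sampled block $S[j\dd j+2\tau)$, tabulation gives the set of distinct length-$\ell$ strings starting in $[j,j+\tau)$, together with their first two occurrences, for every $\ell\le\tau/3$. That is too much output per block, so instead we restrict to queried roots. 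We build a perfect hash or direct-address array over codes of queried roots, and for each block and each $\ell$ we only need to know which queried roots occur. I expect this to be the main obstacle. The plan is to replace it by using that the leftmost occurrence of a string of length $\ell\le\tau/3$ lies in the first block where it appears. We process blocks left to right, and per block use the table to enumerate the distinct substrings of length $\le\tau/3$ in $\cO(\tau^2)$ output, which is too costly. Instead we mark, per root length $\ell$, a bitmap over all $\sigma^\ell\le n^{1/12}$ codes, and fill it by a table-driven pass per block in $\cO(1)$ words for the new codes only, amortised over first appearances. Recording the first two occurrences per code, the amortisation is by the number of distinct codes, which is $\cO(\sum_\ell\sigma^\ell)=o(n/\tau)$.
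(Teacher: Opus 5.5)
The paper gives no proof of this lemma; it imports it from \cite{TALG-Charalampopoulos25}, so your proposal has to stand on its own. The counting, the computation of the $\tau$-runs and their grouping are sound, but your count is loose and should be made precise. Two distinct $\tau$-runs overlap in fewer than $\frac23\tau<3\tau-1$ positions, so no $\tau$-run contains another. Hence sorting them by starting position also sorts them by ending position, and consecutive starting positions differ by more than $3\tau-1-\frac23\tau\ge\tau$. This gives at most $n/\tau+1$ $\tau$-runs. It also shows that every position lies in at most two of them, which is exactly what your bound $\sum|R|=\cO(n)$ needs. With that in place, the remaining steps give the runs and their grouping in $\cO(n/\tau)$ time: sampling at multiples of $\tau$, the lookup table over length-$2\tau$ strings, word-level extension, and bucket-sorting the root codes (with the root length included in the key).

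The gap is in the last claim. It is stated per $\tau$-run, not per Lyndon root, and refers to the two leftmost occurrences of $\lambda=\Lroot(R)$ \emph{inside} $R$. You read it as the two leftmost occurrences in all of $S$, and the argument you give for that reading does not go through. To find the ``new'' codes of a block, you must either enumerate its $\Theta(\tau^2)$ short substrings for every block, which is $\Theta(n\tau)$ time overall, or compare against a seen-set of up to $\sigma^{\tau/3}\le n^{1/12}$ bits, which is not $\cO(1)$ words. Amortizing over the number of distinct codes does not pay for the many blocks that contribute nothing new. Your scheme would also miss second occurrences that fall in blocks containing no new code.

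The intended statement follows from data you already have. Suppose the table reports that $\lambda$ starts at offset $\delta$ in the sampled window $S[j\dd j+2\tau)\subseteq R=S[i\dd i')$, and let $p=\per(R)$. Since $\lambda$ is primitive, its occurrences in $R$ are exactly the positions congruent to $j+\delta$ modulo $p$ that fit in $R$. So the two leftmost are $q=i+((j+\delta-i)\bmod p)$ and $q+p$, and both lie inside $R$ because $3p\le\tau<|R|$. This costs $\cO(1)$ per run.

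If the global version were ever needed, the fix is to enumerate short substrings once per \emph{distinct} block content, of which there are at most $\sigma^{4\tau/3}\le n^{1/3}$. You then combine these with the first two block indices at which each content occurs, for $\cO(n/\tau+n^{1/3}\tau^2)$ total time.
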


\begin{definition}[Shortest Unique Substring (SUS)]
A string $P$ is a \emph{unique substring} of a string $S$ if and only if $|\Occ(P,S)|=1$.
A unique substring $P$ of $S$ is called a \emph{shortest unique substring} of $S$
if there is no string $P'$ with $|P'|<|P|$ that is a unique substring of $S$.   
\end{definition}

\begin{definition}[Shortest Absent Substring (SAS)]
A string $P$ is \emph{absent} from a string $S$ if and only if $|\Occ(P,S)|=0$.
An absent string $P$ of $S$ is called a \emph{shortest absent substring} of~$S$,
if there is no string $P'$ with $|P'|<|P|$ that is absent from $S$.
\end{definition}

\subparagraph{String synchronizing sets.} We next define a powerful sampling mechanism.
\begin{definition}[String Synchronizing Set~\cite{DBLP:conf/stoc/KempaK19}]\label{def:SSS}
For a string $S$ of length $n$ and a positive integer $\tau  \in  [1,\lfloor  \frac{n}{2} \rfloor]$, 
a set $\Sync \subseteq  [0, n-2\tau]$ is a \emph{$\tau$-synchronizing set} of $S$ if it satisfies:
\begin{enumerate}
    \item Consistency: For $i,j \in [0, n-2\tau]$, if $S[i \dd i + 2\tau) = S[j \dd j + 2\tau)$, then  $i \in  \Sync$  if and only if $j \in  \Sync$.
    \item Density: For $i \in  [0 , n-3\tau +1]$, $\Sync \cap [i,i + \tau ) = \emptyset$
    if and only if $\per(S[i\dd i + 3\tau -1)) \leq  \frac{1}{3}\tau$.
\end{enumerate}
\end{definition}

\begin{theorem}[\cite{DBLP:conf/stacs/EllertK26}]\label{the:sss}
A string $S \in [0, \sigma)^n$ can be preprocessed in $\cO(\frac{n}{\log_{\sigma} n})$ time so that, given 
$\tau \in [1, \lfloor n/2 \rfloor]$, a $\tau$-synchronizing set $\Sync$ of $S$ of size 
$|\Sync| < \tfrac{70n}{\tau}$ can be constructed in $\cO(n/\tau)$ time.
\end{theorem}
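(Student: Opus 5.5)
The plan is a two-regime construction. Small values of $\tau$ are handled by tabulation directly on the packed representation. Large values of $\tau$ are handled by a hierarchy of synchronizing sets precomputed for geometrically growing parameters and then refined to the requested $\tau$. Throughout I would follow the Kempa--Kociumaka template. Fix an identifier function $\mathrm{id}$ on length-$\tau$ substrings that assigns the largest values to substrings with period at most $\frac13\tau$. Then put $i\in\Sync$ exactly when $\min\{\mathrm{id}(S[j\dd j+\tau)) : j\in[i,i+\tau]\}$ is attained at $j=i$ or at $j=i+\tau$.

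\textbf{Consistency and density.} Membership of $i$ depends only on $S[i\dd i+2\tau)$, so consistency is automatic. For density, I would show the following. If $S[i\dd i+3\tau-1)$ is not $\frac13\tau$-periodic, then the minimiser over the window starting at $i$ falls on a non-periodic length-$\tau$ block, which forces a sync position in $[i,i+\tau)$. Conversely, a periodic window contains only periodic blocks, and these are excluded by the choice of $\mathrm{id}$.

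\textbf{Size bound.} This is where $\mathrm{id}$ matters. It must be \emph{locally consistent}: minima should be attained at only $\cO(1)$ positions per $\tau$ window on average. A random order would give this in expectation, but we need a deterministic bound with the constant $70$. So I would build $\mathrm{id}$ deterministically, by a derandomised or recursive choice of local minima.

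\textbf{Small $\tau$.} Take $\tau\le c\log_\sigma n$ for a small constant $c$. Every fragment of length $3\tau$ then fits in $\cO(1)$ machine words. During preprocessing I would build, for every string $X\in[0,\sigma)^{3\tau}$, a table listing the offsets of sync positions inside $X$. There are $\sigma^{3\tau}\le n^{3c}=o(n/\log_\sigma n)$ such strings, so the tables fit in the preprocessing budget, summed over all such $\tau$. A query then scans $S$ in steps of $\tau$ and does one lookup per step, giving $\cO(n/\tau)$ time.

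\textbf{Large $\tau$.} During preprocessing I would build sets $\Sync_k$ for $\tau_k=2^k\cdot c\log_\sigma n$, for every $k$. Each $\Sync_k$ is obtained from $\Sync_{k-1}$ in $\cO(n/\tau_{k-1})$ time. The step works on the sparse sequence of positions of $\Sync_{k-1}$: each such position $s$ gets the rank of $S[s\dd s+2\tau_{k-1})$ among all such fragments as its identifier, and we keep the positions that are window minima in the above sense. The ranks are computed by sorting the $\cO(n/\tau_{k-1})$ fragments; their comparison is itself recursive in the earlier ranks. The total preprocessing cost is $\sum_k n/\tau_k=\cO(n/\log_\sigma n)$. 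A query for $\tau$ picks the $k$ with $\tau_k\in[\tau/4,\tau/2]$ and runs one more refinement step with target parameter exactly $\tau$, in $\cO(|\Sync_k|)=\cO(n/\tau)$ time.

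\textbf{Main obstacle.} The hardest step is this refinement $\tau'\to\tau$. We must show that selecting window minima among positions of a $\tau'$-synchronizing set with $\tau'=\Theta(\tau)$ still yields a valid $\tau$-synchronizing set. Two points need care.
\begin{itemize}
\item Density near $\frac13\tau$-periodic regions: there the $\tau'$-set is empty, and the boundaries of runs must be handled explicitly. I would use the $\tau$-runs of \cref{lem:per:find-roots} for this, since they can be located from the gaps in the coarser set.
\item Size: the number of surviving positions must stay below $70n/\tau$, which requires that the rank-based identifiers are locally consistent.
\end{itemize}
For the size bound I would first try a charging argument. Each surviving position is charged to a distinct length-$\tau$ window containing a strict local minimum of the ranks. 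With $\tau'\ge\tau/4$, each window contains only $\cO(1)$ positions of the coarser set, and tracking these constants should give the explicit bound.
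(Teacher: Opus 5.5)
The paper gives no proof of this statement; it is imported from Ellert and Kociumaka. Your proposal therefore has to stand on its own, and as written it does not.

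The step that carries the theorem, a deterministic worst-case bound $|\Sync|<70n/\tau$, is never supplied. The phrases ``a derandomised or recursive choice of local minima'' and ``tracking these constants should give the explicit bound'' name the difficulty rather than resolve it. Your template also violates density as stated. Giving $\frac13\tau$-periodic blocks the largest identifiers does not exclude them when all blocks starting in $[i,i+\tau]$ are periodic. Inside a stretch $\texttt{a}^{3\tau}$ all identifiers tie, so the minimum is attained at $i$ and $i$ is selected, although density forbids it. The minimum must range only over positions in $[i,i+\tau]$ whose blocks are not $\frac13\tau$-periodic, with $i\notin\Sync$ if there are none. The aperiodic direction then needs the minimiser over all of $[i,i+2\tau)$, with a case split on whether it lies before $i+\tau$.

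The large-$\tau$ refinement, which is the real content of the cited result, fails as described. A window-minimum rule whose candidates are the positions of a $\tau'$-synchronizing set $\Sync'$, and which outputs a candidate $j$ or $j-\tau$, cannot produce a $\tau$-synchronizing set. Take a long $\tau$-run $S[b\dd e]$ with $e<n-1$ and period $p$.
\begin{itemize}
\item Density at $i=e+3-3\tau$ demands an element of $\Sync$ in $[e+3-3\tau,e+2-2\tau]$. Consistency rules out every position there except $e+2-2\tau$: any other one has its whole $2\tau$-context inside the run. It would therefore recur every $p$ positions to the left until it lands in a window that density forbids.
\item If $p\le\frac13\tau'$ and $\tau'<\frac12\tau$, then neither $e+2-2\tau$ nor $e+2-\tau$ lies in $\Sync'$, so your rule can never output $e+2-2\tau$. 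Such positions must be inserted explicitly.
\end{itemize}

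Your plan for locating these runs does not work either. $\Sync'$ is empty only on $\frac13\tau'$-periodic stretches, so $\tau$-runs with period in $(\frac13\tau',\frac13\tau]$ leave no gaps in $\Sync'$. Their candidates, whose $\tau$-blocks are periodic, must also be discarded. Moreover, \cref{lem:per:find-roots} only applies for $\tau\le\frac14\log_\sigma n$, not in this regime.

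The size argument rests on a false premise too. A synchronizing set has only a global size bound and may be locally dense, so a window need not contain $\cO(1)$ coarser positions. For the single query step this is harmless, since each candidate yields at most two outputs and $\tau_k=\Theta(\tau)$. The hierarchy, however, needs a constant-factor reduction at every level, and lexicographic ranks give none: wherever the ranks of consecutive candidates decrease, every candidate $j$ is the minimum of $[j-\tau,j]$ and contributes $j-\tau$. Then $|\Sync_k|\tau_k/n$ can double per level. This breaks both the $\sum_k n/\tau_k$ preprocessing bound and any fixed constant such as $70$. The linear-time ranking ``recursive in the earlier ranks'' is also unsupported, because a fragment is not determined by the earlier-level positions it contains; periodic stretches contain none.

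What is missing is twofold. First, a deterministic identifier assignment on the sparse candidate sequence with a proven worst-case reduction per level. Second, a way, within the time budget, to find all $\tau$-runs, filter their interiors, and insert their boundary positions.
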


\subparagraph{Problem definitions.}
We now formally  define the problems in scope. 

\defproblem{\SUS}
{A packed string $S$ of length $n$ over an integer alphabet $\Sigma=[0,\sigma)$.}
{$i,j \in \mathbb{Z}_{\geq 0}$ such that $P=S[i\dd j]$ is a shortest unique substring of $S$.}

\defproblem{\SAS}
{A packed string $S$ of length $n$ over an integer alphabet $\Sigma=[0,\sigma)$.}
{$i,j \in \mathbb{Z}_{\geq 0}$ and $c\in\Sigma$ such that $P=S[i\dd j]\cdot c$ is a shortest absent substring of~$S$.}

\subparagraph{Compacted tries and suffix trees.}
For a set $\mathcal{S}$ of strings over $\Sigma$, the \emph{trie} $\TRIE(\mathcal{S})$ is a rooted tree whose nodes are in one-to-one correspondence with the set of the prefixes of the strings in $\mathcal{S}$.
Each edge of $\TRIE(\mathcal{S})$ is labeled with a letter from $\Sigma$.
The string $\str(v)$ represented by a node $v$ is the concatenation of the labels on the edges along the root-to-$v$ path; the node $v$ is called the \emph{locus} of $\str(v)$. 
The order on $\Sigma$ induces a corresponding order on the outgoing edges  of every node in $\TRIE(\mathcal{S})$.
A node $v$ is \emph{branching} if it has at least two children and \emph{terminal} if $\str(v) \in \mathcal{S}$. 

A \emph{compacted trie} for a set $\mathcal{S}$ of strings is obtained from $\TRIE(\mathcal{S})$ by dissolving all non-root nodes except for the branching and the terminal nodes. The removed nodes are called \emph{implicit} and the preserved ones are called \emph{explicit}.
The edges of the compacted trie are labeled by fragments of elements of $\mathcal{S}$ rather than single letters.
The \emph{string depth} $\sd(v)$ of a node $v$ is defined as $\sd(v)=|\str(v)|$,  i.e., it is the length of the string represented by $v$, or, equivalently, the total length of the labels along the root-to-$v$ path.
The compacted trie requires $\cO(|\mathcal{S}|)$ space if we have random access to elements of~$\mathcal{S}$, as any fragment $S[i\dd j]$ for $S \in \mathcal{S}$ can be stored in $\cO(1)$ space using $i$,~$j$, and a handle to $S$.

The \emph{suffix tree} $\ST(S)$ of a string $S$ is the compacted trie for the set of suffixes of $S$. Each terminal node $v$ of $\ST(S)$ is labeled by $|S|-\sd(v)$, i.e., the starting position of the suffix it represents; see \cref{fig:st} for an example. The suffix tree $\ST(S)$ can be constructed in $\cO(n)$ time for any string of length $n$ over an integer alphabet $\Sigma=[0, \sigma)$ with $\sigma=n^{\cO(1)}$~\cite{DBLP:conf/focs/Farach97}.

\subparagraph{Wavelet trees.}
For an arbitrary alphabet $\Sigma$, a \emph{skeleton tree} for $\Sigma$
is a full binary tree $\mathcal{T}$ together with a bijection between $\Sigma$ and the leaves of $\mathcal{T}$. 
For a node $v\in \mathcal{T}$, $\Sigma_v$ denotes the subset of $\Sigma$ that corresponds to the leaves in the subtree of $v$.

Given a skeleton tree $\mathcal{T}$ and a string $S\in \Sigma^*$, the \emph{$\mathcal{T}$-shaped wavelet tree} of $S$ is the tree~$\mathcal{T}$ augmented with, for each internal node $v$, a bit-vector $B_v$ described below.
For each node $v$ of said tree, let $S_v$ denote the subsequence of $S$ that consists of letters from $\Sigma_v$.
Further, denote the left and right children of each internal node $v$ by $v_L$ and~$v_R$, respectively.
The bit-vector $B_v$ is of size $|S_v|$ and $B_v[i]=1$ if and only if $S_v[i]\in \Sigma_{v_R}$; see \cref{fig:wt} for an example.
Wavelet trees were introduced in~\cite{DBLP:conf/soda/GrossiGV03}, where an $\cO(n \log \sigma)$-time construction algorithm was presented. 
More efficient algorithms for constructing wavelet trees were later proposed in~\cite{DBLP:journals/tcs/MunroNV16,DBLP:conf/soda/BabenkoGKS15}.

\begin{figure}[t]
    \centering
    \begin{subfigure}{0.45\textwidth}
        \centering
        \centering
        \includegraphics[width=0.85\linewidth]{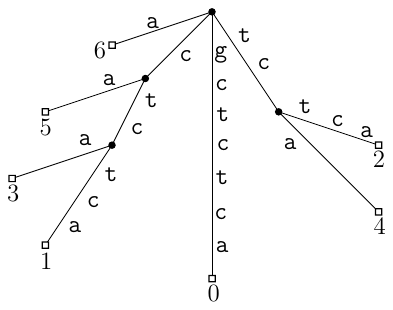}
        \caption{The suffix tree of string $S=\texttt{gctctca}$.}\label{fig:st}
    \end{subfigure}
    \begin{subfigure}{0.45\textwidth}
        \centering
        \includegraphics[width=1\linewidth]{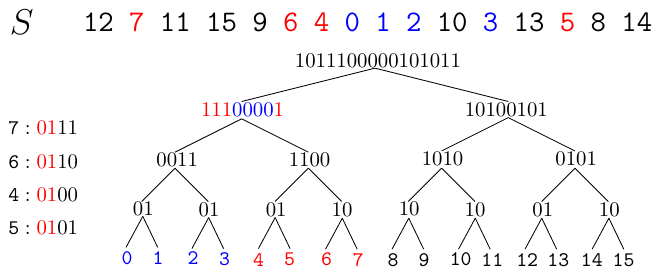}
        \caption{The $\mathcal{T}$-shaped wavelet tree of string $S=\texttt{12~7~11~15~9~6~4~0~1~2~10~3~13~5~8~14}$; the bijection between the leaves of $\mathcal{T}$ and $\Sigma$ is encoded in the leaves.
        Let $v$ be the left child of the root.
        We have $B_v = 11100001$ since $\texttt{7}, \texttt{6}, \texttt{4}, \texttt{5}$ (shown in red) correspond to descendants of $v_R$ (right child), while $\texttt{0}, \texttt{1}, \texttt{2}, \texttt{3}$ (shown in blue) correspond to descendants of $v_L$ (left child).}\label{fig:wt}
    \end{subfigure}
    \caption{An illustration of a suffix tree and a wavelet tree.}
    \label{fig:st_wt}
\end{figure}
 
\subparagraph{Sketches of linear-time solutions.}

Given the suffix tree $\ST(S)$ of a string $S$ of length $n$, both \SUS and \SAS can be solved in $\cO(n)$ time.

\SUS:
Consider a terminal node $v$ of $\ST(S)$.
If $v$ is not a leaf, then no string that is unique in $S$ occurs at position $n-\sd(v)$ of $S$.
We now consider the case when $v$ is a leaf and denote its parent by~$u$.
The shortest string that is unique in $S$ and occurs at position $n-\sd(v)$ is $S[n-\sd(v) \dd n-\sd(v) + \sd(u)]$ and is of length $\sd(u)+1$. By iterating over all leaves of $\ST(S)$, we may thus compute all SUSs of $S$ in time $\cO(n)$.

\SAS:
Consider a node $u$ of $\ST(S)$. If $u$ has $\sigma$ outgoing edges, then there is no string~$P$ that is absent from $S$ and has $\str(u)$ as its longest proper prefix. However, if $u$ has fewer than $\sigma$ outgoing edges, then, for each letter $c$ such that $u$ has no outgoing edge whose string label starts with $c$, the string $\str(u)\cdot c$ is absent from $S$.
By iterating over all nodes of $\ST(S)$, we can compute the nodes $u$ with fewer than $\sigma$ outgoing edges and minimum string depth, and then output all SASs in time $\cO(n+|\textsf{output}|)$.

In what follows, we assume that all considered strings are given in packed representation.
When it is clear from the context, we may omit the term \emph{packed}.

\section{Computing a Shortest Unique Substring}\label{sec:SUS}

We decompose the problem into four cases based on the length $\ell$ and the period $p$ of a SUS:

\begin{enumerate}
    \item \textbf{Short case}: $\ell \leq \frac15 \log_\sigma n$ (see \Cref{sec:short});
    
    \item \textbf{Medium aperiodic case}: $\ell \in (\frac15 \log_\sigma n, 2^{\sqrt{\log n}})$ and $p > \frac{1}{45}\log_\sigma n$ (see \Cref{sec:medium});
    
    \item \textbf{Long aperiodic case}: $\ell \geq \log^4 n$ and $p > \frac19 \log^4 n$ (see \cref{sec:long});
    
    \item \textbf{Periodic case}: $\ell \ge 3\tau$ and $p \le \frac13 \tau$, for $\tau = \lfloor\frac{1}{15}\log_\sigma n \rfloor$ or $\tau = \lfloor \frac13\log^4 n \rfloor$  (see \cref{sec:periodic}).
    Note that when $\tau=0$, this case is obsolete because the period of any string is positive.
\end{enumerate}

In \Cref{sec:skyline}, we introduce and solve a simple geometric problem that serves as a subroutine for computing a SUS in both the long aperiodic and the periodic cases.

\subsection{Short Case}\label{sec:short}

\begin{lemma}
\label{lem:short}
Given an instance of \SUS, in $\cO(n/\log_\sigma n)$ time,
we can either compute a SUS of $S$ of length at most $\ell := \lfloor\frac15 \log_\sigma n\rfloor$ if one exists, or conclude that no SUS of length at most $\ell$ exists.
\end{lemma}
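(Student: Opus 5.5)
The approach is tabulation combined with counting. For every length $k\le \ell$, we count the occurrences of each length-$k$ string in $S$, saturating the counts at $2$, using a table indexed by the integer encoding of the string. There are at most $\sum_{k\le \ell}\sigma^k\le 2\sigma^\ell\le 2n^{1/5}$ strings of length at most $\ell$, so all tables together use $\cO(n^{1/5})$ space and can be initialized in $\cO(n^{1/5})$ time, which is $o(n/\log_\sigma n)$. The difficulty is that filling the tables naively needs one update per position and length, i.e., $\Theta(n\ell)$ time. We must instead process $S$ in blocks of $\Theta(\log_\sigma n)$ letters, each handled in $\cO(1)$ time via a precomputed lookup table.

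Concretely, set $b:=\ell$ and partition $S$ into $\lceil n/b\rceil$ blocks $S[jb\dd (j+1)b)$. Every fragment of length $k\le \ell$ starting in block $j$ lies inside the window $W_j=S[jb\dd jb+2b)$, padded with a sentinel near the end of $S$. Each window has $2\ell\le \frac25\log_\sigma n$ letters, so it fits in one machine word and can be extracted from the packed representation in $\cO(1)$ time. Equal windows contribute identical multisets of (string, start offset in $[0,b)$) pairs. Thus, instead of updating counts per position, we first count how many times each distinct window value occurs. This uses an array of size $\sigma^{2\ell}\le n^{2/5}$ and takes $\cO(n/b)=\cO(n/\log_\sigma n)$ time. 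We also record, for each window value, one block index $j$ where it appears.

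In a second phase, we iterate over the distinct window values that actually occur, of which there are at most $\min(n/b, n^{2/5})$. For each such value $w$ with multiplicity $m_w$, we enumerate all $\cO(\ell^2)$ pairs (offset $o\in[0,b)$, length $k\le \ell$) whose fragment $w[o\dd o+k)$ does not run into the sentinel. For each pair we add $m_w$ to the saturated count of that string and keep one witness position. This costs $\cO(n^{2/5}\ell^2)=o(n/\log_\sigma n)$ time, which resolves the main obstacle: the enumeration is per distinct window rather than per position. A simplification is that fragments near the end of $S$ can be handled directly, since the last block contributes only $\cO(\ell^2)$ fragments.

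Finally, we scan the tables by increasing $k$ and find the smallest $k$ for which some length-$k$ string has count exactly $1$. Its witness gives a block $j$ and offset $o$, so we output $i=jb+o$ and $i+k-1$. This string is unique, and no shorter unique string exists, so it is a SUS. If no such $k\le \ell$ exists, we report that no SUS of length at most $\ell$ exists. The scan touches $\cO(n^{1/5})$ entries. Correctness rests on the fact that every occurrence of a length-$k$ string is counted exactly once, namely via the block containing its start. The total running time is $\cO(n/\log_\sigma n)$; if $\ell=0$ the statement is vacuous.
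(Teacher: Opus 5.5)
Your proposal is correct. It shares the paper's first step: covering $S$ by the length-$2\ell$ windows starting at multiples of $\ell$, and using that only $\cO(n^{2/5})$ of them are distinct. It then decides uniqueness differently.

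The paper keeps at most two copies of each distinct window and concatenates them with pairwise distinct separators into a string $S'$ of length $\cO(n^{2/5}\ell)$. It builds $\ST(S')$, post-processes it with weighted ancestor queries, and labels leaves by their original positions in $S$. It then searches for nodes all of whose leaves carry the same original position. Its correctness argument must account for a unique substring lying in two overlapping windows, and hence being represented twice in $S'$.

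You instead record exact window multiplicities. You then brute-force all $\cO(\ell^2)$ (offset, length) substrings of each distinct window into saturated direct-address counters. This is affordable precisely because $\sigma^{\ell}\le n^{1/5}$ and $n^{2/5}\ell^2=o(n/\log_\sigma n)$.

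Your route gives a simpler algorithm, with no suffix tree, separators, or weighted-ancestor postprocessing. It also gives an immediate correctness proof, since each occurrence is counted exactly once via the block containing its start. It adapts directly to the exclusive-substring variant by keeping separate counters for $S_1$ and $S_2$. The paper's version has no asymptotic advantage here; it simply reuses the folklore suffix-tree characterization of SUSs.

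One precision point concerns table sizes. If you index the tables by the raw packed encoding and $\sigma$ is not a power of two, a table for length-$k$ strings has $2^{k\lceil\log\sigma\rceil}$ entries rather than $\sigma^k$. For $k\le 2\ell$ this is at most $n^{4/5}$, so it is still within budget. The paper's bucket sort faces the same issue.
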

\begin{proof}
    We use the standard trick of covering $S$ with short overlapping fragments that contain all length-$\ell$ fragments of $S$ such that the number of distinct strings is strongly sublinear in~$n$.
    Specifically, we consider the fragment $S_i := S[i \dd \min \{i+2\ell,n\})$ for each $i \in [0,n-\ell)$ that is equivalent to $0$ modulo~$\ell$.
    There are $\cO(n/\ell)$ such fragments in $S$, and every length-$\ell$ 
    fragment of $S$ is covered by at least one of them.

    Over an alphabet of size $\sigma$, collection $\mathcal{S} := \{S_i \mid i \in [0,n-\ell) \text{ and } i \equiv 0 \pmod{\ell} \}$ contains $\cO(\sigma^{2\ell}) = \cO(n^{2/5})$ \emph{distinct} strings. Since $2\ell = \cO(\log_\sigma n)$, each string in $\mathcal{S}$ fits in $\cO(1)$ machine words.
    We lexicographically sort the elements of $\mathcal{S}$ by treating them as integers using bucket sort, and then keep at most \emph{two} copies of each distinct string in $\mathcal{S}$.
    This takes $\cO(n^{2/5} + n/\ell) =$ $\cO(n/\ell)$~time.
    Let $F_1\leq F_2 \leq \dots \leq F_q$, for $q=\cO(n^{2/5})$, denote the obtained sorted list.
    For each integer $i\in [1,q]$, let $p_i$ be the left endpoint of an occurrence of $F_i$ in $S$.

    Next, we construct the string $S' = \#_0 F_1\#_1 F_2\#_2 \dots\#_{q-1}F_q\#_{q}$ of length $|S'|=\cO(n^{2/5}\ell)$, where the letters $\#_0,\ldots,\#_{q} \notin~\Sigma$ are pairwise distinct.
    We map each position $j$ of $S'$ such that $S'[j] \in \Sigma$ to its original position in $S$.
    Formally, any position $j$ of $S'$ such that $j \in (j_1,j_2)$, 
    where $S'[j_1] = \#_{i-1}$ and $S'[j_2]=\#_{i}$, is mapped to position $p_i + (j-(j_1+1))$.
    
    Our goal is now to find a shortest substring $P$ of $S'$ that does not contain any letter $\#_i$ and such that, if $P$ occurs at distinct positions $j$ and $j'$ in $S'$, then positions $j$ and $j'$ correspond to the same original position in $S$.
    To this end, we construct a compacted trie over the suffixes of the strings $F_i\#_i$.
    This can be achieved by building the suffix tree $\ST(S')$ in $\cO(n^{2/5}\ell)$ time~\cite{DBLP:conf/focs/Farach97} and postprocessing it in linear time using (batched) weighted ancestor queries (see~\cite{DBLP:journals/jea/Charalampopoulos20}).
    Each leaf corresponding to a suffix $F_i[j \dd |F_i|)\#_i$ is labeled with $p_i+j$.
    Then, for each leaf $u$ whose parent $v$ does not have string depth $\sd(u)-1$, we introduce an artificial explicit node with that string depth along edge $(u,v)$.
    
    Finally, we perform a  post-order traversal of $\ST(S')$ to mark all internal nodes $u$ all of whose descendant leaves are marked with the same integer, and compute the minimal value of $\sd(\textsf{parent}(u)) + 1$ over all such nodes.
    If this value is at most $\ell$, we conclude that the SUSs of $S$ are of length at most $\ell$, and can return any of them: each SUS corresponds to the path-label of the path from the root to the (possibly implicit) node with string depth  $\sd(\textsf{parent}(u)) + 1$ in the root-to-$u$ path for a marked node $u$.
    Otherwise, we conclude that all substrings of $S$ of length at most $\ell$ have at least two occurrences in $S$.
   Said traversal takes time linear in the size of the compacted trie.
   Overall, the algorithm thus runs in $\cO(n^{2/5}\ell + n/\ell)=\cO(n/\log_{\sigma}n)$ time.
   
   \proofsubparagraph{Correctness.} By construction, the postprocessed trie $\ST(S')$ represents all the substrings of~$S$ of length at most~$\ell$ (plus some of the substrings of length at most $2\ell$). By keeping at most two copies of each distinct string in $\mathcal S$, we ensure that the (non-)uniqueness of substrings is preserved between $S$ and $S'$. In particular, if a substring $P$ of length (at most) $\ell$ is unique in~$S$, then all the leaf descendants of the node representing $P$ in $\ST(S')$  are labeled with the \emph{same position} of its single occurrence in $S$. Therefore, the algorithm needs to find the lowest such node (explicit or implicit) whose leaf descendants all share the same position label. The minimality condition can only be satisfied by a direct child of a branching node $\textsf{parent}(u)$  representing a non-unique substring, where $u$ itself is an explicit node representing a unique substring.
   Finally, by excluding from the output the leaves (that represent unique substrings containing letters $\#_i$), and unique substrings whose string depth exceeds $\ell$ (since their other occurrence might simply not be represented in $S'$), we ensure that the reported substring is indeed a genuine SUS of $S$.
\end{proof}

\subsection{Medium Aperiodic Case}\label{sec:medium}

We look for a SUS with length $\ell \in (\frac15 \log_\sigma n, 2^{\sqrt{\log n}})$ and period $p > \frac{1}{45}\log_\sigma n$.
Our solution for this case builds on 
a method for computing a longest common substring in the packed setting~\cite{TALG-Charalampopoulos25}.
This method relies on the \TwoFamilies problem, originally introduced in \cite{charalampopoulos_et_al:LIPIcs.CPM.2018.23}.
Let us denote the length of the longest common prefix (LCP) of two strings $U$ and $V$
by $\LCP(U, V)$.

\defproblem{\TwoFamilies}{%
Compacted tries $\mathcal{T}(\mathcal{F}_1)$ and $\mathcal{T}(\mathcal{F}_2)$ of $\mathcal{F}_1, \mathcal{F}_2 \subseteq \Sigma^*$,
and two sets $\mathcal{P},\mathcal{Q}\subseteq \mathcal{F}_1 \times \mathcal{F}_2$ 
with $|\mathcal{P}|,|\mathcal{Q}|,|\mathcal{F}_1|, |\mathcal{F}_2| \le N$.}
{$\max\{\LCP(P_1,Q_1)+\LCP(P_2,Q_2) : (P_1,P_2)\in \mathcal{P},(Q_1,Q_2)\in \mathcal{Q}\}$.}

We next define an analogous problem for SUSs: 

\defproblem{\PairSUS}{%
Compacted tries $\mathcal{T}(\mathcal{F}_1)$ and $\mathcal{T}(\mathcal{F}_2)$  of $\mathcal{F}_1, \mathcal{F}_2 \subseteq \Sigma^*$,
and a (multi)set $\mathcal{S}\subseteq \mathcal{F}_1\times \mathcal{F}_2$ 
with {$|\mathcal{S}|, |\mathcal{F}_1|, |\mathcal{F}_2| \le~N$}.}{Integers $\ell_1,\ell_2\ge0$ and a pair $(P_1,P_2) \in \mathcal{S}$, such that $(\ell_1 + \ell_2)$ is minimized and for all  $(Q_1, Q_2) \in \mathcal{S}\setminus \{(P_1,P_2)\}$, either $\LCP(P_1,Q_1) < \ell_1$ or $\LCP(P_2, Q_2) < \ell_2$.}

We further recall a few definitions from~\cite{TALG-Charalampopoulos25} necessary for stating the needed result from~\cite{TALG-Charalampopoulos25}.

\begin{definition}[Prefix Family]
    A set $\mathcal{S} = \{(U_1,V_1),\dots,(U_N,V_N)\}$ of string pairs is a \emph{prefix family} if there exists a string $Y$ such that $U_i$ is a prefix of $Y$ for every $i = 1,\ldots,N$.
\end{definition}

\begin{definition}[$(\alpha,\beta)$-Family]
    A set $\mathcal{S} = \{(U_1,V_1),\dots,(U_N,V_N)\}$ of string pairs is an \emph{$(\alpha,\beta)$-family} if, for all $(U,V) \in \mathcal{S}$, $|U| \le \alpha$ and $|V| \le \beta$.
\end{definition}

For each of these types of families, we design a different algorithm that solves \PairSUS when $\mathcal{S}$ is of said type of family.

\subsubsection{Solution for Prefix Families}
We consider an instance of \PairSUS with $\mathcal{S}$ being a prefix family $\{(U_1,V_1),\ldots,(U_N,V_N)\}$. Since the first components of the elements of $\mathcal{S}$ are prefixes of some common string $Y$, we have that $\LCP(U_i,U_j) = \min\{|U_i|,|U_j|\}$, for any $i,j \in [1,N]$. We will use this fact to first show a formula for \PairSUS on a prefix family.

\begin{lemma}\label{lem:med:prefix-formula}
    Consider an instance of \PairSUS in which $\mathcal{S}$ is a prefix family. 
    For any integer $\ell_1 > 0$, let $I_{\ell_1} = \{ i \mid |U_i| \ge \ell_1 \}$. Further let 
    \[ f(\ell_1) = \min_{i \in I_{\ell_1}} \left( \max_{j \in I_{\ell_1} \setminus \{i\}} \LCP(V_i, V_j) \right), \]
    with $f(\ell_1) = 0$ if $|I_{\ell_1}| = 1$. The pair $(\ell_1, f(\ell_1)+1)$ minimizing the sum $\ell_1 + f(\ell_1) + 1$ is an optimal solution.
\end{lemma}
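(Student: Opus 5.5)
The plan is to prove two inequalities: every pair $(\ell_1,f(\ell_1)+1)$ with $I_{\ell_1}\neq\emptyset$ is feasible, and every feasible solution has cost at least $\min_{\ell_1}\bigl(\ell_1+f(\ell_1)+1\bigr)$. Throughout, the key fact is the one noted before the lemma: for a prefix family, $\LCP(U_i,U_j)=\min\{|U_i|,|U_j|\}$. Hence, for $\ell_1>0$ and a fixed $i$ with $|U_i|\ge\ell_1$, the condition $\LCP(U_i,U_j)\ge\ell_1$ holds if and only if $j\in I_{\ell_1}$.

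\textbf{Feasibility.} Fix $\ell_1>0$ with $I_{\ell_1}\neq\emptyset$, and let $i\in I_{\ell_1}$ attain the minimum in the definition of $f(\ell_1)$. I take $(P_1,P_2)=(U_i,V_i)$ and $\ell_2=f(\ell_1)+1$. Consider any other element $(U_j,V_j)$ of $\mathcal{S}$ (as a multiset, so $j\ne i$ as an index).
If $j\notin I_{\ell_1}$, then $\LCP(U_i,U_j)=|U_j|<\ell_1$.
If $j\in I_{\ell_1}\setminus\{i\}$, then $\LCP(V_i,V_j)\le f(\ell_1)<\ell_2$ by the choice of $i$.
When $|I_{\ell_1}|=1$ the second case is empty, which is consistent with the convention $f(\ell_1)=0$ and $\ell_2=1$.

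\textbf{Optimality.} Take any feasible $(\ell_1,\ell_2,(U_i,V_i))$. I handle the degenerate value $\ell_1=0$ first.
If $\ell_1=0$, the first disjunct never holds. So $\ell_2>\max_{j\ne i}\LCP(V_i,V_j)$, and hence $\ell_2\ge f(1)+1$, since the maximum over $I_1\setminus\{i\}$ is at most the maximum over all $j\ne i$. This gives cost at least $f(1)+1<1+f(1)+1$. To avoid this contradicting the lemma, I plan to check two points:
\begin{itemize}
\item whether the intended minimum ranges over $\ell_1\ge 0$ (with $I_0$ all indices), or
\item whether it should be argued that $\ell_1=0$ is never strictly better; this looks false in general, so I would extend the formula to $\ell_1=0$, where $\LCP\ge 0$ always holds.
\end{itemize}
The cleanest route is to treat $\ell_1\ge 0$ uniformly, with $I_0=[1,N]$. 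The equivalence $\LCP(U_i,U_j)\ge 0 \iff j\in I_0$ is then trivial.

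For general $\ell_1$, I first note that feasibility forces $i\in I_{\ell_1}$. Otherwise $|U_i|<\ell_1$, so $\LCP(U_i,U_j)\le|U_i|<\ell_1$ for all $j$. The constraint is then vacuous in the first coordinate, and I can lower $\ell_1$ to $|U_i|\in$ index set, reducing cost. So I may assume $i\in I_{\ell_1}$ without loss.
Then for every $j\in I_{\ell_1}\setminus\{i\}$, the first disjunct fails, so $\LCP(V_i,V_j)<\ell_2$. This gives $\ell_2\ge \max_{j\in I_{\ell_1}\setminus\{i\}}\LCP(V_i,V_j)+1\ge f(\ell_1)+1$. Therefore the cost is at least $\ell_1+f(\ell_1)+1$.

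\textbf{Main obstacle.} The main obstacle is boundary bookkeeping rather than the core argument. Three cases need care:
\begin{itemize}
\item the case $\ell_1=0$, where the lemma's restriction $\ell_1>0$ must be reconciled, either by including $0$ or by arguing the intended setting excludes it;
\item the $|I_{\ell_1}|=1$ convention;
\item the multiset semantics, where duplicate pairs make the maximum include $\LCP(V_i,V_i)=|V_i|$, correctly forcing $\ell_2>|V_i|$.
\end{itemize}
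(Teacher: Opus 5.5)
Your feasibility argument and the core of your optimality argument are the paper's. You split the other pairs by whether $j\in I_{\ell_1}$, use $\LCP(U_i,U_j)=\min\{|U_i|,|U_j|\}$, and note that each $j\in I_{\ell_1}\setminus\{i\}$ fails the first disjunct and so forces $\LCP(V_i,V_j)<\ell_2$.

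\textbf{The step that fails is your reduction to $i\in I_{\ell_1}$.} If $|U_i|<\ell_1$, then $\LCP(U_i,U_j)\le|U_i|<\ell_1$ for \emph{every} $j$. So the first disjunct always holds and the witness is feasible even with $\ell_2=0$. After you lower $\ell_1$ to $|U_i|$, the first disjunct fails for every $j$ with $|U_j|\ge|U_i|$. Hence $(|U_i|,\ell_2)$ is in general infeasible, and restoring feasibility may require raising $\ell_2$ by far more than you saved. No argument can close this step under the literal definition of \PairSUS. For $\mathcal{S}=\{(\texttt{a},\texttt{bbbbb}),(\texttt{a},\texttt{bbbbc})\}$, the solution $(\ell_1,\ell_2)=(2,0)$ is feasible at cost $2$, while the formula gives $1+4+1=6$. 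The paper does not derive $|U_k|\ge\ell_1^*$. It declares it part of what makes a witness valid (the length-$\ell_1$ prefix of $P_1$ must exist), and you have to do the same.

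\textbf{The other boundary cases.} You are right to worry about them, and they too must be settled by restricting the admissible solutions rather than by argument. The paper's proof passes over them in silence.
\begin{itemize}
\item For $\ell_1=0$: with $\{(\texttt{a},\texttt{x}),(\texttt{a},\texttt{y})\}$, the solution $(0,1)$ costs $1$, while the formula gives $2$.
\item For $|I_{\ell_1}|=1$: the convention $f(\ell_1)=0$ is off by one when $\ell_2=0$ is allowed. Your bound $\ell_2\ge\max_{j\in I_{\ell_1}\setminus\{i\}}\LCP(V_i,V_j)+1$ says nothing when that set is empty. Indeed, for $\{(\texttt{a},\texttt{b}),(\texttt{aaa},\texttt{b})\}$, the witness $(\texttt{aaa},\texttt{b})$ with $(\ell_1,\ell_2)=(2,0)$ is feasible at cost $2$ and respects $1\le\ell_1\le|P_1|$. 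The formula's minimum is $3$.
\end{itemize}
Extending the formula to $\ell_1=0$, as you suggest, does not remove these other loopholes.

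Once you restrict to $1\le\ell_1\le|P_1|$ and $\ell_2\ge 1$, your inequalities go through verbatim, including the empty-set case, and the proof coincides with the paper's. This is the regime relevant to \cref{lem:med:prefix}, where the candidate must reach into the run and extend past its end.
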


\begin{proof}
    We show that the formulation is feasible and that any optimal solution must have a cost at least as large as the one proposed.

    \proofsubparagraph{Feasibility.} 
    For a fixed $\ell_1$, let $i^* \in I_{\ell_1}$ be an index attaining the minimum in $f(\ell_1)$, and let $\ell_2 = f(\ell_1) + 1$. We choose the pair $(P_1, P_2) = (U_{i^*}, V_{i^*})$ as our witness. To satisfy \PairSUS, every other pair $(Q_1, Q_2) = (U_j, V_j) \in \mathcal{S} \setminus \{(U_{i^*}, V_{i^*})\}$ must be separated. There are two cases for index $j$:
    \begin{itemize}
        \item Case 1: $j \notin I_{\ell_1}$. In a prefix family, all $U$ strings are prefixes of some common string. If $|U_j| < \ell_1 \le |U_{i^*}|$, then $U_j$ is a proper prefix of $U_{i^*}$, which implies $\LCP(U_{i^*}, U_j) = |U_j| < \ell_1$. Thus, the first condition of \PairSUS is satisfied.
        \item Case 2: $j \in I_{\ell_1}$. By the definition of $i^*$, we have $\LCP(V_{i^*}, V_j) \le \max_{k \in I_{\ell_1} \setminus \{i^*\}} \LCP(V_{i^*}, V_k) = f(\ell_1)$. Since $\ell_2 = f(\ell_1) + 1$, it follows that $\LCP(V_{i^*}, V_j) < \ell_2$. Thus, the second condition of \PairSUS is satisfied.
    \end{itemize}
    In both cases, $(U_{i^*}, V_{i^*})$ is a valid witness for $(\ell_1, \ell_2)$.

    \proofsubparagraph{Optimality.} 
    Let $(\ell_1^*, \ell_2^*)$ be an optimal solution with witness $(U_k, V_k)$. For this witness to be valid, we must have $|U_k| \ge \ell_1^*$. Furthermore, for the solution to be feasible, every other pair $j \in I_{\ell_1^*} \setminus \{k\}$ must satisfy the second condition of \PairSUS (since they fail the first: $|U_j| \ge \ell_1^* \implies \LCP(U_k, U_j) \ge \ell_1^*$). 
    
    Therefore, for all $j \in I_{\ell_1^*} \setminus \{k\}$, we must have $\LCP(V_k, V_j) \le \ell_2^* - 1$. This implies that the maximum LCP for this specific $k$ is bounded:
    \[ \max_{j \in I_{\ell_1^*} \setminus \{k\}} \LCP(V_k, V_j) \le \ell_2^* - 1. \]
    Since $f(\ell_1^*)$ is defined as the \emph{minimum} of such maxima over all $i \in I_{\ell_1^*}$, it follows that $f(\ell_1^*) \le \ell_2^* - 1\implies \ell_2^* \ge f(\ell_1^*) + 1$. Thus, the cost $\ell_1^* + \ell_2^*$ is at least $\ell_1^* + f(\ell_1^*) + 1$. Minimizing this over all possible $\ell_1$ yields the global optimum.
\end{proof}

To efficiently implement \Cref{lem:med:prefix-formula}, we use the following data structure. For every $(U_i, V_i) \in \mathcal{S}$, we define $\maxLCP[i] := \max_{j \ne i,\, |U_j| \ge |U_i|} \LCP(V_i, V_j)$, with $\maxLCP[i] = 0$ if no such $j$ exists. We first show how to compute this array, and then how it is used to solve \PairSUS.

\begin{fact}[Folklore] \label{lem:min-lcp}
    For any three strings $S_1$, $S_2$ and $S_3$ with $S_1 \le S_2 \le S_3$ in the lexicographical order, we have $\LCP(S_1,S_3) = \min \{\LCP(S_1,S_2), \LCP(S_2,S_3)\}$.    
\end{fact}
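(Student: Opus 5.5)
The plan is to prove the two inequalities $\LCP(S_1,S_3) \ge \min\{\LCP(S_1,S_2),\LCP(S_2,S_3)\}$ and $\LCP(S_1,S_3) \le \min\{\LCP(S_1,S_2),\LCP(S_2,S_3)\}$ separately, working directly from the definition of the lexicographic order. Throughout, set $a := \LCP(S_1,S_2)$, $b := \LCP(S_2,S_3)$ and $m := \min\{a,b\}$.

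\textbf{Lower bound.} This direction is immediate and does not use the ordering. The strings $S_1$ and $S_2$ agree on their first $a \ge m$ letters, and $S_2$ and $S_3$ agree on their first $b \ge m$ letters. Hence $S_1[0\dd m) = S_2[0\dd m) = S_3[0\dd m)$, which gives $\LCP(S_1,S_3) \ge m$.

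\textbf{Upper bound.} This is the step where the hypothesis $S_1 \le S_2 \le S_3$ is needed. Suppose for contradiction that $\LCP(S_1,S_3) > m$, and assume without loss of generality that $m = a$ (the case $m = b$ is symmetric under reversing the order). Then $S_1$ and $S_3$ share a prefix of length $a+1$. In particular, $|S_1| > a$ and $|S_3| > a$, and $S_1[a] = S_3[a]$. We split on $|S_2|$.
\begin{itemize}
\item If $|S_2| = a$, then $S_2$ is a proper prefix of $S_1$, so $S_2 < S_1$. This contradicts $S_1 \le S_2$.
\item If $|S_2| > a$, then $S_2[a] \ne S_1[a]$, since $a = \LCP(S_1,S_2)$ and $|S_1| > a$. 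Because $S_1 \le S_2$, we get $S_1[a] < S_2[a]$. Now $b \ge a$, and $S_2[0\dd a) = S_1[0\dd a) = S_3[0\dd a)$, while $S_2[a] \ne S_1[a] = S_3[a]$. So $b = a$ exactly, and the first mismatch between $S_2$ and $S_3$ is at position $a$. Since $S_2 \le S_3$, this forces $S_2[a] < S_3[a] = S_1[a]$, contradicting $S_1[a] < S_2[a]$.
\end{itemize}
Both cases are contradictory, so $\LCP(S_1,S_3) \le m$, which together with the lower bound proves the equality.

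The main obstacle is only bookkeeping. One has to handle the cases where some string is a prefix of another, that is, where the position-$a$ letter does not exist. The symmetric case $m = b$ needs the reversed comparison, and it should be checked explicitly rather than waved off.
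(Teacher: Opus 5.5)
The paper gives no proof of this fact; it simply cites it as folklore. So there is no argument of the paper's to compare against, and your direct proof is correct and complete, apart from one loose step.

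\textbf{The loose step.} You write that the case $m=b$ is ``symmetric under reversing the order.'' Reversing the lexicographic order does not give a lexicographic order, because a proper prefix still precedes its extensions under any alphabet order. So the case $m=b$ is not a literal mirror image of the case $m=a$. Concretely, mirroring your subcase $|S_2|=a$ gives ``$S_2$ is a proper prefix of $S_3$.'' That is consistent with $S_2\le S_3$, so it yields no contradiction.

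\textbf{How to repair it.} Split on whether $a=b$, rather than on which of $a,b$ is smaller.
\begin{itemize}
\item Suppose $b<a$. Then $|S_2|\ge a>b$. The assumption $\LCP(S_1,S_3)>b$ together with $a>b$ gives $S_3[b]=S_1[b]=S_2[b]$. Hence $\LCP(S_2,S_3)>b$, a contradiction. The ordering is not used at all; this is the isosceles property of the ultrametric $2^{-\LCP(X,Y)}$.
\item Suppose $a=b$. Your case $m=a$ applies verbatim.
\end{itemize}

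Your own case $m=a$ already reflects this. Both of its subcases can only occur when $b=a$: in the first, $b\le|S_2|=a$; in the second, you derive $b=a$ without using the order. So the hypothesis $S_1\le S_2\le S_3$ is needed only in the tie case $a=b$, and your proof handles that case correctly.
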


\begin{lemma}
    \label{lem:med:prefix-max-lcp}
    The array $\maxLCP$ can be computed in $\cO(N)$ time.
\end{lemma}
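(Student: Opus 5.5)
We will compute $\maxLCP[i] = \max_{j \ne i,\, |U_j| \ge |U_i|} \LCP(V_i,V_j)$ by an offline sweep in decreasing order of $|U_i|$, combined with a structure over the lexicographic order of the $V$'s that supports ``insert an element, and report the maximum LCP with any other inserted element''. By \cref{lem:min-lcp}, for a fixed $V_i$ the maximum of $\LCP(V_i,V_j)$ over a set of strings is attained by the lexicographic predecessor or the successor of $V_i$ in that set. So we need, at the time $i$ is processed, the nearest already-inserted neighbours of $V_i$ in lexicographic order, together with LCP values between them.

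\textbf{Step 1 (preprocessing).} Traverse the compacted trie $\mathcal{T}(\mathcal{F}_2)$ in DFS order to obtain the lexicographic rank of every $V_i$. Elements with equal $V$ receive consecutive ranks, ties broken arbitrarily. Also build an $\cO(1)$-time LCA structure on $\mathcal{T}(\mathcal{F}_2)$, so that $\LCP(V_a,V_b)=\sd(\mathrm{LCA})$ is available in constant time. Similarly, sort the indices by $|U_i|$. These lengths are at most the depth of $\mathcal{T}(\mathcal{F}_1)$, and they can be read off from the terminal nodes of $\mathcal{T}(\mathcal{F}_1)$, which in a prefix family lie on a single root-to-leaf path. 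So a traversal of $\mathcal{T}(\mathcal{F}_1)$ gives the order in $\cO(N)$ time, with no comparison sort.

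\textbf{Step 2 (reversing the sweep to avoid a predecessor structure).} An online predecessor search would cost more than $\cO(N)$, so we process deletions instead. Start with all $N$ elements in a doubly linked list sorted by rank. Then delete elements in \emph{increasing} order of $|U_i|$. When $i$ is deleted, the elements still in the list are exactly those $j\ne i$ with $|U_j|\ge|U_i|$, plus elements with the same length that were deleted later. Hence $\maxLCP[i]$ is the maximum of the LCPs of $V_i$ with its current list predecessor and successor, both computed by LCA queries (or $0$ if neither neighbour exists). Afterwards, unlink $i$ in $\cO(1)$ time.

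\textbf{Main obstacle: ties $|U_j|=|U_i|$.} Such $j$ must count for $i$ and vice versa, but the deletion order removes one of them first. To fix this, process each group of equal $|U|$ as a batch. First compute $\maxLCP$ for every member of the group using the neighbours in the list as it stands before any member is deleted. Then delete the whole group. The neighbours of a member may themselves be members of the same group, which is correct because those are valid $j$. Duplicate pairs in the multiset are handled by distinct list nodes and give LCP equal to $|V_i|$, as they should.

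Each step takes $\cO(1)$ time per element, so the total time is $\cO(N)$.
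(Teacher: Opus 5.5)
Your proof is correct. Its skeleton matches the paper's. By \cref{lem:min-lcp}, $\maxLCP[i]$ is the larger of the LCPs of $V_i$ with its nearest lexicographic predecessor and successor among the eligible indices $\{j\ne i : |U_j|\ge|U_i|\}$. Each such LCP is then answered in $\cO(1)$ time: you use LCA queries in $\mathcal{T}(\mathcal{F}_2)$, and the paper uses range-minimum queries over the adjacent-LCP array, which amounts to the same thing.

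The genuine difference is how the nearest eligible neighbours are found. The paper lists the $V$'s in lexicographic order and computes $\prev[i]=\max\{j<i : |U_j|\ge|U_i|\}$ and $\nextt[i]=\min\{j>i : |U_j|\ge|U_i|\}$ with the standard stack-based all-nearest-larger-values scan. You instead delete from a lexicographically sorted linked list in non-decreasing order of $|U|$. When a batch is processed, the list neighbours of $i$ are exactly $\prev[i]$ and $\nextt[i]$, and your batching of equal lengths plays the role of the non-strict inequality in the paper's definitions.

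What each route buys:
\begin{itemize}
\item The stack scan needs only $\cO(1)$-time comparisons of lengths. It performs no sorting and does not use the prefix-family structure at all.
\item Your offline deletion needs a linear-time sort by $|U_i|$. You obtain it correctly from the fact that all loci of the $U_i$ lie on a single root-to-leaf path of $\mathcal{T}(\mathcal{F}_1)$, so a pre-order traversal with bucketing suffices.
\item In exchange, your correctness argument is very transparent, since the list at processing time is literally the eligible set.
\item The sort you rely on is one that \cref{lem:med:pairsus-prefix} performs anyway, so in context it costs nothing extra.
\end{itemize}
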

\begin{proof}
    By traversing $\mathcal{T}(\mathcal{F}_2)$, we 
    obtain the list of $V$ strings (the second components of $\mathcal S$) in lexicographical order in $\cO(N)$ total time. For convenience, 
    let us denote the \emph{sorted} list by
    $V_1,\ldots,V_N$. In addition to the list, we can also output the LCP values $\LCP(V_1,V_2),\ldots,\LCP(V_{N-1},V_N)$ in the same traversal. 
    For finding arbitrary LCP values, we use a data structure for answering range minimum queries over the latter list of $N$ LCP values. The data structure is constructed in $\cO(N)$
    time and it can answer queries in $\cO(1)$ time~\cite{DBLP:conf/latin/BenderF00}.

    From \cref{lem:min-lcp}, for any triplet $V_i, V_j, V_k$ with $|U_j|,|U_k| \ge |U_i|$ and either $V_i \le V_j \le V_k$ or $V_i \ge V_j \ge V_k$, we have $\LCP(V_i, V_k) \le \LCP(V_i, V_j)$. Repeatedly applying this argument along the lexicographical order shows that, among all strings $V_j$ with $|U_j| \ge |U_i|$, the maximum value of $\LCP(V_i,V_j)$ is attained by one of the closest such strings to $V_i$ in lexicographical order.

    We introduce two auxiliary arrays. For each $i\in[1,N]$, we define $\prev[i] := \max \{ j < i \mid |U_j| \ge |U_i| \}$ and $\nextt[i] := \min \{ j > i \mid |U_j| \ge |U_i|\}$; note that these values may be undefined for some entries. By the argument from the previous paragraph, we  have that $\maxLCP[i] = \max\{\LCP(V_i,V_{\prev[i]}), \LCP(V_i, V_{\nextt[i]})\}$ (where we treat the LCP value with some undefined entry as $0$). 
    
    The arrays $\prev$ and $\nextt$ can be computed in $\cO(N)$ time using a well-known algorithm (cf.~\cite{DBLP:journals/tcs/BarbayFN12}). From thereon, we compute the values of $\maxLCP$ by taking the maximum LCP of each entry with its corresponding $\prev$ and $\nextt$ entries using two $\cO(1)$-time range minimum queries. The total time is thus $\cO(N)$.
\end{proof}

\begin{lemma}
\label{lem:med:pairsus-prefix}
An instance of \PairSUS where $\mathcal{S}$ is a prefix family of size $N$ can be solved in $\cO(N)$ time.
\end{lemma}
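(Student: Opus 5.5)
We combine \cref{lem:med:prefix-formula} with the array $\maxLCP$ from \cref{lem:med:prefix-max-lcp}. The key observation is that $f(\ell_1)$ is a step function of $\ell_1$ that only changes at the values $|U_i|$. We therefore never evaluate it for arbitrary $\ell_1$; instead, we read off the optimum from $\maxLCP$ in one pass.

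First compute $\maxLCP$ in $\cO(N)$ time. For a fixed $\ell_1$, the candidate set is $I_{\ell_1}$. For $i\in I_{\ell_1}$, the inner maximum in $f$ ranges over all $j\ne i$ with $|U_j|\ge \ell_1$. This set contains every $j$ with $|U_j|\ge |U_i|$, so the inner maximum is at least $\maxLCP[i]$, and in general it can be larger. Hence $f(\ell_1)\ge \min_{i\in I_{\ell_1}}\maxLCP[i]$. Now fix $i$ and set $\ell_1=|U_i|$. Then $i\in I_{|U_i|}$ and the inner maximum equals exactly $\maxLCP[i]$, which gives
\[ f(|U_i|)+|U_i|+1 \le |U_i|+\maxLCP[i]+1 .\]

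Next we show that restricting to $\ell_1$ of the form $|U_i|$ loses nothing. Take any $\ell_1$ with its minimizer $i^*$, and raise $\ell_1$ to $|U_{i^*}|\ge \ell_1$; this shrinks $I$. Write $d=|U_{i^*}|-\ell_1$ and $f'=f(|U_{i^*}|)$, so the cost changes by $d+f'-f(\ell_1)$, which is not obviously nonpositive. Rather than compare $f$ values, argue via witnesses. Let $(\ell_1^*,\ell_2^*)$ be optimal with witness $k$. Then $\ell_1^*\le |U_k|$, and every $j\neq k$ with $|U_j|\ge\ell_1^*$ satisfies $\LCP(V_k,V_j)<\ell_2^*$. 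In particular this holds for all $j$ with $|U_j|\ge |U_k|$, so $\maxLCP[k]+1\le \ell_2^*$.

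The cost of the optimum is then at least $\ell_1^*+\maxLCP[k]+1$, but $\ell_1^*$ may be smaller than $|U_k|$. This is the main obstacle: one cannot simply replace $\ell_1^*$ by $|U_k|$. To handle it, for each $i$ we need the least $\ell_1$ such that every $j\neq i$ with $|U_j|\ge\ell_1$ has $\LCP(V_i,V_j)\le\maxLCP[i]$ (keeping $\ell_2=\maxLCP[i]+1$), or more generally a tradeoff between $\ell_1$ and $\ell_2$. Minimality forces $\ell_1^*$ to equal one plus some $|U_j|$ with $\LCP(V_k,V_j)\ge \ell_2^*$, or $\ell_1^*=1$.

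The plan is therefore to sweep $\ell_1$ downward over the distinct values $|U_j|+1$, i.e.\ to insert indices in decreasing order of $|U_j|$. During the sweep we maintain, for the current $I$, the value $\min_{i\in I}\max_{j\in I\setminus\{i\}}\LCP(V_i,V_j)$. By \cref{lem:min-lcp}, each $i$'s maximum is attained at its lexicographic neighbours within $I$. Inserting indices one at a time in lexicographic-position terms, a linked list that supports neighbour updates is easiest to run in reverse, as deletions in increasing order of $|U_j|$. Going from $I_{\ell}$ to smaller sets by deleting elements with $|U_j|<\ell$, each deletion changes the neighbour values of $\cO(1)$ elements, and each new LCP is computed by an $\cO(1)$ RMQ over the sorted $V$ list. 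A value can only decrease when neighbours are deleted. We therefore record, per $\ell_1$, the current minimum, kept via the observation that it is monotone in $\ell_1$: deletions only shrink maxima, so the global min is nonincreasing as $\ell_1$ grows. The minimum can thus be updated in $\cO(1)$ per changed entry.

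Finally, bucket-sort the lengths $|U_i|$, which are at most $N$ distinct relevant values after rank-reduction. For each $\ell_1$ take the candidate $\ell_1+f(\ell_1)+1$, return the best one, and output the recorded witness $i^*$. The total time is $\cO(N)$.
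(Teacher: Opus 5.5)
Your route differs from the paper's, and your distrust of the shortcut is justified. The paper keeps only the thresholds $\ell_1=|U_i|$. It asserts that $f$ is non-increasing and that $\maxLCP[k]$ \emph{upper}-bounds $\max_{j\ne k,\ |U_j|\ge|U_i|}\LCP(V_k,V_j)$ whenever $|U_k|\ge|U_i|$. It then returns $\min_i (|U_i|+1+\min_{k:|U_k|\ge|U_i|}\maxLCP[k])$ from suffix minima of the static array. As you observe, that inequality goes the other way. Moreover, the optimal $\ell_1$ is of the form $|U_j|+1$ (or the smallest allowed value), not $|U_i|$: on an interval where $I_{\ell_1}$ is constant the cost grows with $\ell_1$. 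The paper's recipe in fact collapses to $\min_i|U_i|+1+\min_k\maxLCP[k]$. For $|U_1|,|U_2|,|U_3|=1,5,6$ and $V_1=V_2=V_3$ of length $10$ it returns $2$, whereas every valid solution (with $\ell_1\le|P_1|$, as in the proof of \cref{lem:med:prefix-formula}) costs at least $6$. A dynamic sweep like yours repairs this. It has only $\cO(N)$ events, because deleting an index changes $M(i)=\max_{j\in I\setminus\{i\}}\LCP(V_i,V_j)$ only for its two list neighbours.

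The gap is in how you maintain the minimum. It is false that the global minimum is nonincreasing in $\ell_1$, because a deletion also removes the deleted index's own value. With $|U_1|=1<|U_2|=|U_3|=2$, $V_1=\texttt{b}$, $V_2=\texttt{aa}$, $V_3=\texttt{ab}$ one has $f(1)=0$ but $f(2)=1$. Updating $\mu\leftarrow\min(\mu,\text{new values})$ therefore yields $\min_{\ell'\le\ell_1}f(\ell')$ rather than $f(\ell_1)$, and the stored witness may already have been deleted. The output is still correct, but you must argue why. A stale value $c$ at $\ell_1$ was live at some threshold $\ell'<\ell_1$, where it produced the strictly cheaper candidate $\ell'+c+1$. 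Hence, at an optimal $\ell_1$, the minimum was strictly lowered in that very step by a surviving index, and that witness is valid. A cleaner fix is to drop $f$ altogether and record the candidate $\ell_1+M(i)+1$ whenever $M(i)$ is initialized or changes. This suffices because, for fixed $i$ and a fixed value of $M(i)$, the smallest admissible $\ell_1$ is best, and there are only $\cO(N)$ such candidates. Finally, rank-reducing the lengths presupposes sorting them. Instead, sort by preorder rank in $\mathcal{T}(\mathcal{F}_1)$; this is the length order because the loci of all $U_i$ lie on one root-to-node path.
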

\begin{proof}
Let $\mathcal{S}= \{(U_1, V_1), \dots, (U_N, V_N)\}$ be a prefix family. By \cref{lem:med:prefix-formula}, it suffices to minimize $\ell_1 + f(\ell_1) + 1$. 

The function $f(\ell_1)$ is non-increasing and changes only when $\ell_1$ passes some value $|U_i|$. Hence, it suffices to consider thresholds $\ell_1 = |U_i|$. We have
\[
f(|U_i|) = \min_{k:\ |U_k|\ge |U_i|} \left( \max_{j:\ |U_j|\ge |U_i|,\ j\ne k} \LCP(V_k,V_j) \right).
\]
By definition, $\maxLCP[k] = \max_{j:\ |U_j|\ge |U_k|,\ j\ne k} \LCP(V_k,V_j)$ upper-bounds the above expression for all $k$, and is tight when $|U_k| = |U_i|$. Thus,
\[
f(|U_i|) = \min_{k:\ |U_k|\ge |U_i|} \maxLCP[k].
\]

The algorithm proceeds as follows:
\begin{enumerate}
    \item Compute the array $\maxLCP$ in $\cO(N)$ time via \cref{lem:med:prefix-max-lcp}.
    \item Sort indices $i$ with respect to $|U_i|$ in the non-increasing order, compute the minima
    \[
        h_i = \min_{k:\ |U_k|\ge |U_i|} \maxLCP[k],
    \]
    and return $\min_i (|U_i| + 1 + h_i)$.
\end{enumerate}
Both steps take $\cO(N)$ time, and the statement follows.
\end{proof}

\subsubsection{Solution for \texorpdfstring{$(\alpha,\beta)$}{(alpha,beta)}-families}
Consider an instance of the \TwoFamilies problem in which $\P$ and $\Q$ are $(\alpha,\beta)$-families with $\log \beta = o(\log N)$.
Further, let $\R$ be the list obtained by sorting the pairs $\P \cup \Q$ of strings according to the lexicographical order of the second components. We aim to construct a wavelet tree on the first components of $\R$.
For each node $v$ of the wavelet tree, we denote by $\R_v$ the sublist of $\R$ whose elements have their first component in the leaf list $\Sigma_v$ of $v$.
For any sublist $\mathcal X = (U_1,V_1),\dots,(U_m,V_m)$ of~$\R$, we denote by $\LCPs(\mathcal X)$ the list $0,\LCP(V_1,V_2),\dots,\LCP(V_{m-1},V_{m})$, represented as a packed string over the alphabet $[0, \beta]$ in space $\cO(N/\log_\beta N)$.

\begin{lemma}[{\cite[Claim 4.3]{TALG-Charalampopoulos25}}]\label{lem:med:lcs}
Consider an instance of the \TwoFamilies problem in which $\P$ and $\Q$ are $(\alpha,\beta)$-families with $\log \beta = o(\log N)$.
We can construct, in $\cO(N(\alpha+\log N)/ \sqrt{\log N})$ time and $\cO(N+N\alpha/\log N)$ space, a wavelet tree of height $\cO(\alpha + \log N)$ for the first components of~$\R$ (some possibly padded with a $\$\not\in\Sigma$).

Moreover, in $\cO(N(\alpha+\log N) \log \beta / \log N)$ time and $\cO(N)$ space, we can compute a bit-vector~$G_v$ specifying the origin ($\P$ or $\Q$) of each element of $\R_v$ and the list $\L_v = \LCPs(\R_v)$, for each node~$v$ of the wavelet tree in the
BFS order, such that after computing $G_u$ and $\L_u$, for each child $u$ of a node $v$, 
$G_v$ and $\L_v$ are deleted.
\end{lemma}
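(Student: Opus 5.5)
We build the wavelet tree top-down in the style of Babenko et al.\ and Munro et al.~\cite{DBLP:journals/tcs/MunroNV16}, and maintain the auxiliary information level by level.

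\textbf{Skeleton tree.} First compute $\R$ by traversing $\mathcal{T}(\mathcal{F}_2)$, in $\cO(N)$ time. The skeleton tree is taken to be the compacted trie of the first components. Each first component is padded with $\$$ to length $\alpha$, so it is a string of at most $\alpha$ letters. To keep the height at $\cO(\alpha+\log N)$, each branching trie node whose children have very uneven sizes is replaced by a weight-balanced binary tree over those children. The usual telescoping argument then bounds the depth of every leaf by $\cO(\alpha+\log N)$.

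\textbf{Bit-vectors.} The vectors $B_v$ are produced one level at a time. The work per level is $\cO(N/\sqrt{\log N})$: we process chunks of $\tfrac12\log N$ elements and use a lookup table that, given the packed local labels of a chunk, returns both the packed branching bits and the packed partition into the two children. Every element appears once per level, so the total time is $\cO(N(\alpha+\log N)/\sqrt{\log N})$. The space bound comes from storing all bit-vectors in packed form, plus the padded labels themselves in $\cO(N\alpha/\log N)$ words.

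\textbf{Propagating $G_v$ and $\L_v$.} For each node $v$ we split $\R_v$ into $\R_{v_L}$ and $\R_{v_R}$. $G$ is a plain subsequence of $G_v$ selected by $B_v$. $\L$ obeys \cref{lem:min-lcp}: the LCP of two consecutive elements of $\R_{v_L}$ equals the minimum of the entries of $\L_v$ lying strictly between them, plus the entry at the later element. The child lists are therefore obtained by a single left-to-right scan that keeps a running minimum, which is reset whenever an element of the corresponding child is emitted. Each entry of $\L_v$ takes $\cO(\log\beta)$ bits, so a chunk of $\Theta(\log N/\log\beta)$ entries fits in a word together with its $B_v$ bits. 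A precomputed table then maps (chunk of $B_v$, chunk of $\L_v$, carried-in running minima) to (output chunks for both children, carried-out minima). Because $\log\beta=o(\log N)$, these tables have size $N^{o(1)}$ and can be built in $o(N)$ time. A level costs $\cO(N\log\beta/\log N)$, which gives the claimed total. The nodes are handled in BFS order. Immediately after both children of $v$ are built, $G_v$ and $\L_v$ are deleted, so only about two levels exist at any moment, i.e.\ $\cO(N\log\beta/\log N)=\cO(N)$ words.

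\textbf{Main obstacle.} The hard step is the table-driven split of $\L_v$. Output chunks for the two children fill at different rates, and the carried running minimum must cross chunk boundaries. We would handle this with a buffered merge-style procedure: each child keeps a partial output word, tables return the produced bits along with their counts, and the results are appended with shifts. The overhead is $\cO(1)$ per input chunk plus $\cO(1)$ per node. The per-node term sums to $\cO(N(\alpha+\log N)/\log N)$ only if nodes with tiny $\R_v$ are handled cheaply, so the skeleton's node count has to be controlled by padding to a binary tree with $\cO(N)$ internal nodes per level.
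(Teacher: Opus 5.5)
Your outline follows the same route as the argument behind the cited Claim~4.3; the paper itself only imports the statement. You append \$ so that every first component becomes a leaf, and binarize the compacted trie of first components with weight-balanced trees so that telescoping gives height $\cO(\alpha+\log N)$. You then build the skeleton-shaped wavelet tree with the fast construction of~\cite{DBLP:journals/tcs/MunroNV16,DBLP:conf/soda/BabenkoGKS15}. Finally, you derive $G_{v_L},G_{v_R},\L_{v_L},\L_{v_R}$ from $B_v,G_v,\L_v$ by a running-minimum scan justified by \cref{lem:min-lcp}, tabulated over chunks of $\Theta(\log N/\log\beta)$ entries and run in BFS order with parent lists discarded. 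That second part is sound.

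Your justification of the $\cO(N/\sqrt{\log N})$ cost per level of the bit-vector construction, however, does not work as written. A table indexed by $\frac12\log N$ elements sees only $\cO(1)$ bits per element, so the stable partition it performs cannot carry along the bits the children need at deeper levels. Carrying full root-to-leaf codes instead costs $\Theta(Nh/\log N)$ per level, i.e.\ $\Theta(Nh^2/\log N)$ in total for $h=\cO(\alpha+\log N)$. (Chunks of $\Theta(\log N)$ elements would also give $\cO(N/\log N)$ per level, not the $\cO(N/\sqrt{\log N})$ you state.) The $\sqrt{\log N}$ comes from grouping the levels into blocks of $g=\sqrt{\log N}$. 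At the start of a block, the elements are stably radix-sorted by (node, previous $g$-bit code slice), which gives the node order at the current depth, and their next $g$-bit slices are written in that order; this costs $\cO(N+2^g)$ per block. Inside the block, items are $g$-bit fields, so one lookup handles $\Theta(\sqrt{\log N})$ of them. Alternatively, simply invoke the known $\cO(Nh/\sqrt{\log N})$-time construction of an arbitrarily shaped wavelet tree of height $h$ from the elements' root-to-leaf codes.

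Some smaller corrections:

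\textbf{Space.} What you should store are those codes, $\cO(\alpha+\log N)$ bits each, which is where the $\cO(N+N\alpha/\log N)$ space bound comes from. The padded labels over $\Sigma\cup\{\$\}$ would occupy $\Theta(N\alpha\log\sigma/\log N)$ words. They are also not needed, because each element of $\R$ reaches its leaf through $\mathcal{T}(\mathcal{F}_1)$.

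\textbf{Binarization.} Every trie node with more than two children has to be binarized, not only those with uneven children, and a single \$ suffices.

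\textbf{Table size.} The tables for propagating $\L_v$ have $N^{\epsilon}$ entries for a small constant $\epsilon>0$, not $N^{o(1)}$, since their index has $\Theta(\log N)$ bits. The hypothesis $\log\beta=o(\log N)$ is what keeps the carried minima within $o(\log N)$ of those bits.

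\textbf{Per-node overhead.} This needs no special care: the binary skeleton has fewer than $2N$ nodes, and $\cO(N)$ is within $\cO(N(\alpha+\log N)\log\beta/\log N)$.
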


The following lemma is a direct adaptation of the previous lemma, solving a \PairSUS instance in the same setting.

\begin{lemma}
\label{lem:med:wavelet}
Consider an instance of the \textsc{Shortest Unique String Pair} problem in which~$\mathcal{S}$ is an $(\alpha,\beta)$-family with $\log \beta = o(\log N)$.
We can construct, in $\cO(N(\alpha+\log N)/ \sqrt{\log N})$ time and $\cO(N+N\alpha/\log N)$ space, a wavelet tree of height $\cO(\alpha + \log N)$ for the first components of $\R$ (some possibly padded with a $\$\not\in\Sigma$). 

Moreover, in $\cO(N(\alpha+\log N) \log \beta / \log N)$ time and $\cO(N)$ space, we can compute $\L_v = \LCPs(\R_v)$, for each node $v$ of the wavelet tree in the BFS order, such that after computing $\L_u$ for each child $u$ of a node $v$, $\L_v$ is deleted. 
\end{lemma}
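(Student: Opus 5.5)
The plan is a black-box reduction to \cref{lem:med:lcs}. The only differences between the two statements are that \PairSUS has a single (multi)set $\mathcal S$ rather than two sets $\P,\Q$, and that we do not need the origin bit-vectors $G_v$. So we instantiate \TwoFamilies with $\P := \mathcal S$ and $\Q := \emptyset$; if the cited construction needs $\Q$ nonempty, we take $\Q := \mathcal S$ instead and later discard $G_v$. In either case $\P$ and $\Q$ are $(\alpha,\beta)$-families with $|\P|,|\Q|\le N$ and $\log\beta = o(\log N)$, and the tries $\mathcal T(\mathcal F_1),\mathcal T(\mathcal F_2)$ are the same. The sorted list $\R$ of \cref{lem:med:lcs} is then $\mathcal S$ ordered by second components (possibly with every element duplicated). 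The construction in \cref{lem:med:lcs} never inspects the $\P/\Q$ labels when it builds the wavelet tree or the lists $\L_v$; it uses them only to fill in $G_v$.

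The key steps, in order, are as follows.

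First, we form the instance and invoke the first part of \cref{lem:med:lcs}. This yields the wavelet tree of height $\cO(\alpha+\log N)$ over the padded first components within the stated time and space.

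Second, we run the BFS-order procedure from the second part of \cref{lem:med:lcs}, but drop the maintenance of $G_v$. Since computing $G_v$ is extra work on top of computing $\L_v$, omitting it cannot increase the $\cO(N(\alpha+\log N)\log\beta/\log N)$ time or the $\cO(N)$ space bound.

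Third, if we used $\Q=\mathcal S$, we deduplicate. In each $\L_v$ the two copies of an element of $\mathcal S$ are adjacent in the order by second components, so the entry between them equals $|V|$. We therefore keep every other entry, which costs time linear in the packed length, i.e. $\cO(|\R_v|\log\beta/\log N)$ per node. Summed over the levels this stays within the bound.

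The main obstacle is the first step: verifying that the cited construction really accepts multisets and an empty or repeated $\Q$, and that equal second components are handled consistently (ties in $\R$ give LCP values equal to the full length). To settle this, we would check that the sorting and the wavelet-tree partitioning are stable, so that each $\R_v$ is a subsequence of $\R$. We would also check that $\L_v$ is obtained from the parent's list by range minima over consecutive kept elements, which by \cref{lem:min-lcp} depends only on the second components. Once stability is confirmed, multiplicities and the absence of $\Q$ are irrelevant, and the statement follows directly.
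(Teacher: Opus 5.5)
Your main route, invoking \cref{lem:med:lcs} as a black box with $\P := \mathcal S$ and $\Q := \emptyset$ and ignoring the bit-vectors $G_v$, is exactly the paper's proof, so the proposal is correct. The fallback with $\Q := \mathcal S$ plus deduplication is unnecessary, since the paper takes $\Q=\emptyset$ directly. It would still work: each block of equal second components in $\R_v$ has even length, so dropping every other entry of $\L_v$ is valid even when ties prevent the two copies of an element from being adjacent.
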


\begin{proof}
    The \TwoFamilies problem takes as input two sets of string pairs, $\mathcal{P}$ and $\mathcal{Q}$, contrary to the one set $\mathcal{S}$ in \PairSUS. \Cref{lem:med:lcs} constructs a wavelet tree precisely as described by this lemma (for the union $\mathcal{P} \cup \mathcal{Q}$), and additionally maintains, for every node, a bit vector which can be ignored here. Thus, by invoking the cited lemma with $\mathcal{P} := \mathcal{S}$ and $\mathcal{Q} := \emptyset$, we obtain exactly the required data structure within the stated  time and space bounds.
\end{proof}

The wavelet tree constructed in \Cref{lem:med:wavelet} uses the trie $\mathcal{T}(\mathcal{F}_1)$ as its skeleton tree. Although the wavelet tree is binary, its topology mirrors that of $\mathcal{T}(\mathcal{F}_1)$ so there is a many-to-one correspondence between its nodes and those of $\mathcal{T}(\mathcal{F}_1)$. Therefore, the attached LCP lists of suffixes also applies to the nodes of $\mathcal{T}(\mathcal{F}_1)$: for any node $v \in \mathcal{T}(\mathcal{F}_1)$, \cref{lem:med:wavelet} provides an LCP list for the suffixes corresponding to the leaf descendants of $v$. Inspect \cref{fig:med}, where a path in $\mathcal{T}(\mathcal{F}_1)$ spells a string preceding some anchor positions; the LCP list is used to determine the shortest prefix among the suffixes following that string. The two components are then combined to obtain a candidate SUS.

\begin{figure}
    \centering
    \begin{subfigure}[c]{0.35\textwidth}
        \centering
        \begin{tikzpicture}
            \draw [color=red] (0,0) -- (-0.15, -0.625) node [sloped, above, near end] {a}
                                    -- (-0.3, -1.25) node [sloped, above, near end] {r}
                                    -- (-0.5, -2) node [sloped, above, midway] {ab};
            \draw [dashed] (-0.5, -2) -- +(-0.5, -1) -- +(0.5, -1) -- +(0, 0);

            \foreach \xf/\yf/\xt/\yt in {0/0/2/-0.8, -0.15/-0.625/-1.6/-1.2, -0.3/-1.25/0.9/-1.5} {
                \draw [dotted] (\xf,\yf) -- (\xt,\yt);
                \fill (\xt,\yt) circle(1.5pt);
                \draw [dotted] (\xt, \yt) -- +(-0.5, -1) -- +(0.5, -1) -- +(0, 0);
            }

            \fill (0, 0) circle(1.5pt);
            \fill (-0.15, -0.625) circle(1.5pt);
            \fill (-0.3, -1.25) circle(1.5pt);
            \fill (-0.5, -2) circle(1.5pt);
            \node [anchor=west] at (-0.5, -2) {$v$};

            \draw[->] (-0.25, -2.25) to[out=-45,in=200] (3.5, -1.75);
        \end{tikzpicture}
    \end{subfigure}
    \begin{subfigure}[c]{0.35\textwidth}
        \centering
        \begin{tabular}{c|l|c}
            $i$ & suffix & $\mathcal{L}_v[i]$ \\
            \hline
            $0$ & \texttt{\textcolor{red}{cadabra}gd\ldots} & $(0)$ \\
            $1$ & \texttt{cadabrreb\ldots} & $6$ \\
            $2$ & \texttt{cadabrred\ldots} & $8$ \\
            $3$ & \texttt{carrotcak\ldots} & $2$ \\
            $4$ & \texttt{carrotcak\ldots} & $9$
        \end{tabular}
    \end{subfigure}

    \caption{Node $v$ of $\mathcal{T}(\mathcal{F}_1)$ corresponds to prefix $\texttt{arba}$ (left) with LCP list $\mathcal{L}_v$ (right). \cref{lem:med:lcps} gives~7 (shortest unique prefix \texttt{cadabra}, in red), yielding a \PairSUS  candidate of length $4 + 7 = 11$.}\label{fig:med}
\end{figure}

For a multiset $\mathcal{S}$ of strings, we call a lengthwise minimal string that is a prefix of exactly one element of $\mathcal{S}$ a \emph{shortest unique prefix} of $\mathcal{S}$.

\begin{lemma}
\label{lem:med:lcps}
    Let $\mathcal{S}$ be a lexicographically sorted list of $N$ strings and let $L$ be a list of $N+1$ integers such that $L[0]=L[N]=0$ and $L[i]=\LCP(\mathcal{S}[i-1],\mathcal{S}[i])$, for all $i \in [1,N)$. Then a \emph{shortest unique prefix} of $\mathcal{S}$, if one exists, can be computed in $\cO(N)$ time.
\end{lemma}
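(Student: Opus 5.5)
For each index $i\in[0,N)$ I will compute the length of the shortest prefix of $\mathcal{S}[i]$ that is a prefix of no other element of $\mathcal{S}$. Then I take the minimum over all $i$ for which this quantity is defined. The key observation is \cref{lem:min-lcp}. Because $\mathcal{S}$ is sorted, for any $j<i$ we have $\LCP(\mathcal{S}[j],\mathcal{S}[i])\le \LCP(\mathcal{S}[i-1],\mathcal{S}[i])=L[i]$. Symmetrically, for any $j>i$ we have $\LCP(\mathcal{S}[i],\mathcal{S}[j])\le L[i+1]$. Hence $m_i:=\max\{L[i],L[i+1]\}$ equals $\max_{j\ne i}\LCP(\mathcal{S}[i],\mathcal{S}[j])$. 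The sentinels $L[0]=L[N]=0$ handle the boundary indices: if no neighbour exists on one side, that side contributes $0$.

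Next I argue that the prefix of $\mathcal{S}[i]$ of length $k$ is unique in $\mathcal{S}$ if and only if $m_i<k\le|\mathcal{S}[i]|$. If $k>m_i$, no other element agrees with $\mathcal{S}[i]$ on its first $k$ letters, so the prefix is shared by no one else. If $k\le m_i$, then some neighbour shares that prefix. So the shortest unique prefix extending from index $i$ has length $m_i+1$, and it is valid exactly when $m_i+1\le|\mathcal{S}[i]|$.

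The case $m_i=|\mathcal{S}[i]|$ needs care. This is where $\mathcal{S}[i]$ is itself a prefix of a neighbour, or equals it when $\mathcal{S}$ is a multiset with duplicate copies. There is then no unique prefix starting from $i$, and the index must be skipped. Distinguishing these cases needs the lengths $|\mathcal{S}[i]|$, which I assume are available in $\cO(1)$ time each, as in the applications where the strings are suffixes with known lengths. Conversely, any unique prefix of $\mathcal{S}$ is a prefix of exactly one element $\mathcal{S}[i]$. By the characterization it has length at least $m_i+1$, so minimizing $m_i+1$ over the valid $i$ yields a global minimum.

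The algorithm is therefore a single scan over $L$ computing the $m_i$, which takes $\cO(N)$ time. It outputs the index $i$ attaining the minimum together with the length $m_i+1$; if no valid $i$ exists, it reports that no shortest unique prefix exists. The main obstacle is not computational but the correctness of the reduction to adjacent LCPs: monotonicity of LCP away from position $i$ in sorted order, plus the edge cases of duplicates and of prefix containment. Both follow from \cref{lem:min-lcp} applied to triples $\mathcal{S}[j]\le\mathcal{S}[i-1]\le\mathcal{S}[i]$ and $\mathcal{S}[i]\le\mathcal{S}[i+1]\le\mathcal{S}[j]$.
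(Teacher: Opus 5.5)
Your proposal is correct and is essentially the paper's own proof: both use \cref{lem:min-lcp} to show that $\max\{L[i],L[i+1]\}$ is the largest LCP between $\mathcal{S}[i]$ and any other element, assign $\mathcal{S}[i]$ the candidate length $\max\{L[i],L[i+1]\}+1$ when this is at most $|\mathcal{S}[i]|$ (and $\infty$ otherwise), and minimize over $i$ in one $\cO(N)$ scan. Your remarks on needing $|\mathcal{S}[i]|$ in $\cO(1)$ time and on duplicates make explicit what the paper assumes implicitly; the one degenerate case that both arguments gloss over is $N=1$, where the empty string is already a unique prefix although the sentinels yield length $1$.
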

\begin{proof}
    Let $i$ be an arbitrary index in $[0,N)$. Because  $\mathcal{S}$ is sorted and by \cref{lem:min-lcp}, it follows that for any $j < i$, $\LCP(\mathcal{S}[j],\mathcal{S}[i]) \le \LCP(\mathcal{S}[i-1],\mathcal{S}[i])=L[i]$. Similarly, for any $j > i$, $\LCP(\mathcal{S}[j],\mathcal{S}[i]) \le \LCP(\mathcal{S}[i+1],\mathcal{S}[i]) = L[i+1]$. Therefore, the greatest LCP value between $\mathcal{S}[i]$ and any other string in $\mathcal{S}$ is precisely $M[i]= \max \{L[i],L[i+1]\}$. Any prefix of $\mathcal{S}[i]$ that is longer than this value (if one exists) is not a prefix of any string in $\mathcal{S} \setminus \{\mathcal S[i]\}$. Conversely, any prefix of $\mathcal{S}[i]$ of length $M[i]$ is guaranteed to be a prefix of at least one neighboring string (the one determining the maximum LCP value). Thus, the length of the shortest unique prefix for string $\mathcal{S}[i]$ is given by: \[\textsf{SUP}_i=\begin{cases}M[i]+1, & \text{if $M[i]< |\mathcal{S}[i]|$}\\
            \infty, & \text{otherwise.}
		 \end{cases}
    \]
    The length of the shortest unique prefix for the entire list $\mathcal{S}$ is the minimum of these lengths over all indices $i$, that is, $\min_{i \in [0,N)} \left\{ \textsf{SUP}_i \right\}$. This minimum can be found in $\cO(N)$ time by first computing all $M[i]$ values in $\cO(N)$ time and then finding the global minimum.
\end{proof}

\begin{example}
    Consider a lexicographically sorted list $\mathcal S = [\texttt{a}, \texttt{abacus}, \texttt{abasia}, \texttt{abate}, \texttt{abstract}]$ consisting of $N=5$ strings.
    The array $L$ for~$\mathcal S$ as defined in \cref{lem:med:lcps} is $[0,1,3,3,2,0]$.
    We compute $M[i] = \max \{L[i], L[i+1]\}$ for string $\mathcal S[i]$, for all $i\in[0,5)$, obtaining $M= [1,3, 3, 3, 2]$.  
    Then, note that, $\textsf{SUP}_0=\infty$ because $M[0]=1=|\mathcal{S}[0]|$.
    The length of the shortest unique prefix is given by $\min_{i \in [0,4)} \left\{ \textsf{SUP}_i \right\} = \min\{\infty, 4, 4, 4, 3\} = 3$, corresponding to the prefix $\texttt{abs}$ of $\texttt{abstract}$.
\end{example}

\subsubsection{Wrapping Up}
We consider separately the cases in which the candidate SUS starts within a $\tau$-run and the case when it does not. In the former case, we apply the algorithm for a prefix family, with entries based on all $\tau$-runs sharing the same structure (see \Cref{lem:med:prefix}). In the latter case, the wavelet tree algorithm for an $(\alpha, \beta)$-family suffices with entries based on positions in a string synchronizing set (see \Cref{lem:med:ab}).

\begin{lemma}
\label{lem:med:prefix}
Given an instance of \SUS, 
we can compute a SUS $P$ of $S$ in $\cO(n/\log_\sigma n)$ time 
if $|P| \in [\frac15 \log_\sigma n, 2^{\sqrt{\log n}}]$,
$\per(P) > \frac{1}{45}\log_\sigma n$,
and $\per(P[0\dd 3\lfloor\frac{1}{15}\log_\sigma n\rfloor - 1]) \le \frac{1}{45}\log_\sigma n$.
\end{lemma}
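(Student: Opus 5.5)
Set $\tau := \lfloor\frac{1}{15}\log_\sigma n\rfloor$; we may assume $\tau\ge 1$. The hypothesis says the prefix $P[0\dd 3\tau-1]$ of the sought SUS $P$ has period at most $\frac13\tau$. Since $|P|\ge 3\tau$ and $\per(P)>\frac{1}{45}\log_\sigma n \ge \frac13\tau$, the periodicity of this prefix must break inside $P$. Hence every occurrence $S[i\dd i+|P|)$ of $P$ starts inside a $\tau$-run $R$ whose period $q=\per(P[0\dd 3\tau-1])$ equals that of the prefix, and $R$ ends strictly before $i+|P|-1$. Write $e$ for the position where $R$ ends, i.e., the first position $e+1$ breaking period $q$. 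Then $P = S[i\dd e]\cdot S[e+1\dd i+|P|)$. Here $S[i\dd e]$ is a suffix of $R$, determined by $\Lroot(R)$, the phase of $i$ modulo $q$, and the length $e-i+1$. This is exactly the structure of a prefix family, read backwards from the run end.

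\textbf{Construction.} Compute all $\tau$-runs with \cref{lem:per:find-roots} in $\cO(n/\tau)=\cO(n/\log_\sigma n)$ time, grouped by Lyndon root. For each group, and for each anchor of a run, take as anchor the last position $a$ of $R$ at which a full copy of $\lambda=\Lroot(R)$ ends; there is one anchor per run. For a run $R$ with anchor $a$ we form the pair $(U_R, V_R)$ with $V_R = S[a\dd\min(n,a+2^{\sqrt{\log n}}))$, the text from the anchor onward, and $U_R = (S[\mathrm{start}(R)\dd a))^R$, the reversed part of the run before the anchor. Since all such $U_R$ within a group are suffixes of powers of $\lambda$ aligned at the end of a copy of $\lambda$, their reverses are prefixes of $(\lambda^R)^\infty$. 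Thus each group forms a prefix family. A candidate occurrence starting at $i$ inside $R$, in a fixed phase, corresponds to choosing $\ell_1 = a-i$ and $\ell_2=|P|-\ell_1$. The $|P|\ge 3\tau$ condition and the fact that $P$ leaves the run make this split unique. Occurrences of $P$ at other phases, or in runs of other roots, are impossible because the synchronization of the periodic prefix with its run end fixes the phase modulo $q$ relative to $a$. The same fact shows that $P$ cannot occur at a position outside $\tau$-runs. The $q$ different phase shifts are handled by rotating which anchor we measure from. Concretely, shorten $U$ by $\delta\in[0,q)$ and restrict $\ell_1\equiv \delta$; equivalently, observe that minimizing over $\ell_1$ in \cref{lem:med:prefix-formula} already covers every starting position. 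So a single prefix family per group suffices, provided distances from the anchor are measured consistently.

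\textbf{Solving.} The compacted tries are needed only implicitly. $\mathcal{T}(\mathcal{F}_1)$ is a path, since the first components are prefixes of one string. For the $V$'s, \cref{lem:med:prefix-max-lcp} only requires them sorted with adjacent LCPs. We obtain these from lexicographic comparison of length-$2^{\sqrt{\log n}}$ fragments by LCE queries, using the packed LCE structure of \cite{DBLP:conf/stoc/KempaK19} (constructible in $\cO(n/\log_\sigma n)$ time), and sort all $\cO(n/\tau)$ of them globally. Then run \cref{lem:med:pairsus-prefix} per group in time linear in its size. Each answer $(\ell_1,\ell_2,(U_R,V_R))$ gives the candidate $S[a-\ell_1\dd a+\ell_2)$. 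We keep the shortest candidate that satisfies the period conditions.

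\textbf{Main obstacle.} The main obstacle is the correctness argument. We must show that uniqueness of $P$ in $S$ is equivalent to uniqueness of the pair in the family. The hardest direction is to rule out occurrences of $P$ that are not represented as pairs, or that are represented with a different split. This requires the periodicity lemma (Fine–Wilf), using $|P[0\dd 3\tau)|\ge 3q$, so that every occurrence of $P$ extends a run with the same Lyndon root and ends at that run's end. Sorting the $V$'s within the $\cO(n/\log_\sigma n)$ budget is the other delicate point. A fallback is to truncate the $V$'s to length $2^{\sqrt{\log n}}$ and bucket by synchronizing positions.
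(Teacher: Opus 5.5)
\textbf{Gap: where you split each run.} You split at $a$, the end of the last full copy of $\lambda$. So $V_R$ starts with the run's tail $\lambda[0\dd\beta_R)$, and only after that comes the letter that breaks period $q$. A \PairSUS solution $(\ell_1,\ell_2)$ with witness $R$ gives a string that actually leaves the run only if $\ell_2>\beta_R$. Its periodic part is long enough only if $\ell_1+\beta_R\ge 3\tau$.

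If $\ell_2\le\beta_R$, the candidate $S[a-\ell_1\dd a+\ell_2)$ is a $q$-periodic fragment inside $R$. Its other occurrences (for instance $q$ positions further left in $R$, or at non-anchor phases of other runs) correspond to no pair. The instance therefore accepts it as unique although it is not.

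This really happens. Suppose at some threshold $\ell_1$ the run $R$ has $\beta_R\ge1$ while every other run in $I_{\ell_1}$ has tail length $0$. Then $\LCP(V_R,V_{R'})=0$ for all of them, so $f(\ell_1)=0$ is attained at $R$. If the remaining runs come in pairs with long equal continuations, it is attained only at $R$, and the output is the purely periodic string $S[a-\ell_1\dd a]$.

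Filtering afterwards by the period conditions does not fix this, because \cref{lem:med:pairsus-prefix} returns a single optimum per group. Discarding a fake optimum discards the whole group, and that group's genuine optimum, which may be the SUS, is never reported. The constraints $\ell_2>\beta_R$ and $\ell_1+\beta_R\ge3\tau$ depend on the witness, so they cannot be imposed inside one prefix family per Lyndon root.

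\textbf{The missing idea is the paper's grouping.} Refine each Lyndon-root class by $\beta$, that is, group runs by their length-$q$ suffix $L$. Then split at the run end: take $U=(S[i\dd j])^R$, the whole run reversed, and $V=S[j+1\dd j+2^{\sqrt{\log n}}]$. Equivalently, shift your anchor by $\beta$ inside each subgroup. With this choice:
\begin{itemize}
\item every solution has $\ell_2\ge1$, so every represented occurrence leaves its run at the same offset;
\item the $U$'s are prefixes of $(L^R)^\infty$, so each group is a prefix family;
\item the only remaining constraint, $\ell_1\ge3\tau$, is the same for the whole family and can be enforced by restricting the thresholds in \cref{lem:med:pairsus-prefix}, which is what the paper's Step~5 is getting at.
\end{itemize}
After this change, your Fine--Wilf argument for the bijection between occurrences of $P$ and qualifying pairs goes through.

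\textbf{Smaller point: sorting.} Sorting the $V$'s with LCE-based comparisons costs $\Theta(N\log N)=\Theta(n\log\sigma)$ in the worst case, which exceeds the $\cO(n/\log_\sigma n)$ budget. The paper does not spell out this step either, but the method you propose does not meet the stated bound.
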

\begin{proof}
    Let $\tau := \lfloor\frac{1}{15}\log_\sigma n\rfloor$. Because $\per(P[0\dd 3\tau - 1]) \le \frac13 \tau$, we know that the SUS starts inside some $\tau$-run, and since the period of the SUS $P$ itself is by hypothesis greater, we also know that it extends beyond said $\tau$-run. We thus only have to search for a SUS around the ends of $\tau$-runs.

\proofsubparagraph{Step 1: Compute $\tau$-runs.}
    Using Lemma 2.8 from \cite{TALG-Charalampopoulos25}, 
    we compute all $\tau$-runs in $S$
    in $\cO(n/\tau)$ total time.
    
\proofsubparagraph{Step 2: Group runs by periodic suffix.}  
For each run $R$ with period $p$, consider its length-$p$ suffix $L$. We group all runs with the same suffix $L$ into a family $\mathcal{S}_L$. Since there are $\cO(n/\tau)$ runs and at most $\sigma^{\tau/3} = n^{1/45}$ distinct length-$p$ suffixes, this step takes $\cO(n/\tau) = \cO(n/\log_\sigma n)$ time in total.

\proofsubparagraph{Step 3: Construct \PairSUS instances.}  
For a run $R$ starting at position $i$ and ending at position $j$, define a string pair
\[
(U, V) := \big((S[i \dd j])^R,\, S[j+1 \dd j + 2^{\sqrt{\log n}}]\big),
\]
and add it to the corresponding family $\mathcal{S}_L$ for the suffix $L$ of $R$.  
The first component $U$ is a prefix of $(L^R)^k$ for some integer $k>0$, and thus each family $\mathcal{S}_L$ is a prefix family. By construction, the SUS must start in one of these runs, so solving \PairSUS on each family captures all candidates.

\proofsubparagraph{Step 4: Solve \PairSUS for each family.}  
By Lemma~\ref{lem:med:pairsus-prefix}, \PairSUS can be solved in linear time with respect to the size of the family. Each family has one string pair per run, so all families together take $\cO(n/\tau) = \cO(n/\log_\sigma n)$ time.

\proofsubparagraph{Step 5: Recover the SUS.}  
The SUS is obtained by taking the minimal-cost solution returned by \PairSUS for any family (if the first part of the solution is shorter than $3\tau$, then it does not have a periodic prefix -- candidate for SUS in the other case is always better). 

\proofsubparagraph{Analysis.}  Correctness follows because the SUS must start within a $\tau$-run and extend beyond it, and all possible runs and their periodic suffixes are considered. All steps---computing $\tau$-runs, grouping by suffix, constructing families, and solving \PairSUS---take $\cO(n/\tau) = \cO(n/\log_\sigma n)$ time.
\end{proof}

\begin{lemma}
\label{lem:med:ab}
Given an instance of \SUS, 
we can compute a SUS~$P$ of $S$ in $\cO(n\log\sigma/\sqrt{\log n})$ time 
if $|P| \in [\frac15 \log_\sigma n, 2^{\sqrt{\log n}}]$ 
and $\per(P[0\dd 3\lfloor\frac{1}{15}\log_\sigma n\rfloor - 1]) > \frac{1}{45}\log_\sigma n$.
\end{lemma}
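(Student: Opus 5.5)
We anchor $P$ at a synchronizing position and reduce the search to one \PairSUS instance on an $(\alpha,\beta)$-family, which we then solve node by node on the wavelet tree of \cref{lem:med:wavelet}.

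Let $\tau := \lfloor\frac{1}{15}\log_\sigma n\rfloor$, and build a $\tau$-synchronizing set $\Sync$ of $S$ of size $\cO(n/\tau)$ in $\cO(n/\log_\sigma n)$ time via \cref{the:sss}.

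\textbf{Anchoring.} Since $\per(P[0\dd 3\tau-1))>\frac13\tau$, the density condition guarantees that $\Sync\cap[i,i+\tau)\neq\emptyset$ for the occurrence $S[i\dd i+|P|)$ of $P$. Let $s$ be the leftmost element of $\Sync$ in $[i,i+\tau)$; it lies inside $P$ because $|P|\ge \frac15\log_\sigma n\ge 3\tau$. By consistency, every occurrence of any string $Q$ with $|Q|\ge 3\tau$ and $\per(Q[0\dd 3\tau-1))>\frac13\tau$ has a synchronizing position at the same offset $\delta<\tau$, namely the offset of the leftmost one. Consequently, $P$ is unique in $S$ if and only if the pair $(P[0\dd\delta)^R, P[\delta\dd |P|))$ is a prefix pair of exactly one anchored pair.

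\textbf{Building the instance.} For each $s\in\Sync$ we form the pair
\[
(U_s,V_s):=\bigl((S[\max(s-\tau,0)\dd s))^R,\; S[s\dd \min(s+2^{\sqrt{\log n}},n))\bigr).
\]
We truncate $U_s$ to stop just after the previous synchronizing position, so that the anchor is indeed the leftmost one. This yields an $(\alpha,\beta)$-family with $N=\cO(n/\tau)$, $\alpha=\tau$ (which is $\cO(\log N)$ in characters), and $\beta=2^{\sqrt{\log n}}$, so $\log\beta=\sqrt{\log n}=o(\log N)$. Some care is needed to restrict attention to anchors whose pair genuinely encodes the constraint $\per>\frac13\tau$. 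One option is to also include positions lying in $\tau$-runs and to let \cref{lem:med:prefix} handle those; another is to observe that a false candidate is never shorter than the optimum. I would argue the latter: any minimal pair found is witnessed by a genuine unique substring, possibly of a different shape, and that substring is a valid upper bound. The sets $\mathcal{F}_1,\mathcal{F}_2$ and their compacted tries are built as in \cite{TALG-Charalampopoulos25}: sort the short reversed strings by packed integers, and build the tries for the long $V_s$ using LCE queries on the packed string.

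\textbf{Solving the instance.} The SUS length equals $\min_{\ell_1}(\ell_1+\ell_2)$, where the minimum is taken over nodes $v$ of $\mathcal{T}(\mathcal{F}_1)$ at string depth $\ell_1$ and $\ell_2$ is the shortest unique prefix length among the $V$-strings whose $U$ descends from $v$. By \cref{lem:med:wavelet} we obtain, in BFS order, the packed lists $\L_v$ for every wavelet-tree node $v$. For each $v$ we apply \cref{lem:med:lcps} to $\L_v$, adding the string depth of the corresponding trie node. Done naively this costs $\cO(|\L_v|)$ per node, which totals $\cO(N(\alpha+\log N))$ and is too slow.

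\textbf{Main obstacle.} The crux is to run \cref{lem:med:lcps} in time $\cO(|\L_v|\log\beta/\log N)$ on the packed list. The quantity needed is $\min_i \max\{L[i],L[i+1]\}$, subject to the condition $M[i]<|\mathcal S[i]|$. Each entry takes $\cO(\log\beta)$ bits, so $\Theta(\log N/\log\beta)$ entries fit into a fraction of a word, and I would precompute a lookup table over such chunks, of size $n^{o(1)}$, returning the local minimum together with the boundary entries. The condition $M[i]<|\mathcal S[i]|$ matters only for $V$'s truncated at the string end. Those are $\cO(1)$ per text end, or else $|V|=\beta$ exceeds every LCP value within the length range, so they can be patched separately. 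Summing over the tree gives $\cO(N(\alpha+\log N)\log\beta/\log N)=\cO(n\log\sigma/\sqrt{\log n})$, and the construction cost $\cO(N(\alpha+\log N)/\sqrt{\log N})$ matches this bound. Finally, we return the best candidate, recovering its position from the unique index attaining the minimum.
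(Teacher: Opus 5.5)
Your reduction is the paper's: the same $\tau$, anchored pairs truncated at the previous synchronizing position (the paper's $\mathrm{pred}$) forming a $(\tau,\beta)$-family, and \cref{lem:med:wavelet} followed by \cref{lem:med:lcps} at every node. You are right that \cref{lem:med:lcps} must be run word-parallel on the packed $\L_v$. The paper's proof only asserts that each list is processed in time linear in its length, which gives the bound only if this means packed size. Your tabulation over $\frac12\log N$-bit chunks supplies the missing step (the table then has $\cO(\sqrt N)$ entries, not $n^{o(1)}$). Two small points on the setup. Sorting the $V_s$ must use the $\cO(|\Sync|)$-time suffix sorting at synchronizing positions of Kempa--Kociumaka, as the paper does, since LCE-based comparison sorting costs $\cO(N\log N)=\cO(n\log\sigma)$. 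Also, there can be up to $\beta$ truncated $V_s$ near the end of $S$, not $\cO(1)$, though this is still negligible.

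The genuine gap is the step that disposes of false candidates. You argue that every minimal pair is witnessed by a genuine unique substring of a different shape, which is a valid upper bound, but this does not give the inequality you need. For a feasible $(s,\ell_1,\ell_2)$ the natural witness is $S[s-\ell_1\dd s+2\tau)$. It is unique, but its length $\ell_1+2\tau$ can exceed the reported length $\ell_1+\ell_2$. So nothing stops the reported value from being smaller than the length of a SUS, and this does happen. Take $\sigma=2$ and $2^{15}\le n<2^{30}$, so $\tau=1$; density then forces $\Sync=[0,n-2]$ and every $U_s$ is empty. For $S=0^a10^b1$, the only anchored $V_s$ starting with $1$ is at position $a$, since $n-1\notin\Sync$. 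The instance therefore reports the string $1$ of length $1$, which occurs twice, while the SUS $10$ has length $2$. Because \cref{the:sus} runs this routine unconditionally and returns the global minimum, this non-unique output would be selected.

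What is true, and what you should prove instead, is this: if $(s,\ell_1,\ell_2)$ is feasible and $\ell_2\ge 2\tau$, then $R=S[s-\ell_1\dd s+\ell_2)$ is unique. For every offset $q\in[0,\ell_1]$, the window $R[q\dd q+2\tau)$ lies inside $R$. Consistency therefore transfers to any other occurrence of $R$ both the anchor at offset $\ell_1$ and the absence of anchors at offsets $[0,\ell_1)$. That yields a second pair sharing the prefix pair $(\ell_1,\ell_2)$, contradicting feasibility. So report $\ell_1+\max(\ell_2,2\tau)$ at each node. Taking the maximum with $2\tau$ commutes with the minimum in \cref{lem:med:lcps}. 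It is legal because $|V_s|\ge 2\tau$ for every $s\le n-2\tau$. It does not affect the sought $P$, whose right part has length $|P|-\delta\ge 2\tau$. This is the analogue of $k=2\tau$ in \cref{lem:long:two-trees-reduction}. The paper's own proof is equally terse here: it declares the recovered substring unique ``by the same reasoning'' without this length condition.
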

\begin{proof}
    Let $\tau := \lfloor\frac{1}{15}\log_\sigma n \rfloor$, $\beta := \lfloor 2^{\sqrt{\log n}} \rfloor$, and let $\Sync$ be a $\tau$-synchronizing set of $S$ computed using \cref{the:sss} in $\cO(n/\log_\sigma n)$ time.
    For each $i \in \Sync$, define $\text{pred}(i) = \max (\{j + 1 \mid j \in \Sync, j < i\} \cup \{i - \tau + 1, 0\})$.
    We construct a multiset~$\mathcal{S}$ of string pairs containing, for each $i \in \Sync$, the pair $(S[\text{pred}(i)\dd i-1]^R,S[i\dd \min\{i+\beta,n\}))$.
    Note that $\mathcal{S}$ is a $(\tau,\beta)$-family: the first components have length at most $\tau$ and the second components have length at most $\beta$.

    Let $N := |\Sync| = \cO(n/\log_\sigma n)$.
    We construct the compacted trie $\mathcal{T}_1$ of the first components of $\mathcal{S}$ in the following way.
    Since all first components have length at most $\tau$, we can sort them in $\cO(N + \sigma^\tau) = \cO(n/\log_\sigma n + n^\frac{1}{15}) =\cO(n/\log_\sigma n)$ time using bucket sort.
    Then, using an LCE data structure on~$S$, which can be constructed in $\cO(n/\log_\sigma n)$ time~\cite{DBLP:conf/stoc/KempaK19}, we can compute the LCP values between consecutive components; with this information at hand, we can construct $\mathcal{T}_1$ in $\cO(N)$ time~\cite{DBLP:conf/cpm/KasaiLAAP01}.
    The compacted trie $\mathcal{T}_2$ of the second components
    can be constructed in the same way after sorting
    the suffixes starting at the positions in $\Sync$ using~\cite[Theorem 4.3]{DBLP:conf/stoc/KempaK19} in $\cO(|\Sync|) = \cO(n/\log_\sigma n)$ time.
    The set $\mathcal{S}$ and the two tries $\mathcal{T}_1$ and $\mathcal{T}_2$ form an instance of the \PairSUS problem.

    \proofsubparagraph{Correctness.}  We first show that solving this \PairSUS instance yields the desired SUS.
    Suppose that there exists a substring $R$ of $S$ of length $\ell \in [3\tau,\beta]$ and period $\per(R[0\dd 3\tau - 1]) > \frac13 \tau$. Then, by the synchronizing property, there exists some $j \in [0,\tau)$ such that every occurrence of $R$ has an anchor from $\Sync$ at its $j$-th position.
    We can decompose $R$ as $R= R_1\cdot R_2$, where $|R_1|=j$. There is a bijection between occurrences of $R$ in $S$ and the pairs $(P_1,P_2) \in \mathcal{S}$ such that $R_1^R$ is a prefix of $P_1$ and $R_2$ is a prefix of $P_2$. If $R$ is unique in $S$, there is exactly one such pair $(P_1,P_2)$. For any other pair $(Q_1,Q_2) \in \mathcal{S}$, it must hold that $\LCP(P_1,Q_1) < |R_1|$ or $\LCP(P_2, Q_2) < |R_2|$. Thus, the minimum $\ell_1+\ell_2 = |R_1|+|R_2|$ defines a feasible solution to \PairSUS. Conversely, if the optimal  \PairSUS solution is the pair $(P_1,P_2)$ with lengths $\ell_1$, $\ell_2$, we locate the anchor corresponding to $(P_1,P_2)$ in $S$, and output the substring of length $(\ell_1+\ell_2)$ surrounding it; this substring must be unique by the same reasoning.

   \proofsubparagraph{Construction and analysis.} We apply \cref{lem:med:wavelet} to construct a wavelet tree for the first components of $\mathcal R$ (the list obtained by sorting $\mathcal S$ according to the lexicographical order of the second components) and to compute the corresponding LCP lists for all nodes. Each node corresponds to a common prefix of length $\ell_1$ for a subset of the  first components. Within each node, we invoke \cref{lem:med:lcps} to find the shortest unique suffix extension of length $\ell_2$; concatenating  the two parts yields a candidate SUS.

We now analyze the running time of the described algorithm.
Constructing the wavelet tree takes $\cO(N(\tau+\log N)/\sqrt{\log N}) = \cO(n\log \sigma/\sqrt{\log n})$ time, and computing the LCP lists takes $\cO(N+N\tau \log \beta / \log N) = \cO(n/\sqrt{\log n})$ time, by \cref{lem:med:wavelet}. Processing  each list to extract SUS candidate takes linear time in its  length, so the overall running time of the algorithm is $\cO(n\log \sigma/\sqrt{\log n})$.
\end{proof}

\begin{corollary}
    \label{lem:med}
    Given an instance of \SUS, 
    we can compute a SUS $P$ of $S$ in $\cO(n\log\sigma/\sqrt{\log n})$ time 
    if it has length $|P| \in [\frac15 \log_\sigma n, 2^{\sqrt{\log n}}]$ 
    and period $p > \frac{1}{45}\log_\sigma n$.
\end{corollary}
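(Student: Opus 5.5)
The plan is to derive the corollary by a case split on the length-$3\tau$ prefix of the sought SUS $P$, where $\tau := \lfloor\frac{1}{15}\log_\sigma n\rfloor$, and to invoke \cref{lem:med:prefix} and \cref{lem:med:ab} on the two cases. Let $P$ be a SUS with $|P| \in [\frac15 \log_\sigma n, 2^{\sqrt{\log n}}]$ and $\per(P) > \frac{1}{45}\log_\sigma n$. First observe that $3\tau \le \frac15\log_\sigma n \le |P|$, so the prefix $P[0\dd 3\tau-1]$ is well defined. Exactly one of two cases holds.
\begin{enumerate}
\item $\per(P[0\dd 3\tau-1]) \le \frac{1}{45}\log_\sigma n$. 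Together with $\per(P) > \frac{1}{45}\log_\sigma n$, these are precisely the hypotheses of \cref{lem:med:prefix}, which finds such a SUS in $\cO(n/\log_\sigma n)$ time.
\item $\per(P[0\dd 3\tau-1]) > \frac{1}{45}\log_\sigma n$. These are the hypotheses of \cref{lem:med:ab}, which finds such a SUS in $\cO(n\log\sigma/\sqrt{\log n})$ time.
\end{enumerate}

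The algorithm runs both procedures and returns the shorter of the candidates they produce, after checking that each candidate is a genuine unique substring. The total time is $\cO(n/\log_\sigma n + n\log\sigma/\sqrt{\log n}) = \cO(n\log\sigma/\sqrt{\log n})$, since $n/\log_\sigma n = n\log\sigma/\log n$.

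The one delicate point is that the returned candidate must be a true SUS, not merely the output of a subroutine whose hypotheses might not apply. Both procedures only ever report substrings that are unique in $S$:
\begin{itemize}
\item in \cref{lem:med:ab}, by the converse direction of its correctness argument;
\item in \cref{lem:med:prefix}, because any feasible \PairSUS solution yields a unique substring around a run end.
\end{itemize}
Hence every reported candidate has length at least the SUS length. The procedure whose case actually holds for $P$ reports a candidate of length $|P|$. The minimum over the two outputs is therefore a SUS whenever a SUS with the stated length and period exists.
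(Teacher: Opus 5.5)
Your proposal is correct and matches the paper's proof: with $\tau=\lfloor\frac1{15}\log_\sigma n\rfloor$, you split on whether $\per(P[0\dd 3\tau-1])\le\frac1{45}\log_\sigma n$ and invoke \cref{lem:med:prefix} or \cref{lem:med:ab} accordingly (the paper writes the threshold as $\frac13\tau$, which is equivalent for integer periods). You are more explicit than the paper about combining the two outputs, and your uniqueness check, doable in $\cO(n/\log_\sigma n)$ time by packed pattern matching, already suffices for that; however, your additional claim that the procedures never report non-unique strings is too strong as stated, because it holds only for candidates with $\ell_1\ge 3\tau$ around run ends, resp.\ $\ell_2\ge 2\tau$ after anchors, since shorter candidates can have occurrences not represented in the \PairSUS family.
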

\begin{proof}
Let $\tau=\lfloor\frac1{15}\log_\sigma n\rfloor$. Either $\per(P[0\dd 3\tau-1])\le \frac13\tau$
and \cref{lem:med:prefix} applies, or
$\per(P[0\dd 3\tau-1])> \frac13\tau$
and \cref{lem:med:ab} applies.
\end{proof}

\subsection{SUS as a Skyline Problem}\label{sec:skyline}

In this section, we introduce and solve a simple geometric problem that serves as a subroutine for computing a SUS in both the long aperiodic and the periodic cases.

\begin{definition}[Domination in $\mathbb{Z}_{\ge 0}^2$]
    Let $p_1=(x_1,y_1)$ and $p_2=(x_2,y_2)$ be two points in~$\mathbb{Z}_{\ge 0}^2$. We say that $p_1$ is \emph{dominated by} $p_2$ if $x_1 \le x_2$ and $y_1 \le y_2$.
\end{definition}

\begin{definition}[Shadow]
    Let $P$ be a multiset of points in $\mathbb{Z}_{\ge0}^2$. The \emph{shadow} of a point $p \in P$ is the set of points in $\mathbb{Z}_{\ge0}^2$ that are dominated by $p$ but not dominated by any $p' \in P \setminus \{p\}$.
\end{definition}

\begin{definition}[Skyline]
    Let $P$ be a multiset of points in $\mathbb{Z}_{\ge0}^2$. The \emph{primary skyline} of $P$ is the union of the shadows of all points in $P$.
\end{definition}

\begin{example}
    The primary skyline of $\{(7,3), (7,3)\}$ is the empty set.
\end{example}

In the following, we formalize the \SkylineProblem problem, explain its relevance to finding a SUS of $S$, and present a linear-time solution; see \Cref{fig:MSP} for an example.

\defproblem{\SkylineProblem}{A multiset $P$ of points in $\mathbb{Z}_{\ge0}^2$.}{A point $(x,y)\in\mathbb{Z}_{\ge0}^2$ that lies in the primary skyline of $P$ and minimizes $x + y$, if one exists.}

\begin{figure}[htpb!]
    \centering
    \begin{subfigure}{0.22\textwidth}
        \centering
        \begin{tikzpicture}[x=5mm, y=5mm]
            \def\PicSize{6};
            \def\Points{6/1, 5/2, 2/1, 3/6, 1/5};
            
            \foreach \x/\y in \Points {
                \node at (\x, \y) {$\bullet$};
            }
            \draw[->] (0, 0) -> (\PicSize + 1, 0);
            \draw[->] (0, 0) -> (0, \PicSize + 1);
            \foreach\i in {1,...,\PicSize} {
                \node[anchor=east] at (0, \i) {$\i$};
                \node[anchor=north] at (\i, 0) {$\i$};
            }
        \end{tikzpicture}
    \end{subfigure}
    \hspace{1cm}
    \begin{subfigure}{0.22\textwidth}
        \centering
        \begin{tikzpicture}[x=5mm, y=5mm]
            \def\PicSize{6};
            \def\Points{6/1, 5/2, 2/1, 3/6, 1/5};

            \filldraw[fill=red!8]
                (6, 0) -- (6, 1) -- (5, 1) -- (5, 2) -- (3, 2) -- (3, 6) -- (0, 6) -- (0,0) -- cycle;

            \filldraw[fill=red!22]
                (0, 0) -- (5, 0) -- (5, 1) -- (3, 1) -- (3, 2) -- (1, 2) -- (1, 5) -- (0, 5);
            
            \foreach \x/\y in \Points {
                \node at (\x, \y) {$\bullet$};
            }

            \draw[->] (0, 0) -> (\PicSize + 1, 0);
            \draw[->] (0, 0) -> (0, \PicSize + 1);
            \foreach\i in {1,...,\PicSize} {
                \node[anchor=east] at (0, \i) {$\i$};
                \node[anchor=north] at (\i, 0) {$\i$};
            }

			\draw (2,3) +(0, 4pt) -- +(-4pt, 0) -- +(0, -4pt) -- +(4pt, 0) -- +(0, 4pt);
        \end{tikzpicture}
    \end{subfigure}
    \caption{A set $P$ (left) and its primary skyline (right).
    The union of the shadows of the points in $P$ consists of all points within the shaded polygon but outside its heavily shaded part---the points in the heavily shaded part are dominated by multiple points from $P$.
    For this instance of \SkylineProblem, the output would be point $(2,3)$, shown with a rhombus.}\label{fig:MSP}
\end{figure}

\begin{lemma}
    \label{lem:skyline-alg}
    Any instance of \SkylineProblem can be solved in $\cO(|P|)$ time if the points in $P$ are given as a list sorted with respect to one of the two coordinates.
\end{lemma}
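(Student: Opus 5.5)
Assume without loss of generality that $P$ is sorted by $x$-coordinate in non-increasing order (the other case is symmetric after swapping coordinates). I will first characterize the primary skyline. A point $q=(x,y)$ lies in the shadow of $p\in P$ if and only if $p$ dominates $q$ and no other point of $P$ does, i.e., exactly one element of the multiset dominates $q$. So $q$ is in the primary skyline if and only if the number of points of $P$ dominating $q$ equals one. The task is therefore to find a point with domination count exactly $1$ that minimizes $x+y$.

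Next I will show that an optimal $q$ can be taken at a very restricted location. Suppose $q=(x,y)$ is dominated only by $p=(p_x,p_y)$. Decreasing $x$ or $y$ can only enlarge the set of dominators, so minimality pushes $q$ towards a ``corner'': $x$ is either $0$ or one more than the $x$-coordinate of some other point, and the same holds for $y$. I will make this concrete by fixing the unique dominator $p$. Any other point $p'$ with $p'_y\ge y$ must satisfy $p'_x<x$. Hence, writing $X_p(y)=\max\{p'_x+1 : p'\ne p,\ p'_y\ge y\}$ (or $0$ if the set is empty), the smallest valid $x$ for a given $y\le p_y$ is $X_p(y)$. Feasibility additionally requires $X_p(y)\le p_x$. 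The objective becomes $\min_{p}\min_{y\le p_y} \bigl(y+X_p(y)\bigr)$ subject to $X_p(y)\le p_x$. Since $X_p$ is a non-increasing step function of $y$, it suffices to take $y\in\{0\}\cup\{p'_y+1\}$ at the breakpoints.

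To implement this in linear time, I will sweep the points in the given order of non-increasing $x$. For the global staircase (ignoring the exclusion $p'\ne p$), the optimal $q$ for a count of exactly one is determined locally: along the sweep I will maintain the maximal points (the staircase) of the points seen so far. The key observation is that $q$ has exactly one dominator $p$ only if $p$ is on the staircase (the Pareto front) and is not duplicated. The candidates are then the ``inner corners'' between $p$ and its staircase neighbours, using the second staircase (points dominated by at most one other point) to bound how far $q$ can move. Concretely, for a staircase point $p$, let $a$ be the largest $x$-coordinate among other points with $y$-coordinate $\ge$ the candidate $y$, and symmetrically. The optimal $q$ for $p$ has $q_x=1+\max\{p'_x: p'\neq p, p'_y\ge q_y\}$ with $q_y$ chosen from $0$ or $1+$ the $y$ of a point that is $x$-larger. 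Only the neighbouring points on the first and second staircase levels matter, which gives $\cO(1)$ candidates per point.

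The main obstacle is this last step: proving that only $\cO(1)$ candidate corners per point of $P$ need to be checked, and computing the two-level staircase (points with at most one dominator in the strict sense, handling duplicates, which kill their shadows) in a single linear pass over the sorted list. I will first try to compute, during the sweep, the running maximum and second maximum of $y$ among processed points (these are exactly what $X_p$ needs when scanning from large $x$). For each point $p$ and for $y$ just above the current maximum excluding $p$, I will evaluate the corner $(x,y)$ and take the minimum of $x+y$. A careful tie-handling argument for equal coordinates is the delicate part.
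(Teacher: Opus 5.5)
You have the right ingredients. The skyline is exactly the set of points with domination count one, and the unique dominator must be a non-duplicated maximal point. Your corner property ($x=0$, or $x-1$ is the $x$-coordinate of some point of $P$) is precisely the first claim in the paper's proof. It follows in one line: $(x-1,y)$ is not in the skyline by minimality, yet is dominated by $p$, so some other point dominates it but not $(x,y)$, which forces that point's $x$-coordinate to be $x-1$.

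However, the proposal does not contain a linear-time algorithm with a proof. You leave the decisive step as ``the main obstacle'', and the route you sketch toward it does not work as stated.

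Parametrizing by the dominator $p$ and by $y$ forces you to enumerate the corners of each individual shadow, and a single maximal point can have $\Theta(|P|)$ of them. Take $P=\{(m,m)\}\cup\{(m-1-k,1+k) : 0\le k\le m-2\}$. Every point other than $(m,m)$ is dominated only by $(m,m)$, and the shadow of $(m,m)$ has the pairwise incomparable minimal points $(m,0)$, $(m-1-k,k+2)$ for $0\le k\le m-3$, and $(0,m)$. So ``only the neighbouring points matter, $\cO(1)$ candidates per staircase point'' is false. An amortized version would require building the second staircase and assigning its points to their dominators in one pass, which you have not done.

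Moreover, $X_p(y)$ is a maximum of $x$-coordinates over points with large $y$-coordinate. The running top-two $y$-values of a sweep in decreasing $x$ do not provide it. In your last paragraph the $x$-coordinate of the evaluated corner is never specified.

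The missing idea is to parametrize by the candidate $x$-coordinate rather than by $p$. Fix $x^\star$ and let $P'=\{q\in P : q_x\ge x^\star\}$. The dominators of $(x^\star,y)$ are exactly the points of $P'$ with $y$-coordinate at least $y$. Let $y_1\ge y_2$ be the two largest $y$-values in $P'$, counted with multiplicity. The smallest feasible $y$ is then:
\begin{itemize}
\item $0$ if $|P'|=1$;
\item $y_2+1$ if $|P'|\ge 2$ and $y_1>y_2$;
\item nonexistent otherwise.
\end{itemize}
This is the paper's second claim.

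Combined with your corner property, it suffices to test $x^\star=0$ and $x^\star=x_i+1$ for every distinct $x$-coordinate $x_i$. Scan from right to left, evaluate $x^\star=x_i+1$ before inserting the points with $x$-coordinate $x_i$, and maintain $y_1,y_2$ in $\cO(1)$ time per point. No staircases or per-point corner enumeration are needed. The tie handling reduces to counting duplicates when maintaining $y_1$ and $y_2$: equal top values, including duplicated points, make the candidate infeasible.
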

\begin{proof}
    We first establish two claims.

    \begin{claim}\label{lem:skyline-alg:hor}
        Suppose that the primary skyline is nonempty and let $(x^\star,y^\star)$ be a point in the primary skyline with minimal sum of coordinates. Then, either $x^\star=0$ or there exists a point $(x,y) \in P$ with $x = x^\star - 1$.
    \end{claim}
    \begin{proof}
       Assume, toward a contradiction, that the claim is false; that is, that we have $x^\star > 0$ and there is no point $(x,y) \in P$ with $x = x^\star - 1$. 
       Let $p^\star = (x^\star, y^\star)$.
       By the definition of the primary skyline, $p^\star$ must be dominated by exactly one point in $P$.
       Now consider the point $p_{\text{left}} = (x^\star - 1, y^\star)$, which, due to the minimality of the coordinate-wise sum of $p^\star$, must be dominated by at least two points in $P$. Therefore, there exists a point $p' \in P$ that dominates $p_{\text{left}}$ and does not dominate $p^\star$.
       Let $p' =(x',y')$. Since $p'$ does not dominate $p^\star$, we must have either $x' < x^\star$ or $y' < y^\star$. We obtain a contradiction in each case:
       \begin{itemize}
        \item If $x' < x^\star$, then by our assumption that no point $(x,y) \in P$ has $x=x^\star - 1$, we must have $x' \le x^\star - 2$. Hence, $x' < x^\star - 1$, and thus $p'$ does not dominate $p_{\text{left}}$, a contradiction.
        \item If $y' < y^\star$, then $p'$ clearly does not dominate $p_{\text{left}}$, a contradiction.\qedhere
        \end{itemize}
    \end{proof}
    
    \begin{claim}
        \label{lem:skyline-alg:ver}
        Suppose that the primary skyline is nonempty and let $(x^\star,y^\star)$ be a point in the primary skyline with minimal sum of coordinates.
        Let $P' = \{(x,y) \in P \mid x \ge x^\star \}$ and let $p_{\text{unique}}$ be the unique point of $P'$ that dominates $(x^\star,y^\star)$.
        If $|P'| = 1$, we have $y^\star = 0$.
        Else, we have $y^\star = y' + 1$, where $y'$ is the maximum $y$-coordinate of a point in $P' \setminus \{p_{\text{unique}}\}$. 
    \end{claim}
    \begin{proof}

         By definition, $p^\star:=(x^\star,y^\star)$ is dominated by exactly one point, namely $p_{\text{unique}}$.
         Since points are only dominated by points weakly to their right, we have that $p_{\text{unique}}$ must be in~$P'$.
         We distinguish between two cases:
         
         \proofsubparagraph{Case 1: $|P'| = 1$.}
         We have $P' = \{p_{\text{unique}}\}$.
         If $y^\star > 0$, then the point $p_{\text{down}} = (x^\star, y^\star - 1)$ is in $\mathbb{Z}_{\ge0}^2$ and the only point of $P$ that dominates it is $p_{\text{unique}}$.
         Thus, $p_{\text{down}}$ lies in the primary skyline, but its coordinate sum $x^\star + y^\star - 1$ is smaller than that of $p^\star$, contradicting the minimality of $p^\star$.
         Hence, $y^\star = 0$.
         
        \proofsubparagraph{Case 2: $|P'| \ge 2$.}
        Let $P_{\text{other}} = P' \setminus \{p_{\text{unique}}\}$, and let $y'$ be the maximum $y$-coordinate of a point in $P_{\text{other}}$. Since~$p^\star$ is dominated uniquely by $p_{\text{unique}}$, every point $(x'',y'') \in P_{\text{other}}$ satisfies $y'' < y^\star$. Hence, $y' < y^\star$. Conversely, since $p_{\text{unique}}$ dominates $p^\star$, its $y$-coordinate is at least $y^\star$. Therefore, the point $(x^\star,y^\star)$ is dominated by exactly one point in $P'$ if and only if $y' < y^\star$. By the minimality of $p^\star$, we have $y^\star = y' + 1$.
    \end{proof}

    \proofsubparagraph{Algorithm.}
    We assume, without loss of generality, that the points in $P$ are sorted by $x$-coordinate in non-decreasing order. We scan $P$ from right to left, while maintaining $y_{\max}$ and $y'$, the two largest $y$-coordinates encountered among processed points (where $y_{\max}\geq y'$). We initialize $y_{\max}$ and $y'$ to $-1$, indicating that no processed point has contributed a valid $y$-coordinate yet.
    
    We iterate through $P$ by grouping points with the same $x$-coordinate. Let $x_i$ denote the current $x$-coordinate of the group currently being processed, and let $P_{\text{curr}}=\{(x,y)\in P \mid  x=x_i\}$ be the set of points at this coordinate.
    Before processing the points in $P_{\text{curr}}$, we evaluate the candidate $x$-coordinate  $x^\star= x_i +1$. Because we have scanned from right to left, all points $(x,y)\in P$ such that $x\geq x^\star$ have already been processed. Let $P'$ denote this set of previously processed points. We determine the corresponding $y^\star$ following \cref{lem:skyline-alg:ver}:
    \begin{itemize}
        \item If $|P'| = 1$, we set $y^\star = 0$.
        \item If $|P'| \ge 2$ and $y_{\max} > y'$, we set $y^\star = y' + 1$. 
        \item Otherwise (if $|P'|=0$ or $y_{\max} = y'$), the primary skyline does not intersect the set $\{(x^\star,y) \mid y \in \mathbb{Z}_{\ge0}\}$, so  we skip this candidate.
    \end{itemize}
    After evaluating $y^\star$, we update $y_{\max}$ and $y'$ using the $y$-values in $P_{\text{curr}}$ so that they remain the two largest $y$-coordinates among all processed points.
    
    Finally, after all points in $P$ have been processed, we perform a last check for the candidate $x^\star=0$ using the final values of $y_{\max}$ and $y'$. The algorithm maintains the candidate $(x^\star, y^\star)$ that minimizes $x^\star + y^\star$ and returns it as a witness. Each point of $P$ is processed exactly once using $\cO(1)$ simple operations, thus the algorithm runs in $\cO(|P|)$ time.
\end{proof}

\subparagraph{Intuition for application to SUS.}
In the long aperiodic case (\Cref{sec:long}) and in the periodic case (\Cref{sec:periodic}), we show that a SUS of $S$ can be found by solving several instances of the \SkylineProblem problem. The core idea is to transform the SUS problem into a geometric one. Let us consider one such instance. We first identify a set $\mathcal{S}$ of carefully-selected substrings of $S$, each anchored
around a \emph{common fragment} $F=S[i \dd j]$. Each substring $U \in \mathcal{S}$ is then represented by an integer point $(x,y)$, where $x$ and $y$ are the lengths of the extensions of $U$ to the left and the right, respectively, relative to the common fragment $F$.

\begin{example}
    Let $S=\texttt{abracadabra}$, $F=S[4\dd 6]=\texttt{cad}$, and $\mathcal{S}=\{\texttt{bracada},\texttt{cadabra}\}$. The substrings are represented as points by the lengths of their extensions relative to $F$:\\  $\texttt{bracada}$ $\mapsto (3,1)$ (since \texttt{bra} is of length 3 and \texttt{a} is of length 1),  and  $\texttt{cadabra}$ $\mapsto (0,4)$ (since the left extension is empty and \texttt{abra} is of length 4).
\end{example}

Since strings in $\mathcal{S}$ are substrings of $S$ anchored around a common fragment, each point maps to one substring of $S$.
Then, a string $U \in \mathcal{S}$ is a substring of string $V \in \mathcal{S}$ if and only if the point representing~$U$ is dominated by the point representing $V$. To find a candidate SUS in $\mathcal{S}$, we look for a minimal extension anchored around~$F$ that is contained within exactly one substring from $\mathcal{S}$. This is equivalent to solving a \SkylineProblem instance: finding an integer point $(x^\star,y^\star)$ that lies in the primary skyline of the points representing $\mathcal{S}$ that minimizes
$x^\star+y^\star$. The total length of the resulting SUS is then $|F| + x^\star + y^\star$. 

\subsection{Long Aperiodic Case}\label{sec:long}

This section addresses the computation of a SUS in the case when
$\ell \ge \log^4 n$ and $p > \frac{1}{9}\log^4 n$.
Let us start with a high-level overview of our solution for this case. We construct a string synchronizing set consisting of \emph{anchor} positions in $S$ that enable the identification of identical sufficiently long aperiodic patterns. These anchors are then used to construct two tries whose root-to-leaf paths represent occurrences of these patterns. SUSs are found by locating the shortest paths that occur only once. We achieve this by formulating several instances of the \SkylineProblem problem based on these tries.

We apply \cref{the:sss} on $S$ with $\tau := \lfloor\frac13 \log^4 n \rfloor$ and denote the resulting $\tau$-synchronizing set by $\Sync$. The following proposition 
is crucial: if a sufficiently long, aperiodic pattern occurs twice in the string, then every anchor within one occurrence must have a corresponding anchor at the same relative position in the other occurrence. Therefore, to find a unique pattern, we must locate an anchor whose surrounding fragments---those immediately preceding and succeeding it---do not occur around any other anchor and have minimal total length.

\begin{prop}[follows by \Cref{def:SSS}]\label{prop:sss}
   Suppose that a substring $X$ of $S$ with length $|X|=m \ge 3\tau$ and period $\per(X) > \frac13 \tau$ occurs at distinct positions $i$ and $j$ in $S$.
   Then for all $q \in [0, m - 2\tau]$, we have $i + q \in \Sync$ if and only if $j + q \in \Sync$.
\end{prop}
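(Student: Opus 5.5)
The proposition follows from the consistency condition of \cref{def:SSS} alone. Density is not needed, and neither is the aperiodicity hypothesis; that hypothesis matters only in later uses, where it guarantees that such anchors exist. The plan is to fix $q \in [0, m-2\tau]$ and compare the two windows of length $2\tau$ that start at $i+q$ and at $j+q$.

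\textbf{Step 1: the two windows are equal.} Since $q + 2\tau \le m$, the window $[q, q+2\tau)$ lies inside $X$. Because $X$ occurs at both $i$ and $j$, we get
\[ S[i+q \dd i+q+2\tau) = X[q \dd q+2\tau) = S[j+q \dd j+q+2\tau). \]

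\textbf{Step 2: both positions lie in the range covered by consistency.} Consistency applies to positions in $[0, n-2\tau]$. Because the occurrence of $X$ at $i$ fits inside $S$, we have $i + m \le n$. Hence $i+q \le i + m - 2\tau \le n - 2\tau$, and the same bound holds for $j+q$.

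\textbf{Step 3: apply consistency.} With equal length-$2\tau$ windows at $i+q$ and $j+q$, consistency gives $i+q \in \Sync$ if and only if $j+q \in \Sync$. Since $q$ was arbitrary in $[0, m-2\tau]$, this proves the claim.

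No step is a real obstacle. The only point that needs care is the index bookkeeping in Step 2, namely confirming that every window considered stays within $[0, n-2\tau]$ so that consistency applies.
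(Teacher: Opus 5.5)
Your proof is correct and matches the paper's route: the paper gives no separate argument and only says the proposition follows from \cref{def:SSS}, i.e., from the consistency condition applied to the equal length-$2\tau$ windows at $i+q$ and $j+q$, which is what you do, including the index check. You are also right that neither the density condition nor the aperiodicity hypothesis is used for this particular statement.
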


   We construct two tries $T$ and $T^R$. For every anchor position $i \in \Sync$, we insert $S[i\dd n)$ into $T$ and $S[0\dd i)^R$ into $T^R$. In both tries, the leaves are labeled with their corresponding anchor positions. For efficiency, we implement them as \emph{compacted} tries. For any explicit node $v$, let $L(v)$ denote the set of leaf labels in the subtree rooted at $v$; recall that $\str(v)$ denotes the concatenation of the edge labels from the root to $v$ and $\sd(v)$ denotes the string depth of $v$. 
   Note that these definitions extend naturally to implicit nodes.
   If an occurrence of a pattern~$P$ contains an anchor $i \in \Sync$, then there exist nodes $u$ in $T^R$ and $v$ in $T$ such that $P = \str(u)^R\cdot \str(v)$ and $i \in L(u) \cap L(v)$. Moreover, the size $|L(u)\cap L(v)|$ equals the number of occurrences of $P$, provided that $P$ is sufficiently long and aperiodic. We formalize our task as follows; for convenience, for the remainder of this section, we assume that the parent of the root node is itself.
   
    \defproblem{\TwoTreesProblem}{Two rooted trees $T_1$ and $T_2$, each with size at most $N$ and with leaves uniquely labeled from $[0,N)$, a weight function $w$ with range $\mathbb{Z}_{\geq 0}$ where $w(u) > w(v)$ for every strict ancestor~$v$ of every node $u$, and an integer~$k$.}{A pair $(u,v)$ of nodes $u$ in $T_1$ and $v$ in $T_2$ minimizing $w(\textsf{parent}(u)) + \max(w(\textsf{parent}(v)),k-1)$, subject to $|L(u)\cap L(v)|=1$ and $w(v)\ge k$ (if they exist).}

\begin{lemma}
    \label{lem:long:two-trees-reduction}
    Consider a \SUS instance and let $\tau := \lfloor \frac13 \log^4 n \rfloor$. If $S$ has a SUS of length $\ell \ge 3\tau$ and period $p > \frac13 \tau$, we can reduce its computation in $\cO(n/\log_\sigma n)$ time to an instance of \TwoTreesProblem with $N =\cO(n/\tau)$ and $k = 2\tau$.
\end{lemma}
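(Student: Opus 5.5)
We plan to build the instance directly from the $\tau$-synchronizing set. First, we apply \cref{the:sss} with $\tau=\lfloor\frac13\log^4 n\rfloor$ and obtain $\Sync$ with $|\Sync|=\cO(n/\tau)$. The preprocessing costs $\cO(n/\log_\sigma n)$ and the construction costs $\cO(n/\tau)$. Next we build the compacted tries $T_2:=T$, over the suffixes $S[i\dd n)$ for $i\in\Sync$, and $T_1:=T^R$, over the reversed prefixes $S[0\dd i)^R$. The weight is the string depth, $w:=\sd$, which strictly increases along root-to-leaf paths. Leaves are labelled by their anchors, renumbered into $[0,N)$.

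For $T$, we sort the suffixes starting at anchors with \cite[Theorem 4.3]{DBLP:conf/stoc/KempaK19} in $\cO(|\Sync|)$ time. We compute LCPs of lexicographic neighbours with the $\cO(n/\log_\sigma n)$-time LCE structure of \cite{DBLP:conf/stoc/KempaK19}, then build the compacted trie in linear time \cite{DBLP:conf/cpm/KasaiLAAP01}. $T^R$ is built the same way after constructing the packed reverse $S^R$ in $\cO(n/\log_\sigma n)$ time and repeating on it. Under $S^R$, the set $\Sync$ maps to a set of positions in $S^R$, and the question is whether it is still a synchronizing set there. We can avoid this issue, because sorting via \cite{DBLP:conf/stoc/KempaK19} should only need consistency, which is symmetric. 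If that fails, we fall back to sorting the reversed prefixes by their length-$3\tau$ context using the synchronizing-set order. This sorting of $T^R$ is the step we expect to be the main technical obstacle. Each node has size $\cO(N)$ and $N=\cO(n/\tau)$. Leaves may be implicit when one string is a prefix of another; we make them explicit, or attach sentinel leaves, so that every label is a leaf.

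Correctness has two directions. Let $P$ be a SUS with $|P|=\ell\ge 3\tau$ and $\per(P)>\frac13\tau$. By the density property, applied to $P[0\dd 3\tau-1)$, its occurrence contains an anchor $i\in\Sync$ at some offset $q<\tau$. Take the leftmost such anchor. Split $P=P_1P_2$ with $|P_1|=q$, so $|P_2|\ge 2\tau=k$. Let $u$ be the locus of $P_1^R$ in $T_1$ and $v$ the locus of $P_2$ in $T_2$; these may be implicit, and we round each down to the explicit node below it, which keeps the leaf set. By \cref{prop:sss}, every occurrence of $P$ has an anchor at offset $q$. Hence $|L(u)\cap L(v)|$ equals the number of occurrences of $P$, which is $1$. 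By minimality, $P$ is as short as possible, so $|P_1|=\sd(\mathsf{parent}(u))+1$ (or $0$ at the root) and $|P_2|=\max(\sd(\mathsf{parent}(v)),k-1)+1$. The objective then equals $\ell-2$ up to the fixed additive constant; we will check the root convention carefully here.

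Conversely, take any feasible pair $(u,v)$ with $w(v)\ge k$. The string $\str'(u)^R\str'(v)$ consists of the shortest prefixes reaching $u$ and $v$, with the right part of length at least $k=2\tau$. This string occurs around the unique common anchor. If it occurred elsewhere, aperiodicity of that occurrence would put an anchor at the same offset, by \cref{prop:sss}, contradicting $|L(u)\cap L(v)|=1$. If the string is periodic, it is handled by the other cases, and the reduction only needs to preserve the optimum. So the minimiser yields a SUS of the claimed length. The total time is $\cO(n/\log_\sigma n)$.
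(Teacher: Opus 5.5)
\paragraph{Main gap: the correctness argument rests on aperiodicity instead of consistency.} Your construction matches the paper's: tries $T$ and $T^R$ over the anchors, $w=\sd$, and $k=2\tau$. The property that makes the reduction correct, however, is \emph{consistency}, not aperiodicity.

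Suppose a candidate $X=X_1X_2$ is read off a pair $(u,v)$ with $|X_2|\ge k=2\tau$ and $X_2$ starting at the anchor $i$. Any occurrence of $X$ at a position $j'$ satisfies $S[j'+|X_1|\dd j'+|X_1|+2\tau)=S[i\dd i+2\tau)$. So $j'+|X_1|\in\Sync$, and it contributes a leaf to $L(u)\cap L(v)$. Hence $|L(u)\cap L(v)|$ counts the occurrences of every candidate exactly, whatever its length or period. That is the reason for choosing $k=2\tau$.

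Your soundness paragraph instead invokes \cref{prop:sss}, whose hypotheses fail for candidates of length in $[2\tau,3\tau)$ or with period at most $\frac13\tau$. Your fallback (``if the string is periodic, it is handled by the other cases, and the reduction only needs to preserve the optimum'') is not valid. This case returns the optimum of the \TwoTreesProblem instance. A non-unique candidate shorter than the true SUS would therefore be output and would win the global minimum.

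The same consistency argument is what your minimality step needs, and you do not give it. That step sets $|P_1|=\sd(\textsf{parent}(u))+1$ and $|P_2|=\max(\sd(\textsf{parent}(v)),k-1)+1$. The shortened strings need not satisfy the hypotheses of \cref{prop:sss}, so you cannot appeal to it there.

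\paragraph{Anchor existence.} Your justification for the anchor is also wrong. Applying density to $P[0\dd 3\tau-1)$ requires $\per(P[0\dd 3\tau-1))>\frac13\tau$. This does not follow from $\per(P)>\frac13\tau$; take $P=a^{3\tau}b$. What is true is that some window $P[j\dd j+3\tau-1)$ has period greater than $\frac13\tau$. Otherwise consecutive windows overlap in $3\tau-2$ letters, so by the periodicity lemma they share their smallest period, and $P$ would inherit it. This gives an anchor at some offset $q\le \ell-2\tau$. That is all you need, since it yields $|P_2|\ge 2\tau$.

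\paragraph{Construction time of $T^R$.} You leave this step open, and the symmetry claim you offer is incorrect. Consistency constrains the $2\tau$ letters to the \emph{right} of a position. So $\{n-i : i\in\Sync\}$ is in general not synchronizing in $S^R$. Only a version shifted by $2\tau$ is, and sorting its suffixes orders $S[0\dd i+2\tau)^R$ rather than $S[0\dd i)^R$.

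No synchronizing-set machinery is needed here. Since $N=\cO(n/\log^4 n)$, the paper simply merge-sorts the $N$ strings in $\cO(N\log N)=o(n/\log n)$ time. Each comparison is one $\cO(1)$-time LCE query plus a letter comparison. For $T$ the queries are on $S$; for $T^R$ they are on the packed reversal $S^R$. Each LCE structure is built in $\cO(n/\log_\sigma n)$ time, and both compacted tries then follow in $\cO(N)$ time.

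This also removes your reliance on \cite[Theorem 4.3]{DBLP:conf/stoc/KempaK19} for $T$. The paper invokes that theorem only for $\tau=\Theta(\log_\sigma n)$, not for $\tau=\lfloor\frac13\log^4 n\rfloor$.
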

\begin{proof}
    We begin by constructing a $\tau$-synchronizing set $\Sync$ of $S$.
    Using this set, we construct two compacted tries $T$ and $T^R$.
    For every anchor $i \in \Sync$, we insert the suffix $S[i\dd n)$ into $T$ and the reversed prefix $S[0\dd i)^R$ into $T^R$, labeling the corresponding leaves with the anchor $i$ in both tries. 
    We make all implicit nodes in $T$ at string depth $k-1$ explicit.
    Hence, we have $N=\cO(|\Sync|)=\cO(n/\tau)=\cO(n/\log^4 n)$.

    Any fragment $P$ in $S$ that contains at least one anchor $i\in \Sync$ can be decomposed into $P = \str(u)^R \cdot \str(v)$, where $u$ is a node in $T^R$ and $v$ is a node in $T$, both of which can be implicit.
    If $P$ has length at least~$3\tau$ and period greater than $\frac13 \tau$, then by the definition of $\tau$-synchronizing sets, all occurrences of~$P$ have anchors at the same relative positions. Specifically, any occurrence of $P$ starting at position $j$ must have $j+q \in \Sync$ if and only if the original occurrence at position $i$ had an anchor at the same offset (\Cref{prop:sss}).
    Thus, if the node pair $(u,v)$ represents the fragment $P$, the number of occurrences of $P$ is exactly $|L(u) \cap L(v)|$.
    To find a SUS, we must find a pair  $(u,v)$ of (possibly implicit) nodes such that $|L(u) \cap L(v)| = 1$ (uniqueness), $\sd(v) \ge 2\tau$ (to satisfy the anchor offset property), and $\sd(u)+\sd(v)$ is minimized.

    To do so, we set the weights of nodes in the two trees as follows.
    In both tries, we set the weight of the root node to $0$ and assign to each other node $u$ a weight $w(u) := \sd(u)$. We thus create an instance of \TwoTreesProblem with  $k=2\tau$.
    If the optimal node pair $(u,v)$ satisfies $w(\textsf{parent}(u)) + w(\textsf{parent}(v)) +2 < 3\tau$, we conclude that $S$ does not have a SUS of length $\ell \ge 3\tau$ and period $p > \frac13 \tau$.
    (Note that since we have made all nodes with string depth $k-1$ explicit, for any node $v$ with $w(v) \geq k$, $v$'s parent has weight at least $k-1$, and hence $\max (w(\textsf{parent}(v)),k-1) = w(\textsf{parent}(v))$.)
    Otherwise, we return as a SUS the string  $\str(u)^R[0 \dd w(\textsf{parent}(u))] \cdot \str(v)[0 \dd w(\textsf{parent}(v))]$---the indices of the
    occurrence of this string in $S$ can be inferred from the 
    common leaf label in the subtrees of $u$ and $v$.

    The time complexity depends on the construction of the $\tau$-synchronizing set, which requires $\cO(n/\log_\sigma n)$ time using \Cref{the:sss}, and the construction of the tries.
    For the tries, we first construct an LCE data structure over $S$ in $\cO(n/\log_\sigma n)$ time~\cite{DBLP:conf/stoc/KempaK19} supporting $\cO(1)$-time LCP queries. We sort the $N$ suffixes of $S$ in $\cO(N\log N)$ time using merge sort, performing each comparison with an LCP query and a letter comparison.
    Given the sorted list of suffixes and the LCE data structure, the tries can then be constructed in $\cO(N)$ time~\cite{DBLP:conf/cpm/KasaiLAAP01}.
    The total time for constructing the tries is $\cO(n/\log_\sigma n)+\cO(N \log N) \subseteq \cO(n/\log_\sigma n)$.
\end{proof}

\subsubsection{Solving the \TwoTreesProblem Problem}

We solve \TwoTreesProblem as follows.
We first decompose the trees $T_1$ and $T_2$ into \emph{heavy paths}.
For each pair of heavy paths, one from $T_1$ and one from $T_2$, 
we then construct a \SkylineProblem instance, which we solve in linear time using \cref{lem:skyline-alg}.

\subparagraph{Heavy-light decomposition.} We first recall the widely-used heavy-light decomposition~\cite{DBLP:journals/jcss/SleatorT83}.

\begin{definition}[Heavy-Light Decomposition~\cite{DBLP:journals/jcss/SleatorT83}]
Consider a rooted tree $\mathcal T$.
We obtain a \emph{heavy-light decomposition} of $\mathcal T$ by marking each edge as either heavy or light as follows. For every internal node of $\mathcal T$, the outgoing edge leading to the child with the largest number of descendants is marked as heavy, while all other outgoing edges are marked as light; ties are resolved arbitrarily.
    A maximal path of heavy edges is a \emph{heavy path}. 
\end{definition}

A heavy-light decomposition can be constructed in linear time~\cite{DBLP:journals/jcss/SleatorT83}. The following fact holds for any heavy-light decomposition:

\begin{fact}[Lemma 1~\cite{DBLP:journals/jcss/SleatorT83}]
    \label{lem:hp-height}
    In a tree with $N$ leaves, any root-to-leaf path intersects  at most $\log N$ heavy paths in the decomposition.
\end{fact}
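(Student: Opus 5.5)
The claim is the standard property of heavy-light decompositions: following any root-to-leaf path, one leaves the current heavy path only through a light edge, and every light edge at least halves the number of leaves below. So I will bound the number of light edges on a root-to-leaf path by $\log N$. The number of heavy paths met is then at most one more than the number of light edges traversed, since a new heavy path can only be entered via a light edge. The off-by-one is absorbed because the number of light edges is in fact at most $\lfloor\log N\rfloor$ or less; alternatively, the base of the logarithm and the constant can be adjusted in the intended asymptotic use.

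The key step is the halving inequality. Let $(x,c)$ be a light edge from a node $x$ to its child $c$, and let $h$ be the heavy child of $x$. By definition, $h$ maximizes the number of descendants among the children of $x$.

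For leaves the argument is as follows. The number of leaves $\lambda(c)$ in the subtree of $c$ satisfies $\lambda(c) \le \lambda(x)/2$ provided the heavy child is chosen by leaf count. If the heavy child is chosen by number of descendants, as in the definition, I would instead run the argument with the descendant count $d(\cdot)$. Then $d(c)\le d(h)$ and $d(c)+d(h) < d(x)$, so $d(c) < d(x)/2$. The descendant count is at most the tree size, which is $\cO(N)$, giving a bound of $\log(\text{size}) \le \log(2N)$ light edges. For trees in which every internal node is branching (as for compacted tries), the size is less than $2N$.

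The main obstacle is precisely this mismatch between "descendants" in the definition and "leaves" in the statement. To get exactly $\log N$, I would either note that for the compacted tries used here one may define heaviness by leaf count (the decomposition is still linear-time and the subsequent algorithm only needs the halving property), or accept the bound $\log N + \cO(1)$. Since this is cited as Lemma~1 of Sleator and Tarjan, I would present the halving argument with whichever weight makes it exact. The result: starting from the root with weight at most $N$, after $t$ light edges the weight is at most $N/2^t$ and still at least $1$, so $t\le\log N$.
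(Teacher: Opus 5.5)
Your size-halving argument is the standard one behind the cited lemma, and the part that bounds the number of \emph{light} edges on a root-to-leaf path is correct. The step that fails is the claim that the off-by-one is ``absorbed''. At most $\lfloor \log N\rfloor$ light edges only gives at most $\lfloor\log N\rfloor+1$ heavy paths, and $\lfloor\log N\rfloor+1>\log N$ always.

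No refinement of your argument can close this, because the statement read literally is false. Let the root have a child $a$ with two leaf children, and a child $c$ with a single leaf child $c'$. Then $N=3$, and the edge to $c$ is light under both descendant and leaf counts. The path from the root to $c'$ meets two heavy paths, each containing a heavy edge, while $\log 3<2$. So what you actually prove is ``at most $\log N$ light edges, hence at most $\log N+1$ heavy paths''. That is the form in which the fact is used in the proof of \cref{lem:long:two-trees-algo}, where only an $\cO(\log N)$ bound is needed. You should state that conclusion rather than claim the exact bound.

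You also do not need to switch the decomposition to leaf counts, which would prove the fact for a different decomposition from the one the paper defines. With descendant counts $d(\cdot)$, a light child $c$ of $x$ satisfies $d(c)\le (d(x)-1)/2$, that is, $d(c)+1\le (d(x)+1)/2$. Since $d$ only decreases as you move down the path, after $t$ light edges you get $2\le (s+1)/2^t$, where $s$ is the size of the tree. Hence $t\le\log\frac{s+1}{2}$. This is at most $\log N$ both when no node is unary ($s\le 2N-1$) and in the setting of \TwoTreesProblem, where the size is at most $N$. Your weaker bound $\log(2N)$ comes only from using the strict inequality $d(c)<d(x)/2$ instead of this one.
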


Furthermore, we can construct a \emph{heavy-path tree} that encodes the ancestral relations among all heavy paths and leaves. This is done by contracting every heavy edge, such that all remaining edges are light.  In this auxiliary structure, each node corresponds to a contracted heavy path or  a leaf. By \cref{lem:hp-height}, the heavy-path tree has height at most $\log N$, which allows for efficient traversal across the original tree structure.

\subparagraph{A reduction using pairs of heavy paths.}
We first note that any internal node of a tree belongs to exactly one heavy path in its heavy-light decomposition.
We compute the heavy-light decompositions for $T_1$ and $T_2$ and consider every pair of  heavy paths $h_1$ from~$T_1$ and $h_2$ from $T_2$, provided that they share at least one leaf label.

For a heavy path $h$, let $L(h)$ denote the set of all leaves descending from the root of $h$.
We define the function $d_h : L(h) \to \mathbb{Z}_{\geq 0}$, where for some leaf $\ell \in L(h)$, $d_h(\ell)$ denotes the weight of $\ell$'s lowest ancestor within~$h$.
Namely, $d_h(\ell)$ denotes the maximum weight of a node on $h$ that is an ancestor of leaf $\ell$.
With these definitions, we have the following observation:

\begin{observation}
\label{obs:heavy-path-lca}
    Given a heavy path $h$, a node $v$ on $h$, and a leaf $\ell$ descending from the root of $h$, we have that $\ell \in L(v)$ if and only if $w(v) \le d_h(\ell)$.
\end{observation}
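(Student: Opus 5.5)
The claim is that, for a heavy path $h$, a node $v$ on $h$, and a leaf $\ell$ descending from the root of $h$, we have $\ell \in L(v)$ if and only if $w(v)\le d_h(\ell)$. The plan is to combine two facts: the nodes of $h$ form a chain of ancestors, and weights strictly increase along root-to-leaf paths.

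Let $a$ be the lowest ancestor of $\ell$ on $h$, so that $d_h(\ell)=w(a)$. This node exists because $\ell$ descends from the root of $h$, which itself lies on $h$. The nodes of $h$ form a single downward path, so any two of them are comparable under the ancestor relation. The ancestors of $\ell$ that lie on $h$ are therefore exactly the prefix of $h$ from its root down to $a$. Indeed, if some node of $h$ below $a$ were an ancestor of $\ell$, this would contradict the choice of $a$ as the lowest one. Conversely, every node of $h$ above $a$ is an ancestor of $a$, and hence an ancestor of $\ell$.

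Forward direction. Suppose $\ell\in L(v)$. Then $v$ is an ancestor of $\ell$ lying on $h$, so $v$ is $a$ or a strict ancestor of $a$. By the strict monotonicity of $w$ (a strict ancestor has strictly smaller weight), $w(v)\le w(a)=d_h(\ell)$.

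Backward direction. Suppose $w(v)\le d_h(\ell)$, and assume for contradiction that $v$ is not an ancestor of $\ell$. Since $v$ and $a$ both lie on $h$, they are comparable. Then $v$ must be a strict descendant of $a$, since otherwise $v$ would be an ancestor of $a$ and thus of $\ell$. Strict monotonicity then gives $w(v)>w(a)=d_h(\ell)$, which is a contradiction.

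The only point needing care is treating leaves and nodes uniformly. Here $L(v)$ is the set of leaf descendants of $v$, with $v$ counted as its own ancestor. If $\ell$ itself lies on $h$ (a leaf ending a heavy path), then $a=\ell$ and the argument goes through unchanged. There is no real obstacle: the whole statement follows from the chain structure of $h$ together with the monotone weights.
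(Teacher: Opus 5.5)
Your proof is correct and matches the paper's approach. The paper states this as an observation that follows directly from the definition of $d_h$ (the lowest, equivalently maximum-weight, ancestor of $\ell$ on $h$), using the chain structure of $h$ and the strict monotonicity of $w$, and your argument spells out exactly that reasoning.
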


To solve \TwoTreesProblem, we show that it suffices to solve a \SkylineProblem instance for every pair of heavy paths $(h_1,h_2)$ sharing at least one leaf label.
By \cref{lem:hp-height}, each leaf belongs to at most $\log N$ sets $L(h)$ per tree. We can thus construct each subset $L(h_1)\cap L(h_2)$ by enumerating, for each leaf label $\ell$, all $\log^2 N$ pairs of heavy paths above it.

\begin{lemma}
    \label{lem:long:two-trees-algo}
    Any instance of \TwoTreesProblem can be solved in $\cO(N \log^3 N)$ time.
\end{lemma}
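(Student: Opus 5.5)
The plan is to reduce \TwoTreesProblem to a family of \SkylineProblem instances, one for each pair $(h_1,h_2)$ of heavy paths (from $T_1$ and $T_2$) that share a leaf label, and then solve each instance with \cref{lem:skyline-alg}.

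\textbf{Correctness.} Fix an optimal pair $(u,v)$ with $|L(u)\cap L(v)|=1$ and $w(v)\ge k$. Let $h_1$ be the heavy path containing $u$ and $h_2$ the one containing $v$; if $u$ is a leaf, we attach it to the heavy path ending at it, so that $h_1$ is the unique heavy path through $u$. Then $L(u)\subseteq L(h_1)$, $L(v)\subseteq L(h_2)$, and the common leaf $\ell$ lies in $L(h_1)\cap L(h_2)$.

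By \cref{obs:heavy-path-lca}, for every $\ell'\in L(h_1)\cap L(h_2)$ we have $\ell'\in L(u)\cap L(v)$ if and only if $w(u)\le d_{h_1}(\ell')$ and $w(v)\le d_{h_2}(\ell')$. So we map each common leaf $\ell'$ to the point $p_{\ell'}=(d_{h_1}(\ell'),d_{h_2}(\ell'))$, and the condition $|L(u)\cap L(v)|=1$ says exactly that $(w(u),w(v))$ is dominated by exactly one point $p_{\ell'}$. The domination count is monotone in both coordinates. Hence minimizing $w(\textsf{parent}(u))+\max(w(\textsf{parent}(v)),k-1)$ amounts to finding the smallest coordinate pair $(x+1,y+1)$ that is dominated by exactly one point: take $x=w(\textsf{parent}(u))$ and $y=\max(w(\textsf{parent}(v)),k-1)$, and note that any value of $w$ strictly between $x$ and $w(u)$ behaves like $w(u)$ with respect to the points.

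To enforce $w(v)\ge k$, we shift the second coordinate: use points $(d_{h_1}(\ell'),\,d_{h_2}(\ell')-k)$, discard those with negative second coordinate, and search for a skyline point $(x^\star,y^\star)$ minimizing $x^\star+y^\star$. The candidate value is then $x^\star-1+y^\star+k-1$, and we must map $(x^\star,y^\star)$ back to actual nodes. Namely, $u$ is the highest node on $h_1$ with $w(u)\ge x^\star$, or the relevant child below $h_1$ if no such node exists; and $v$ is the highest node with $w(v)\ge y^\star+k$. Both are found by a scan or binary search along the path. The main obstacle is making this rounding argument airtight. \cref{lem:skyline-alg:hor} and \cref{lem:skyline-alg:ver} show that the optimal skyline coordinates are of the form (point coordinate)$+1$ or $0$. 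One then has to verify that node weights on the path, together with the parent-weight objective, recover exactly the same optimum, and in particular that a $u$ lying below $h_1$ on a light edge is also covered through the pair of paths containing it.

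\textbf{Running time.} Every leaf lies in at most $\log N$ sets $L(h)$ per tree by \cref{lem:hp-height}. Enumerating, for each leaf, all pairs of heavy paths above it therefore produces at most $N\log^2 N$ (pair, leaf) incidences in total, with the values $d_h$ computed during a traversal of the heavy-path tree. For each pair $(h_1,h_2)$, the skyline instance has one point per common leaf. Sorting all points globally by first coordinate (for example with a single radix sort or a comparison sort, costing $\cO(N\log^3 N)$ overall) and then bucketing them by path pair gives each instance pre-sorted. Applying \cref{lem:skyline-alg} to each instance and mapping back to nodes costs linear time per instance plus $\cO(\log N)$ for the search along the paths. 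The total is therefore $\cO(N\log^3 N)$.
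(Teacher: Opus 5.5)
Your plan is the paper's: one \SkylineProblem instance per pair $(h_1,h_2)$ of heavy paths sharing a leaf, with points $(d_{h_1}(\ell),d_{h_2}(\ell))$, and a global sort so each instance arrives pre-sorted. The $\cO(N\log^2 N)$ tuples then give $\cO(N\log^3 N)$ time. Shifting the second coordinate by $k$ and dropping points with $d_{h_2}(\ell)<k$ is equivalent to the paper's doubled sentinel $(\infty,k-1)$. Your forward direction is also fine: every feasible $(u,v)$ lies in exactly one instance and gives the skyline point $(w(\textsf{parent}(u))+1,\max(w(\textsf{parent}(v)),k-1)+1)$.

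The gap is the step you defer as ``the main obstacle'', and as stated it fails. Let $t_1,t_2$ be the topmost nodes of $h_1,h_2$. By \cref{lem:skyline-alg:hor}, there are two cases. Either $x^\star-1=d_{h_1}(\ell)$ for some $\ell$, and then $u$ is the child on $h_1$ of the node of that weight, so the lift is exact. Or $x^\star=0$; then, unless $h_1$ starts at the root, the cheapest lift is $u=t_1$ at cost $w(\textsf{parent}(t_1))$, not $-1$. Likewise $y^\star=0$ lifts at cost $\max(w(\textsf{parent}(t_2)),k-1)$, not $k-1$.

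Such spurious optima really occur. A light leaf of $T_1$ is a one-node path; paired with any $h_2$ on which its label $\ell$ has $d_{h_2}(\ell)\ge k$, it yields a one-point instance whose optimum is $(0,0)$. Your formula then reports $k-2$, below the cost (at least $k-1$) of every genuine pair. Recomputing costs from the lifted nodes does not rescue this, because the minimizer of $x+y$ can be spurious while another skyline point of the same instance lifts to a cheaper pair. For example, take $w(\textsf{parent}(t_1))=9$, $w(t_1)=10$, $w(\textsf{parent}(t_2))<k$ and shifted points $(10,5),(10,3),(12,0)$. The solver returns $(0,4)$, whose lift costs at least $k+12$, while $(11,0)$ lifts to a feasible pair of cost $k+9$.

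The missing idea is to measure coordinates from the tops of the paths. Set $c_1:=w(\textsf{parent}(t_1))+1$ and $c_2:=\max(w(\textsf{parent}(t_2)),k-1)+1$. Use the points $(d_{h_1}(\ell)-c_1,\,d_{h_2}(\ell)-c_2)$ and drop those with a negative coordinate. On a root path this only excludes the root itself, which is never better than its child containing the witness leaf. Equivalently, add two copies each of $(c_1-1,\infty)$ and $(\infty,c_2-1)$, generalizing the paper's sentinel.

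Then, by \cref{obs:heavy-path-lca} and the strict increase of weights down a path, every primary-skyline point $(x,y)$ lifts to a feasible pair of cost at most $x+y+c_1+c_2-2$. The lift takes the topmost nodes $u\in h_1$ and $v\in h_2$ with $w(u)\ge x+c_1$ and $w(v)\ge y+c_2$. Conversely, every feasible pair on $(h_1,h_2)$ arises with equality, and the running time is unaffected.

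The paper's \cref{lem:long:skyline} passes over the same point: its rule $w(\textsf{parent}(u))+1=x^\star$ presupposes that such a $u$ exists. So this correction is needed there as well.
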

\begin{proof}
    We construct the heavy-light decompositions of $T_1$ and $T_2$ in $\cO(N)$ time~\cite{DBLP:journals/jcss/SleatorT83}. This process partitions each tree into a set of disjoint heavy paths. 

    For every leaf label $\ell \in [0,N)$ present in both $T_1$ and $T_2$, we identify all pairs of heavy paths $(h_1,h_2)$ such that $h_1$ lies on the root-to-leaf path in the first tree and $h_2$ lies on the root-to-leaf path in the second tree.
    Since any root-to-leaf path intersects at most $\cO(\log N)$ heavy paths (\cref{lem:hp-height}), there are $\cO(\log^2 N)$ such pairs for each of the $N$ labels. 
    For each pair, we generate a tuple $(h_1,h_2,d_{h_1}(\ell),d_{h_2}(\ell))$, where $d_{h_1}$ and $d_{h_2}$ are retrieved from the weights of the light-edge endpoints.
    All tuples are generated in $\cO(N \log^2 N)$ time in total and are stored in one list. We sort the list, using the heavy path identifiers $(h_1,h_2)$ as primary keys and the value $d_{h_1}(\ell)$ as the secondary key. This ensures that all tuples corresponding to the same pair $(h_1,h_2)$ appear consecutively, and are then ordered by $d_{h_1}(\ell)$. Using merge sort, this step takes $\cO(N \log^2 N \cdot \log(N \log^2 N))=\cO(N\log^3 N)$ time.
    Let us denote the sublist corresponding to the pair $(h_1,h_2)$ of heavy paths by $\mathcal{H}(h_1,h_2)$.

    \begin{claim}\label{lem:long:skyline}
    Consider an instance of \TwoTreesProblem, in which the output nodes are restricted to a given pair of heavy paths $h_1$ and $h_2$ from $T_1$ and $T_2$. Given the list $\mathcal{H}(h_1,h_2)$, we can reduce this instance in $\cO(g)$ time to an instance of the \SkylineProblem problem over a multiset of points of size $\cO(g)$, where $g = |\mathcal{H}(h_1,h_2)|$.
    \end{claim}
    \begin{claimproof}
    Recall that in the definition of \TwoTreesProblem, the integer $k$ is given as a minimum weight on one of the returned nodes.
    Let $\mathcal{L} = L(h_1) \cap L(h_2)$. Assume that $\mathcal{L} \ne \emptyset$; otherwise, the instance has no solution.
    We wish to find nodes $u\in h_1$ and $v\in h_2$ minimizing $w(\textsf{parent}(u)) + \max(w(\textsf{parent}(v)),k-1)$ such that $|L(u) \cap L(v)| = 1$ and $w(v) \ge k$.
    By \cref{obs:heavy-path-lca}, this reduces to finding the minimum $(x_1,x_2) \in \mathbb{Z}_{\ge 0}^2$ such that exactly one leaf $\ell \in \mathcal{L}$ satisfies $d_{h_1}(\ell) \ge x_1$ and $d_{h_2}(\ell) \ge x_2$.
    This is equivalent to solving a \SkylineProblem on the multiset $P' = \{(d_{h_1}(\ell),d_{h_2}(\ell)) \mid \ell \in \mathcal{L}\}\sqcup\{(\infty,k-1),(\infty,k-1)\}$, thus finding a solution $(x^\star, y^\star)$ with minimal $x^\star + y^\star$---the extra point that we insert in the multiset twice ensures that if the primary skyline is not empty, then $y^\star\ge k$.
    Given $\mathcal{H}(h_1,h_2)$, we can construct this instance in $\cO(g)$ time. Finally, to obtain a solution to \TwoTreesProblem, we select the nodes $u$ in $h_1$ and $v$ in $h_2$ satisfying $w(\textsf{parent}(u)) +1 = x^\star$  and either $w(\textsf{parent}(v)) + 1 = y^\star$ or $w(\textsf{parent}(v)) + 1 \leq k = y^\star \leq w(v)$.
    \end{claimproof}

    We apply \cref{lem:long:skyline} to each pair of heavy paths that share at least one leaf label, and solve each instance of the \SkylineProblem problem using \cref{lem:skyline-alg} in time linear in the number of points.
    Since the total number of tuples across all pairs of heavy paths is $\cO(N \log^2 N)$, the total time required for applications of \cref{lem:long:skyline} and \cref{lem:skyline-alg} is $\cO(N \log^2 N)$.
    We maintain the global minimum value of $w(\textsf{parent}(u)) + \max(w(\textsf{parent}(v)),{k-1})$ found across all pairs of heavy paths and return a witness pair of nodes as the final solution.
\end{proof}

\subsubsection{Wrapping Up}
The final complexity of the long aperiodic case is determined by the combination of the reduction to the \TwoTreesProblem problem and the subsequent application of the heavy-path-based \SkylineProblem algorithm. 

\begin{lemma}
    \label{lem:long}
Given an instance of \SUS, 
we can compute a SUS of $S$ in $\cO(n/\log_\sigma n)$ time 
if it has length $\ell \ge \log^4 n$ and period $p > \frac19 \log^4 n$.
\end{lemma}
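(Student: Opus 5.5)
We set $\tau := \lfloor \frac13 \log^4 n\rfloor$ and combine \cref{lem:long:two-trees-reduction} with \cref{lem:long:two-trees-algo}.

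First we check that the hypotheses of \cref{lem:long:two-trees-reduction} hold. A SUS with $\ell \ge \log^4 n$ satisfies $\ell \ge 3\tau$, since $3\tau \le \log^4 n$. A SUS with $p > \frac19\log^4 n$ satisfies $p > \frac13\tau$, since $\frac13\tau \le \frac19\log^4 n$. So the reduction applies. In $\cO(n/\log_\sigma n)$ time it produces an instance of \TwoTreesProblem with $N = \cO(n/\tau) = \cO(n/\log^4 n)$ and $k = 2\tau$.

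Next we solve this instance with \cref{lem:long:two-trees-algo} in $\cO(N\log^3 N)$ time. Since $N \le n$, this is $\cO\big((n/\log^4 n)\log^3 n\big) = \cO(n/\log n)$. This bound is at most $\cO(n/\log_\sigma n) = \cO(n\log\sigma/\log n)$. The whole point of choosing $\tau = \Theta(\log^4 n)$ is that the polylogarithmic overhead of heavy-path pairing and merge sorting is absorbed by the sparsity of the synchronizing set. The case $\tau \ge 1$ requires $n$ large enough that $\log^4 n \ge 3$; for smaller $n$ we solve the problem in constant time. We may also assume $\tau \le \lfloor n/2\rfloor$, as \cref{the:sss} requires; otherwise $\ell \ge \log^4 n$ exceeds what is meaningful and we report that no such SUS exists, since a SUS of length $\ell \ge 3\tau > n$ is impossible.

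The remaining step, and the one I expect to be the main obstacle, is correctness. We must show that the answer returned through \TwoTreesProblem is a genuine unique substring, and that it is no longer than the hypothesised SUS $P$.

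For the bound, take any occurrence of $P$. Since $P$ is aperiodic in the sense of \Cref{def:SSS} and $|P| \ge 3\tau$, its first $3\tau-1$ positions do not have period $\le \frac13\tau$. Hence by density $P$ contains an anchor at some offset $q < \tau$, so the part of $P$ to the right of the anchor has length at least $2\tau = k$. Writing $P = \str(u)^R\str(v)$ with $\sd(v)\ge k$, \Cref{prop:sss} gives $|L(u)\cap L(v)|=1$. So the optimum of the instance is at most $|P|-2$ in the weighted-parent sense, and the returned string has length at most $|P|$.

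For the converse, the returned string $P'$ has length at least $3\tau$ whenever the reduction does not reject. It also contains an anchor with at least $2\tau$ characters to its right, so consistency of $\Sync$ makes its occurrences correspond exactly to the leaves in $L(u)\cap L(v)$. One subtlety needs care here. \Cref{prop:sss} needs $\per(P') > \frac13\tau$. If instead $P'$ is periodic, it lies within a $\tau$-run, and its occurrences may not all be anchored alike. I would handle this by noting that a periodic $P'$ whose occurrence contains an anchor forces that anchor to be near the run's boundary. Then uniqueness still holds, or else some shorter unique string exists that is found by the periodic case. Since we only need to return a SUS when the hypothesised one exists, $|P'| \le |P|$ together with uniqueness of $P'$ implies that $P'$ is itself a SUS.
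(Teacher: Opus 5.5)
Your reduction, your check of its hypotheses, and your running-time analysis match the paper's proof: apply \cref{lem:long:two-trees-reduction}, then \cref{lem:long:two-trees-algo} in $\cO(N\log^3 N)=\cO(n/\log n)$ time. The gap is in the correctness argument you add. The step ``since $P$ is aperiodic and $|P|\ge 3\tau$, its first $3\tau-1$ positions do not have period $\le\frac13\tau$'' is false, because a large period of $P$ says nothing about its prefixes. For example, $P=\texttt{a}^{3\tau}\texttt{b}$ has $\per(P)=3\tau+1$, while $P[0\dd 3\tau-1)=\texttt{a}^{3\tau-1}$ has period $1$. This is exactly why the paper splits the medium case into \cref{lem:med:prefix} and \cref{lem:med:ab}. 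So an anchor at offset $q<\tau$ need not exist.

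What you need, and what is true, is an anchor at some offset $q\le |P|-2\tau$. Let $P$ occur at $s$, let $m=|P|$, and suppose $\Sync\cap[s,s+m-2\tau]=\emptyset$. Then $[i,i+\tau)$ contains no anchor for every $i\in[s,s+m-3\tau+1]$, so by density $\per(S[i\dd i+3\tau-1))\le\frac13\tau$ for all these $i$. Consecutive windows overlap in $3\tau-2\ge\frac23\tau$ positions. By the periodicity lemma they therefore share the same period, and gluing them shows that their union $S[s\dd s+m)=P$ has period $\le\frac13\tau$, a contradiction. An anchor at a large offset $q$ causes no trouble, because $T^R$ stores whole reversed prefixes; it simply yields a deeper node $u$.

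In the converse direction, the ``subtlety'' you raise does not exist. The fix you sketch (anchors near run boundaries, or a shorter unique string found by the periodic case) is not an argument and should be dropped. Your preceding sentence already settles the matter. Suppose the returned $P'$ has its anchor $i$ at offset $x$ with $|P'|-x\ge 2\tau$. Then every occurrence $s'$ of $P'$ satisfies $S[s'+x\dd s'+x+2\tau)=S[i\dd i+2\tau)$, so $s'+x\in\Sync$ by consistency alone, whatever $\per(P')$ is. Hence the occurrences of $P'$ are in bijection with $L(u)\cap L(v)$, and $P'$ is unique. By the same reasoning, you do not need \cref{prop:sss} in the forward direction either: $|L(u)\cap L(v)|=1$ follows directly from the uniqueness of $P$ together with the anchor at offset $q$. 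Aperiodicity is used only to guarantee that such an anchor exists.
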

\begin{proof}
    Using \cref{lem:long:two-trees-reduction}, we reduce the long aperiodic case to an instance of the \TwoTreesProblem problem. The number of leaf labels (anchors) is $N =  \cO(|\Sync|) = \cO(n / \log^4 n)$.
    As established in the reduction, constructing the synchronizing set, the LCE data structure, and the two compacted tries $T$ and $T^R$ takes
    $\cO(n/\log_\sigma n)$ time.
    We then solve the resulting \TwoTreesProblem instance using the algorithm from \cref{lem:long:two-trees-algo}. The time required is $\cO(N\log^3 N)$ and as $N$ is in $\cO(n/\log^4 n)$, we obtain a running time of $\cO(n/\log n)$.
    Therefore, both the reduction and the solver fit within the target time bound of $\cO(n / \log_\sigma n)$. The algorithm identifies a pair $(u, v)$ minimizing $w(\textsf{parent}(u)) + \max(w(\textsf{parent}(v)),{k-1})$ subject to the uniqueness and length constraints. Since the result of the \TwoTreesProblem problem provides the minimal unique extension for any pattern overlapping an anchor, the resulting substring is a valid SUS for the long aperiodic case.
\end{proof}

\subsection{Periodic Case}\label{sec:periodic} 

In this section, we address the SUS computation in highly periodic substrings of $S$.  Specifically, we consider substrings of $S$ of medium length with period at most $\frac{1}{45}\log_\sigma n$, as well as long substrings of $S$ with period at most $\frac{1}{9}\log^4 n$. These two cases are handled using $\tau$-runs with $\tau=\lfloor \frac{1}{15} \log_{\sigma}n \rfloor$ and $\tau = \lfloor \frac{1}{3}\log^4 n \rfloor$, respectively. 

For the first group of $\tau$-runs (where $\tau=\lfloor \frac{1}{15} \log_{\sigma}n \rfloor$), we employ their standard Lyndon representation (\Cref{def:canonical}).
These runs can be efficiently computed and grouped by means of \cref{lem:per:find-roots}. 
Unfortunately, \cref{lem:per:find-roots} is not applicable when $\tau = \lfloor \frac{1}{3}\log^4 n \rfloor$.
To deal with this, we utilize the recently introduced \emph{sparse-Lyndon representation} \cite{DBLP:conf/mfcs/Charalampopoulos25} that allows for efficiently grouping the $\tau$-runs for $\tau = \lfloor \frac{1}{3}\log^4 n \rfloor$ according to their Lyndon roots, but representing via their sparse-Lyndon root.
We formalize this discussion in \Cref{prp:group}.

\begin{lemma}[Proposition 36~\cite{DBLP:conf/mfcs/Charalampopoulos25}]
\label{prp:group}
For any string $S$ of length $n$ over an integer alphabet $[0,\sigma)$, with $\sigma = n^{\cO(1)}$, all runs in $S$ can be computed and grouped by equal Lyndon roots in $\cO(n/\log_\sigma n)$ time. For runs with period at most $2\lfloor \frac{1}{18} \log_{\sigma} n \rfloor$, we compute their standard Lyndon representations, while for runs with larger periods we compute their sparse-Lyndon representations.
\end{lemma}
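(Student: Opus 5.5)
The plan is to split the runs at the period threshold $p_0 := 2\lfloor\frac1{18}\log_\sigma n\rfloor$ and handle the two classes with different tools. In both classes the running time should be linear in the number of runs produced plus $\cO(n/\log_\sigma n)$ preprocessing. One caveat first: a string can have $\Theta(n)$ runs in total, so to meet the bound, runs that are short (length $\cO(\log_\sigma n)$) must be reported implicitly. I would do this through block-level lookup tables: cover $S$ by overlapping blocks of length $\Theta(\log_\sigma n)$. There are only $n^{o(1)}$ distinct blocks, so all runs inside each distinct block, with their Lyndon representations, can be tabulated in $n^{o(1)}$ time. Every other run is long relative to its period and is output explicitly. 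I expect the original Proposition 36 to rely on this kind of implicit representation, and in our application only $\tau$-runs are needed anyway.

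\textbf{Small periods} ($p\le p_0$). I would take $\tau=\Theta(\log_\sigma n)$ with $\tau\le\frac14\log_\sigma n$ and $\frac13\tau\ge p_0$, and invoke \cref{lem:per:find-roots}. It computes all $\tau$-runs in $\cO(n/\tau)$ time and groups them by Lyndon root. Since $p\le p_0$, each Lyndon root fits in one machine word. So the Lyndon root of each run can be found by a table lookup that maps a length-$p$ word to its minimal rotation and the offset of that rotation. This yields $\alpha$, $e$ and $\beta$ of \cref{def:canonical} in $\cO(1)$ time per run, and the runs can then be grouped by bucket-sorting the roots.

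\textbf{Large periods} ($p>p_0$). I would build a $\tau$-synchronizing set $\Sync$ with $\tau=\Theta(\log_\sigma n)$ using \cref{the:sss}. By the density condition, inside a run of period $p>\frac13\tau$ every window of length $\tau$ contains an anchor. By consistency, the anchors repeat with period $p$ in the interior of the run. A candidate run is detected from two anchors $i<j$ with $j-i=p$ and $S[i\dd i+2\tau)=S[j\dd j+2\tau)$, and is then extended maximally in both directions with LCE queries; the LCE structure of \cite{DBLP:conf/stoc/KempaK19} answers these in $\cO(1)$ time. This gives all such runs in $\cO(|\Sync|)=\cO(n/\tau)$ time.

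The \emph{sparse-Lyndon root} is defined as the lexicographically smallest rotation of the period that starts at an anchor. To find it, I would sort the suffixes at anchor positions in $\cO(|\Sync|)$ time by \cite[Theorem 4.3]{DBLP:conf/stoc/KempaK19}. Each run then picks, among the $\cO(p/\tau)$ anchors in one period, the one of minimal rank. Because the anchor set is consistent, runs with equal Lyndon roots have identical anchor patterns within a period. Hence they have identical sparse-Lyndon roots, and the sparse-Lyndon root together with $p$ determines the Lyndon root. Grouping is then a sort by the pair (period, rank class of the sparse-Lyndon root), where two ranks are merged into one class when the LCE of the two positions is at least $p$. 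Adjacent sorted suffixes can be tested for this with one LCE query each.

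\textbf{Main obstacle.} I expect the main obstacle to be the total cost of choosing the minimum-rank anchor in each run. A single run may contain $\Theta(p/\tau)$ anchors per period, and summing over runs is not obviously $\cO(n/\tau)$. My first attempt would charge the scan to the first period of each run. Then I would use the runs-theorem-style bound that the sum of $\text{period}/\tau$ over runs of period $>\frac13\tau$ is $\cO(n/\tau)$: runs of comparable period overlap in fewer than about $p$ positions. If that charging fails, I would fall back on a range-minimum structure over the ranks of the anchors, ordered by position, which answers each run's query in $\cO(1)$ time.
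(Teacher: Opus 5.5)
The paper does not prove this statement; it imports it from the cited work. So I compare your plan with what the statement demands. There is a genuine gap, and it is not the per-run scan you single out.

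\textbf{Large periods.} You assume that once short runs are tabulated, the remaining runs can be output explicitly in $\cO(|\Sync|)$ time. This rests on two claims: that every length-$\tau$ window inside a run of period $p>\frac13\tau$ contains an anchor, and that runs of comparable period overlap in fewer than about $p$ positions. Both fail once $p\gg\tau$.

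Take $S=((ab)^m a)^2$, of length $4m+2$. For every $j\in[1,m]$, the fragment $((ab)^j a)^2$ centred at the unique $aa$ is a run of period $2j+1$. Indeed, the $b$ preceding it differs from the $a$ one period later, and symmetrically on the right.
\begin{itemize}
\item This gives $\Theta(n)$ runs, nearly all of period above $p_0$ (indeed above any polylogarithmic threshold). No method can list them explicitly in $\cO(n/\log_\sigma n)$ time.
\item They are nested, so consecutive ones overlap in almost their whole length. Fine--Wilf bounds the overlap only by $p+p'-\gcd(p,p')$, which is comparable to the length of an exponent-$2$ run. The sum of $p/\tau$ over these runs is $\Theta(n^2/\tau)$, so your charging bound is false.
\item Density fails too. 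Every length-$(3\tau-1)$ window inside an $(ab)^j$ stretch has period $2\le\frac13\tau$, so all anchors lie within $\cO(\tau)$ of the centre. For $p\gg\tau$ none of them has a partner anchor at distance $p$, so your anchor-pair detection misses these runs. You also never say how a partner at the unknown distance $p$ would be found.
\end{itemize}
Runs whose period is itself periodic are the ``layer runs'' that the cited work must represent implicitly. Read literally, the statement can only hold with such an implicit treatment, and that is the missing idea.

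Your RMQ fallback does settle the scan. In a run of length at least $2p$, the suffixes starting at the anchors of one period are ordered like the corresponding rotations. So a range-minimum query over suffix ranks returns the sparse-Lyndon root in $\cO(1)$ time.

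\textbf{Small periods.} This case fails as parametrised. \cref{lem:per:find-roots} needs $\tau\le\frac14\log_\sigma n$, while $\frac13\tau\ge p_0=2\lfloor\frac1{18}\log_\sigma n\rfloor$ forces $\tau\ge 6\lfloor\frac1{18}\log_\sigma n\rfloor\approx\frac13\log_\sigma n$. So no valid $\tau$ exists for large $n$. Moreover, that lemma returns only $\tau$-runs, not all runs of period at most $p_0$.

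\textbf{What survives.} The part this paper actually uses goes through: $\tau$-runs for $\tau=\lfloor\frac13\log^4 n\rfloor$.
\begin{itemize}
\item There are $\cO(n/\tau)$ of them, and they can be read off from the gaps of a $\tau$-synchronizing set.
\item Take the auxiliary parameter $\tau'=\lfloor\frac1{18}\log_\sigma n\rfloor$. Each such run has length at least $9p-1>p+3\tau'$ when $p>p_0$.
\item A Fine--Wilf argument then shows that every $p+\tau'$ consecutive positions in the interior of such a run contain a $\tau'$-anchor. It also shows that the anchor pattern is $p$-periodic there and depends only on the Lyndon root.
\end{itemize}
With this, your sparse-Lyndon-root, RMQ and grouping steps are correct.
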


To streamline the subsequent analysis, we hereafter slightly \emph{abuse terminology}: the terms $\tau$-run, Lyndon root, and Lyndon representation will be used uniformly for the two classes of runs considered in this section. In the context of the second class ($\tau = \lfloor \frac{1}{3}\log^4 n \rfloor$), these terms implicitly refer to their sparse-Lyndon counterparts.

This unified terminology is justified by the fact that both representations share analogous properties and can be handled identically within our algorithmic framework.
We group all $\tau$-runs by Lyndon root in $\cO(n/\log_\sigma n)$ time using~\cref{prp:group}; the only difference is that for some groups, we compute the Lyndon representation of runs, while for others, we compute the sparse-Lyndon representation. For further details and a thorough discussion of sparse-Lyndon representations, we refer the reader to~\cite{DBLP:conf/mfcs/Charalampopoulos25}.

\begin{figure}[htbp]
    \centering
    \begin{subfigure}{0.37\textwidth}
        \centering
        \begin{tikzpicture}[x=24pt, y=-10pt]
            \DrawRun{0}{0}{5}{white};
            \DrawRun{1}{1}{2}{red!40};
            \DrawRun{2}{1.75}{2}{red!40};
            \draw (0,3) circle(0pt);
        \end{tikzpicture}
        \caption{$e < e' - 2$ (Case \ref{lem:per:substring:1})}
    \end{subfigure}
    \begin{subfigure}{0.31\textwidth}
        \centering
        \begin{tikzpicture}[x=24pt, y=-10pt]
            \DrawRun{0}{0}{4}{white};
            \DrawRun{0}{1}{2}{yellow!40};
            \DrawRun{1}{1.75}{2}{red!40};
            \DrawRun{2}{2.5}{2}{yellow!40};
        \end{tikzpicture}
        \caption{$e = e' - 2$ (Case \ref{lem:per:substring:2})}
    \end{subfigure}
    \begin{subfigure}{0.3\textwidth}
        \centering
        \begin{tikzpicture}[x=24pt, y=-10pt]
            \DrawRun{0}{0}{4}{white};
            \DrawRun{0}{1}{3}{yellow!40};
            \DrawRun{1}{1.75}{3}{yellow!40};
            \draw (0,3) circle(0pt);
        \end{tikzpicture}
        \caption{$e = e' - 1$ (Case \ref{lem:per:substring:3})}
    \end{subfigure}
    \caption{Cases \ref{lem:per:substring:1}--\ref{lem:per:substring:3} of \cref{lem:per:substring}, illustrating the potential alignment of a run with exponent $e$ within a run with exponent $e'$. Red alignments are guaranteed to be contained within the longer run; yellow alignments are contained only if the extensions of the runs satisfy specific inequalities.}
    \label{fig:lem:per:substring}
\end{figure}

\begin{lemma}
    \label{lem:per:substring}
    Let $F$ and $F'$ be two runs in $S$ with the same Lyndon root $\lambda$, where $\Lrep(F) = (\lambda,e,\alpha, \beta)$ and  $\Lrep(F') = (\lambda,e',\alpha', \beta')$. We can determine whether $F$ is a \emph{unique substring} of $F'$ as follows:
    
    \begin{enumerate}
        \item If $e < e' -2$, then $F$ is a substring of $F'$, but it is not unique. 
        \label{lem:per:substring:1}

        \item If $e = e' - 2$, then $F$ is a substring of $F'$. It is  unique  if and only if $\alpha >\alpha'$ and $\beta>\beta'$. \label{lem:per:substring:2}
        
        \item If $e = e' - 1$, then $F$ is a unique substring of $F'$ if and only if \emph{either} $\alpha > \alpha'$ \emph{or} $\beta > \beta'$, but not both. (Note: if both $\alpha > \alpha'$ \emph{and} $\beta > \beta'$ hold, $F$ is not a substring of $F'$.) \label{lem:per:substring:3}

        \item If $e = e'$, then $F$ is a unique substring of $F'$ if and only if $\alpha\le\alpha'$ and $\beta\le \beta'$. 
        \label{lem:per:substring:4}
        
        \item If $e > e'$, then $F$ is not a substring of $F'$. \label{lem:per:substring:5}
    \end{enumerate}
\end{lemma}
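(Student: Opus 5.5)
The plan is to reduce everything to occurrences of the periodic string $F$ inside $F'$. Let $p=|\lambda|$. Since $F$ is a run with $\per(F)=p$ and $|F|\ge 2p$, any occurrence of $F$ inside $F'$ must be aligned with the period of $F'$. The reason is that $F$ contains $\lambda^2$ (when $e\ge 2$; otherwise it contains a rotation-square of length $2p$), and a primitive $\lambda$ cannot occur at a non-multiple-of-$p$ offset inside a string with period $p$. This is the synchronization property of primitive words: $\lambda$ occurs in $\lambda\lambda$ only at positions $0$ and $p$. Moreover, both runs are written as $P\lambda^e T$ with $P$ a suffix of $\lambda$ and $T$ a prefix of $\lambda$, and $F'$ is a factor of $\lambda^\infty$ with the same phase convention. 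So an occurrence of $F$ in $F'$ is determined by which copy of $\lambda$ in $F'$ the first full $\lambda$ of $F$ maps to.

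Index the full copies of $\lambda$ in $F'$ by $0,\dots,e'-1$. We place the $e$ full copies of $F$ onto copies $s,\dots,s+e-1$ of $F'$, and then check the two extensions:
\begin{itemize}
\item The left extension of length $\alpha$ fits if and only if $s\ge 1$ (then a whole $\lambda$ lies to the left, which is enough since $\alpha<p$), or $s=0$ and $\alpha\le\alpha'$.
\item Symmetrically, the right extension fits if and only if $s+e\le e'-1$, or $s+e=e'$ and $\beta\le\beta'$.
\end{itemize}
Conversely, every valid $s$ gives an occurrence, because the letters agree by periodicity. Hence the number of occurrences equals the number of integers $s\in[0,e'-e]$ satisfying these two conditions. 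It remains to check that distinct $s$ give distinct positions, which holds because the starting offsets differ by multiples of $p$.

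The five cases then follow by counting.
\begin{itemize}
\item If $e>e'$, the range of $s$ is empty, so $F$ does not occur.
\item If $e=e'$, only $s=0$ is possible, and it needs both $\alpha\le\alpha'$ and $\beta\le\beta'$.
\item If $e=e'-1$, there are two choices. For $s=0$ we need $\alpha\le\alpha'$, and the right side is free. For $s=1$ the left side is free and we need $\beta\le\beta'$. The count is $[\alpha\le\alpha']+[\beta\le\beta']$, which equals $1$ exactly when one of the inequalities $\alpha>\alpha'$, $\beta>\beta'$ holds but not both, and equals $0$ when both hold.
\item If $e=e'-2$, $s=1$ is always valid. Also $s=0$ is valid iff $\alpha\le\alpha'$, and $s=2$ is valid iff $\beta\le\beta'$. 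So $F$ is unique iff $\alpha>\alpha'$ and $\beta>\beta'$.
\item If $e<e'-2$, the choices $s=1$ and $s=2$ are both valid, so $F$ occurs at least twice.
\end{itemize}

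The main obstacle is justifying the alignment claim rigorously, including the edge cases where $e$ is small ($e=0$ or $e=1$ with short extensions), since $|F|\ge 2p$ still holds for a run. To handle this, I would argue via the periods rather than via $\lambda^2$. Suppose $F$ occurs in $F'$ at an offset that is not congruent to the phase of $F'$ modulo $p$. Then the length-$p$ window of $F$ equals a nontrivial rotation of itself. That makes $\lambda$ equal to one of its own nontrivial rotations, which contradicts primitivity, which in turn holds because $p$ is the smallest period and $|F|\ge 2p$, by Fine--Wilf. For the sparse-Lyndon variant, the same counting applies once the representation gives the analogous decomposition, which the paper takes for granted.
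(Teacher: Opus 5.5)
Your proposal is correct and follows essentially the same route as the paper. You enumerate the possible alignments of the full copies of $\lambda$ in $F$ against those in $F'$ (your shift $s\in[0,e'-e]$) and check whether the left and right extensions fit, which yields exactly the paper's counts in all five cases. Your explicit argument that misaligned occurrences are impossible (otherwise $\lambda$ would equal one of its nontrivial rotations) makes rigorous a step the paper leaves implicit. Note that primitivity of $\lambda$ follows directly from minimality of $\per(F)$, so Fine--Wilf is not needed there.
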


\begin{proof}
    We analyze each case separately  by aligning the occurrences of the Lyndon roots in $F$ and $F'$ in all possible ways and comparing their extensions; see \cref{fig:lem:per:substring} for illustrative examples. This allows us to determine if  no alignment of $F$ is contained within $F'$ (i.e., $F$ is not a substring of $F'$), exactly one alignment is contained (unique substring), or multiple alignments are contained (not unique).
    
    \begin{enumerate}
        \item If $e < e' - 2$, we can align the first occurrence of the Lyndon root of $F$ with either the second or third occurrence of the Lyndon root of $F'$. In both cases, both extensions of $F$ coincide with complete occurrences of the Lyndon root of $F'$, meaning that $F$ is contained in both alignments and is therefore not a unique substring.
        
        \item If $e = e' - 2$, there are three possible alignments of the Lyndon roots. In the second alignment, both extensions of $F$ coincide with complete occurrences of the Lyndon root of $F'$, so $F$ is a substring of $F'$. In the first alignment, the left extensions of the runs are aligned, meaning that  $F$ is contained in this alignment unless $\alpha>\alpha'$. Similarly, in the third alignment, the same reasoning applies to the right extensions. Therefore, $F$ is a unique substring of $F'$ if and only if $\alpha>\alpha'$ and $\beta>\beta'$.

        \item If $e = e' - 1$, there are two possible alignments. In the first alignment, the left extensions of the runs are aligned while the right extension of $F$ coincides with a complete occurrence of the Lyndon root of $F'$; thus, $F$ is contained in this alignment if $\alpha\le \alpha'$. The same reasoning applies symmetrically for the second alignment. Therefore, $F$ is a unique substring of $F'$ if and only if either $\alpha > \alpha'$ or $\beta > \beta'$, but not both.

        \item If $e = e'$, there is only one possible alignment of the Lyndon roots. For $F$ to be a substring of $F'$, both extensions of $F$ must be shorter than or equal to those of $F'$; that is, $\alpha\le \alpha'$ and $\beta\le \beta'$.

        \item If $e > e'$, any alignment will result in an extension of $F'$ being aligned with a Lyndon root of $F$; therefore, $F$ cannot possibly be a substring of $F'$. \qedhere
    \end{enumerate}
\end{proof}

For the following definitions, let $\mathcal{R}_{\lambda}$ be the set of $\tau$-runs in $S$ sharing the  same Lyndon root $\lambda$. Let $r:=|\lambda|$, and let $e^{\max}_{\lambda}$ denote the maximum exponent among all $\tau$-runs in $\mathcal{R}_{\lambda}$.

\begin{definition}[Mapping $\mathfrak{h}$]
    Let $R \in \mathcal{R}_{\lambda}$ be a run with $\Lrep(R)= (\lambda, e,\alpha, \beta)$. We define the function $\mathfrak{h}$, which maps runs to sets of up to three points as follows:
    \begin{itemize}
        \item If $e < e^{\max}_{\lambda} - 2$, then $\mathfrak{h}(R) = \emptyset$;
        \item If $e = e^{\max}_{\lambda} - 2$, then $\mathfrak{h}(R) = \{(\alpha,\beta)\}$;
        \item If $e = e^{\max}_{\lambda} - 1$, then $\mathfrak{h}(R) = \{(r + \alpha,\beta), (\alpha,r + \beta)\}$;
        \item If $e = e^{\max}_{\lambda}$, then $\mathfrak{h}(R) = \{(2r-1,\beta), (\alpha,2r-1), (r+\alpha,r+\beta)\}$.
    \end{itemize}
\end{definition}

\begin{definition}[Mapping $\mathfrak{g}$]
    Let $p = (x,y)$ be an integer point in $[0,2r - 1]^2$. We define the function $\mathfrak{g}$ as a mapping from such points to candidate runs as follows:
    \begin{itemize}
        \item $\mathfrak{g}(p)$ has Lyndon root $\lambda$;
        \item The left and right extensions of $\mathfrak{g}(p)$ are $\alpha' = x \bmod r$ and $\beta' = y \bmod r$, respectively;
        \item The exponent $e'$ of $\mathfrak{g}(p)$ is determined by the quadrant: $e' =e^{\max}_{\lambda}$, if ($x \ge r$ and $y \ge r$); $e' = e^{\max}_{\lambda} - 1$, if  ($x \ge r$ and $y<r$) or ($x<r$ and $y \ge r$); $e' =e^{\max}_{\lambda} - 2$, if  ($x < r$ and $y < r$).
    \end{itemize}
\end{definition}

\begin{example}
Let $\mathcal{R}_{\lambda}$ be a set of runs sharing the Lyndon root $\lambda = \texttt{ab}$ with period $r=|\lambda|=2$. Assume that $\mathcal{R}_{\lambda}$ consists of three runs, with a maximum exponent of $e^{\max}_{\lambda}=6$:
\begin{itemize}
\item $\Lrep(R_1) = (\texttt{ab}, 6, 1, 1) \implies \mathfrak{h}(R_1) = \{(3, 1), (1, 3), (3, 3)\}$;
\item $\Lrep(R_2) = (\texttt{ab}, 5, 0, 1) \implies \mathfrak{h}(R_2) = \{(2, 1), (0, 3)\}$;
\item $\Lrep(R_3) = (\texttt{ab}, 4, 1, 0) \implies \mathfrak{h}(R_3) = \{(1, 0)\}$.
\end{itemize}
The geometric domain is $[0, 3]^2$, as $2r-1=3$. Applying the mapping function $\mathfrak{h}$  to all runs in  $\mathcal{R}_{\lambda}$ generates the point set $P= \{(3, 1), (1, 3), (3, 3) ,(2, 1), (0, 3), (1, 0)\}$.

Consider the point $(2,2)$. Since $x \ge r$ and $y \geq r$ (specifically, $2 \ge 2$), the mapping $\mathfrak{g}(2,2)$ assigns the exponent $e' = 6$ and extensions $\alpha' = 2 \bmod 2 = 0$ and $\beta' = 2 \bmod 2 = 0$. 
The mapping $\mathfrak{g}$ thus yields the run: $\mathfrak{g}(2, 2) = (\lambda, 6, 0, 0)$. This representation corresponds to the string $\lambda^6 = (\texttt{ab})^6 = \texttt{abababababab}$. Note that the point $(2, 2)$ lies on the primary skyline of~$P$ and is minimal in terms of $x+y$. This geometric property ensures that the corresponding run $\mathfrak{g}(2, 2)$ belongs to exactly one periodic alignment among the runs in $\mathcal{R}_{\lambda}$ (specifically within $R_1$), making $\texttt{abababababab}$ a valid SUS candidate.
\end{example}

\begin{figure}
    \centering
    \begin{subfigure}{0.3\textwidth}
        \centering
        \begin{tikzpicture}[x=2.5cm,y=2.5cm]
            \draw (0.5, -1mm) -- +(0, 2mm);
            \draw (-1mm, 0.5) -- +(2mm, 0);
            \node[anchor=east] at (-1mm, 0.5) {$r$};
            \node[anchor=north] at (0.5, -1mm) {$r$};

            \draw [dotted] (0.5, 1mm) -- (0.5, 1);
            \draw [dotted] (1mm, 0.5) -- (1, 0.5);

            \filldraw[fill=red!8] (0, 0) -- (0.3, 0) -- (0.3, 0.4) -- (0, 0.4);
            \node at (0.3, 0.4) {$\bullet$};
            
            \draw[->] (0, 0) -> (1, 0);
            \draw[->] (0, 0) -> (0, 1);
        \end{tikzpicture}
        
        \caption{$e = e^{\max}_{\lambda} - 2$}
    \end{subfigure}
    \begin{subfigure}{0.3\textwidth}
        \centering
        \begin{tikzpicture}[x=2.5cm,y=2.5cm]
            \node[anchor=east] at (-1mm, 0.5) {$r$};
            \node[anchor=north] at (0.5, -1mm) {$r$};

            \filldraw[fill=red!8] (0, 0) -- (0.8, 0) -- (0.8, 0.4) -- (0.3, 0.4) -- (0.3, 0.9) -- (0, 0.9) -- cycle;
            
            \filldraw[fill=red!22] (0,0) -- (0.3, 0) -- (0.3, 0.4) -- (0, 0.4);

            \draw[->] (0, 0) -> (1, 0);
            \draw[->] (0, 0) -> (0, 1);
            \draw (0.5, -1mm) -- +(0, 2mm);
            \draw (-1mm, 0.5) -- +(2mm, 0);

            \draw [dotted] (0.5, 1mm) -- (0.5, 1);
            \draw [dotted] (1mm, 0.5) -- (1, 0.5);

            \node at (0.8, 0.4) {$\bullet$};
            \node at (0.3, 0.9) {$\bullet$};
        \end{tikzpicture}
        
        \caption{$e = e^{\max}_{\lambda} - 1$}
    \end{subfigure}
    \begin{subfigure}{0.3\textwidth}
        \centering
        \begin{tikzpicture}[x=2.5cm,y=2.5cm]
            \node[anchor=east] at (-1mm, 0.5) {$r$};
            \node[anchor=north] at (0.5, -1mm) {$r$};

            \filldraw[fill=red!8] (0, 0) -- (1.0, 0.0) -- (1.0, 0.4) -- (0.8, 0.4) -- (0.8, 0.9) -- (0.3, 0.9) -- (0.3, 1.0) -- (0.0, 1.0) -- cycle;

            \filldraw[fill=red!22] (0,0) -- (0.8, 0) -- (0.8, 0.4) -- (0.3, 0.4) -- (0.3, 0.9) -- (0, 0.9);

            \draw[->] (0, 0) -> (1, 0);
            \draw[->] (0, 0) -> (0, 1);
            \draw (0.5, -1mm) -- +(0, 2mm);
            \draw (-1mm, 0.5) -- +(2mm, 0);

            \draw [dotted] (0.5, 1mm) -- (0.5, 1);
            \draw [dotted] (1mm, 0.5) -- (1, 0.5);
            
            \node at (1.0, 0.4) {$\bullet$};
            \node at (0.3, 1.0) {$\bullet$};
            \node at (0.8, 0.9) {$\bullet$};
        \end{tikzpicture}
        
        \caption{$e = e^{\max}_{\lambda}$}
    \end{subfigure}
    
    \caption{The mapping $\mathfrak{h}$ from runs to sets of points in $[0, 2r-1]^2$. The light shaded part represents the primary skyline, while dark shaded regions indicate areas dominated by multiple points, where no unique substring can exist.}
    \label{fig:per:h}
\end{figure}

\begin{lemma}
\label{lem:per:skyline}
    Let $R \in \mathcal{R}_{\lambda}$ be a run  with $\Lrep(R)= (\lambda, e,\alpha, \beta)$, 
    and let $p = (x,y)$ be a point in the domain $[0,2r-1]^2$. Then $p$ is in the primary skyline of $\mathfrak{h}(R)$ if and only if $\mathfrak{g}(p)$ is a unique substring of $R$.
\end{lemma}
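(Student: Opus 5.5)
The plan is a case analysis on the exponent $e$ of $R$ relative to $e^{\max}_\lambda$, matching the four branches in the definition of $\mathfrak{h}$. In each branch we also split on the quadrant of $p=(x,y)$, which fixes the exponent $e'$ of $\mathfrak{g}(p)$. Write $e'=e^{\max}_\lambda-2+[x\ge r]+[y\ge r]$, $\alpha'=x\bmod r$ and $\beta'=y\bmod r$. We then apply \cref{lem:per:substring} with the pair $(\mathfrak{g}(p),R)$ in the roles of $(F,F')$, and compare its verdict with direct membership of $p$ in the shadows of the points of $\mathfrak{h}(R)$.

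The dictionary that makes the comparison work is this. Since $0\le\alpha',\beta'<r$, write $x=\alpha'+r\cdot[x\ge r]$ and $y=\beta'+r\cdot[y\ge r]$. Then a comparison such as $x\le r+\alpha$ with $x\ge r$ is equivalent to $\alpha'\le\alpha$. Likewise $x\le\alpha$ with $x<r$ is equivalent to $\alpha'\le\alpha$, and $x\le\alpha<r$ forces $x<r$. A coordinate equal to $2r-1$ dominates everything in the domain.

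\textbf{Case $e=e^{\max}_\lambda$.} Here $e-e'\in\{0,1,2\}$ and $\mathfrak{h}(R)=\{(2r-1,\beta),(\alpha,2r-1),(r+\alpha,r+\beta)\}$.
\begin{itemize}
\item If $p$ is in the top-right quadrant, $e'=e$. Only $(r+\alpha,r+\beta)$ can dominate $p$, since $\beta<r\le y$ and $\alpha<r\le x$ exclude the other two points. It dominates $p$ iff $\alpha'\le\alpha$ and $\beta'\le\beta$, which is Case~\ref{lem:per:substring:4}.
\item If $x\ge r>y$, $e'=e-1$. The points possibly dominating $p$ are $(2r-1,\beta)$, which does so iff $\beta'\le\beta$, and $(r+\alpha,r+\beta)$, which does so iff $\alpha'\le\alpha$. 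Exactly one of them dominates $p$ iff exactly one of $\alpha'>\alpha$, $\beta'>\beta$ holds, matching Case~\ref{lem:per:substring:3}. The quadrant $x<r\le y$ is symmetric.
\item If $x,y<r$, $e'=e-2$. The first two points both dominate $p$ unless $\beta'>\beta$, respectively $\alpha'>\alpha$, and the third always dominates $p$. So $p$ is in exactly one shadow iff $\alpha'>\alpha$ and $\beta'>\beta$, matching Case~\ref{lem:per:substring:2}.
\end{itemize}

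\textbf{Case $e=e^{\max}_\lambda-1$.} Now $e'\in\{e+1,e,e-1\}$ and $\mathfrak{h}(R)=\{(r+\alpha,\beta),(\alpha,r+\beta)\}$.
\begin{itemize}
\item If $p$ is in the top-right quadrant, $e'=e+1$. Neither point dominates $p$, because $\beta<r\le y$ and $\alpha<r\le x$. This matches Case~\ref{lem:per:substring:5}.
\item If $x\ge r>y$, $e'=e$. Only $(r+\alpha,\beta)$ can dominate $p$, and it does so iff $\alpha'\le\alpha$ and $\beta'\le\beta$, matching Case~\ref{lem:per:substring:4}. The other mixed quadrant is symmetric.
\item If $x,y<r$, $e'=e-1$. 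The first point dominates $p$ iff $\beta'\le\beta$, and the second iff $\alpha'\le\alpha$. This gives the exclusive-or condition of Case~\ref{lem:per:substring:3}.
\end{itemize}

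\textbf{Case $e=e^{\max}_\lambda-2$.} Here $\mathfrak{h}(R)=\{(\alpha,\beta)\}$ with $\alpha,\beta<r$, so $(\alpha,\beta)$ dominates only points with $x,y<r$. For such $p$ we have $e'=e$, and the claim is Case~\ref{lem:per:substring:4}. In every other quadrant $e'>e$ and nothing dominates $p$, matching Case~\ref{lem:per:substring:5}.

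\textbf{Case $e<e^{\max}_\lambda-2$.} The skyline is empty, and $e'\ge e^{\max}_\lambda-2>e$, so Case~\ref{lem:per:substring:5} says $\mathfrak{g}(p)$ is not a substring of $R$ at all.

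The main obstacle is bookkeeping rather than ideas. We must confirm that $\mathfrak{g}(p)$ is a legitimate Lyndon-represented string with $\alpha',\beta'<r$, since Lemma~\ref{lem:per:substring} is phrased for runs but its proof uses only the Lyndon-representation alignment. We must also check each boundary equality, for instance $x=r+\alpha$ or coordinates equal to $2r-1$. I would handle both by stating the dictionary translation once as a small claim and then running the table of cases above.
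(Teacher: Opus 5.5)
Your proof is correct and follows the same route as the paper. Both are a case analysis that determines which points of $\mathfrak{h}(R)$ dominate $p$ in each quadrant and matches the count against the cases of \cref{lem:per:substring} applied to the pair $(\mathfrak{g}(p),R)$. The difference is organizational: the paper splits only by $e-e'\in\{0,1,2\}$, whereas you split by $e$ relative to $e^{\max}_{\lambda}$ and then by quadrant. Your split also covers explicitly the cases $e'>e$ and $e<e^{\max}_{\lambda}-2$, where $\mathfrak{g}(p)$ is not a substring of $R$ and $p$ must lie outside the skyline; the paper's proof leaves these implicit.
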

\begin{proof}
    Let $\mathfrak{g}(p)= R'$, where $\Lrep(R')= (\lambda,e',\alpha',\beta')$. By the definition of $\mathfrak{g}$, we have $\alpha' = x \bmod r$ and $\beta' =y \bmod r$. 
    The quadrant containing $p$ determines the candidate exponent $e'$, while the set of points $\mathfrak{h}(R)$ (and thus its primary skyline) is determined by the run exponent $e$; see \cref{fig:per:h}. We verify the correspondence with the  cases in \cref{lem:per:substring} by distinguishing how $e$ compares to $e'$. 
  
    \begin{itemize}
        \item {\bf Case $e' = e$:}
        The point $p$ can be dominated by exactly one point from $\mathfrak{h}(R)$ with relative coordinates $(\alpha, \beta)$.
        Hence, $p$ belongs to the primary skyline if and only if  it is dominated by this single point, which occurs when $\alpha' \le \alpha$ and $\beta' \le \beta$. This corresponds exactly to the condition for $\mathfrak{g}(p)$ to be a unique substring of $R$ (\cref{lem:per:substring}: Case \ref{lem:per:substring:4}).

        \item {\bf Case $e' = e - 1$:}
        In this case, $p$ can be dominated by at most two points from $\mathfrak{h}(R)$.
        Point $p$ is dominated by the first point if $\alpha' \le \alpha$ and by the second if $\beta' \le \beta$. Thus, $p$ belongs to the primary skyline if it is dominated by exactly one of these points. This requires either $\alpha' \le \alpha$ or $\beta' \le \beta$ but not both. Again, this matches the condition for $\mathfrak{g}(p)$ to be a unique substring of $R$ (\cref{lem:per:substring}: Case~\ref{lem:per:substring:3}).

        \item {\bf Case $e' = e - 2$:} This occurs only  when $e = e^{\max}_{\lambda}$ and $p$ is in the lower-left quadrant ($x,y < r$). Here, $p$ is always dominated by the point $(r+\alpha, r+\beta)$ of $\mathfrak{h}(R)$. For $p$ to be in the primary skyline, it must \emph{not} be dominated by the other two points $(2r-1, \beta)$ and $(\alpha, 2r-1)$.  This lack of dominance occurs if and only if $\alpha' >\alpha$ and $\beta' >\beta$. This happens exactly when $\mathfrak{g}(p)$ is a unique substring of $R$ (\cref{lem:per:substring}: Case~\ref{lem:per:substring:2}). \qedhere
    \end{itemize}
\end{proof}

\begin{lemma}
    \label{lem:per:skyline-reduction}
    Given a set $\mathcal{R}_{\lambda}$ of runs sharing the same Lyndon root $\lambda$, a SUS among these runs can be found by solving a \SkylineProblem instance containing at most $3|\mathcal{R}_{\lambda}|$ points. The corresponding set of (unsorted) points can be constructed in $\cO(|\mathcal{R}_{\lambda}|)$ time. 
\end{lemma}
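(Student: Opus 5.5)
The plan is to take $P := \bigsqcup_{R \in \mathcal{R}_\lambda} \mathfrak{h}(R)$ as the \SkylineProblem instance and argue that its primary skyline, restricted to $[0,2r-1]^2$, consists exactly of the points $p$ for which $\mathfrak{g}(p)$ is a unique substring of $S$ among the strings with Lyndon root $\lambda$ and exponent at least $e^{\max}_\lambda-2$. Each run contributes at most three points, so $|P| \le 3|\mathcal{R}_\lambda|$. Building $P$ needs one pass over $\mathcal{R}_\lambda$ to compute $e^{\max}_\lambda$ and a second pass to emit $\mathfrak{h}(R)$ for each $R$. This takes $\cO(|\mathcal{R}_\lambda|)$ time, since every point is computed from $(e,\alpha,\beta)$ and $r$ in $\cO(1)$ time.

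\textbf{Step 1 (local-to-global).} A point $p$ lies in the primary skyline of $P$ iff exactly one point of $P$ dominates it. Since $P$ is a disjoint union, this holds iff there is exactly one run $R$ whose $\mathfrak{h}(R)$ contains a point dominating $p$, that point is the only dominator within $\mathfrak{h}(R)$, and no point of any other $\mathfrak{h}(R')$ dominates $p$.

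I will show that the number of points of $\mathfrak{h}(R)$ dominating $p$ equals the number of occurrences of $\mathfrak{g}(p)$ inside $R$. Equivalently, it counts the alignments considered in \cref{lem:per:substring}, capped appropriately. This is a slight strengthening of \cref{lem:per:skyline}, and it follows from the same case analysis, which already counts dominating points. For runs with $e<e^{\max}_\lambda-2$, $\mathfrak{h}(R)=\emptyset$, and indeed such runs contain no string with exponent $\ge e^{\max}_\lambda-2$ (Case 5 of \cref{lem:per:substring}). For $e=e^{\max}_\lambda$ and $p$ in the upper-right quadrant, note that $2r-1$ dominates any $\alpha'<r$. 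This is what makes the extra points $(2r-1,\beta)$ and $(\alpha,2r-1)$ count the alignments shifted by one root.

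Summing over $R$, the number of dominators of $p$ in $P$ equals the number of occurrences of $\mathfrak{g}(p)$ in $S$. Here I use that every occurrence of a string with period $r$ and length at least $3\tau$ lies in a unique run with period $r$, whose Lyndon root is $\lambda$.

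\textbf{Step 2 (minimization).} Since $|\mathfrak{g}(p)| = (e'+?)r$ plus $\alpha'+\beta'$, where the quadrant shift adds exactly $r$ per coordinate crossing $r$, the length of $\mathfrak{g}(p)$ equals $(e^{\max}_\lambda-2)r + x + y$. So minimizing $x+y$ minimizes the length, and the solution of \cref{lem:skyline-alg} yields a shortest unique string in this family.

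I still need to check two things. First, any output point has $x,y\le 2r-1$, since all points of $P$ lie in $[0,2r-1]^2$. Second, strings with exponent below $e^{\max}_\lambda-2$ never need to be considered, because they are contained at least twice in a maximal run (Case 1 of \cref{lem:per:substring}) and hence are not unique.

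\textbf{Main obstacle.} The delicate step is Step 1: verifying that \emph{counts}, not merely the property ``exactly one,'' are preserved per run, so that summing over runs is valid. A secondary difficulty is handling runs of differing exponent consistently through the single global $e^{\max}_\lambda$. I would redo the three cases of \cref{lem:per:skyline} while tracking the multiplicity.
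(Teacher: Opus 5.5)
Your proposal is correct and follows the paper's route. The instance is $\bigsqcup_{R\in\mathcal{R}_\lambda}\mathfrak{h}(R)$, uniqueness is transferred run by run via the case analysis behind \cref{lem:per:skyline}, and the length is read off as $r(e^{\max}_{\lambda}-2)+x+y$.

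Your strengthening from ``exactly one dominator'' to exact dominator counts is exactly what the paper's step ``hence $p$ is dominated by exactly one point among all sets $\mathfrak{h}(R)$'' needs, because \cref{lem:per:skyline} as stated does not rule out $p$ being dominated twice within some other $\mathfrak{h}(R')$. The case check does confirm the count claim, but one aside is misplaced: for $x,y\ge r$ the points $(2r-1,\beta)$ and $(\alpha,2r-1)$ never dominate $p$, and they matter only when $x<r$ or $y<r$.
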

\begin{proof} 
    Let $e^{\max}_{\lambda}$ denote the maximum exponent among all runs in $\mathcal{R}_{\lambda}$.
    By \cref{lem:per:substring}, we know that any SUS within these runs must have an exponent equal to $e^{\max}_{\lambda} - 2$, $e^{\max}_{\lambda} - 1$ or~$e^{\max}_{\lambda}$.
    Such a substring thus corresponds to $\mathfrak{g}(p)$ for some point $p \in [0,2r - 1]^2$, with $r=|\lambda|$.

    By \cref{lem:per:skyline}, $p$ must lie on the primary skyline of $\mathfrak{h}(R)$ for some $R \in \mathcal{R}_{\lambda}$. Moreover, $p$ cannot belong to the primary skylines of multiple runs; otherwise, $\mathfrak{g}(p)$ would not be unique overall. Hence, $p$ must be dominated by exactly one point among all sets $\mathfrak{h}(R)$, which can be identified  by solving the \SkylineProblem instance on the multiset $\bigsqcup_{R\in \mathcal{R}_{\lambda}} \mathfrak{h}(R)$.
    Since $|\mathfrak{h}(R)| \le 3$ for every $R$, this instance has size $\cO(|\mathcal{R}_{\lambda}|)$. Finally, if the resulting point from this instance is $(x,y)$, it corresponds to a substring of length $r\cdot(e^{\max}_{\lambda}-2)+x+y$.
\end{proof}

\begin{lemma}
\label{lem:per}
Given an instance of \SUS and an integer $\tau$, such that $\tau = \lfloor\frac{1}{15}\log_\sigma n \rfloor$ or $\tau = \lfloor\frac13\log^4 n \rfloor$, we can compute a SUS of $S$ in $\cO(n/\log_\sigma n)$ time 
if it has length $\ell \geq 3\tau$ and period $p \le \frac13 \tau$.
\end{lemma}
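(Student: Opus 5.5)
The plan is to reduce the computation to a collection of \SkylineProblem instances, one per group of $\tau$-runs sharing a Lyndon root, and to solve them all in linear total time. First, we compute all $\tau$-runs of $S$ and group them by (sparse-)Lyndon root using \cref{prp:group}, or \cref{lem:per:find-roots} for the small $\tau$. This takes $\cO(n/\log_\sigma n)$ time and yields $\cO(n/\tau)$ runs, each with its (sparse-)Lyndon representation $(\lambda,e,\alpha,\beta)$.

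The key structural observation is the following. Let $P$ be a SUS with $|P|\ge 3\tau$ and $\per(P)\le \frac13\tau$. Its unique occurrence lies inside a unique maximal periodic extension, namely a $\tau$-run $R$ with the same period $r=\per(P)$. Every occurrence of $P$ in $S$ extends to a $\tau$-run with the same Lyndon root $\lambda$ as $R$. Conversely, a string with period $r$ and length at least $2r$ that occurs in a run with root $\lambda$ is precisely a substring of some run in $\mathcal{R}_\lambda$. Any such string has the form $\lambda'$-periodic with Lyndon representation $(\lambda,e',\alpha',\beta')$. Hence $P$ is unique in $S$ if and only if it is a substring of exactly one run of $\mathcal{R}_\lambda$, and a unique substring of that run. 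By \cref{lem:per:substring} and \cref{lem:per:skyline-reduction}, the shortest such string within group $\lambda$ corresponds to the minimum-sum point of the primary skyline of $\bigsqcup_{R\in\mathcal{R}_\lambda}\mathfrak{h}(R)$. Its length is $r(e^{\max}_\lambda-2)+x+y$. A side condition must be checked: the candidate must have length at least $3\tau$ and period $r$, i.e., $e'\ge 2$. Candidates violating this are discarded, since they are handled by the other cases; a shorter SUS found elsewhere dominates anyway.

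For the algorithm, for each group $\lambda$ we compute $e^{\max}_\lambda$ and build the multiset of at most $3|\mathcal{R}_\lambda|$ points in linear time. We then need the points sorted by one coordinate to apply \cref{lem:skyline-alg}. Coordinates lie in $[0,2r)$, so we sort all groups simultaneously by a single radix/bucket sort on the pair (group id, $x$). The group ids are at most $\cO(n/\tau)$ and $x<2r\le \tau$, so this costs $\cO(n/\tau)$ total. Solving each instance with \cref{lem:skyline-alg} takes $\cO(n/\tau)$ overall. We report the minimum-length candidate together with its position, recovered from the witness run and the offset determined by $\mathfrak{g}$.

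The main obstacle is justifying that uniqueness in $S$ is captured entirely within $\mathcal{R}_\lambda$. In particular, occurrences of $P$ cannot hide in runs of a different root or in non-$\tau$-runs. The argument is that $|P|\ge 3\tau\ge 2r$ forces every occurrence to extend to a run of period $r$ and length at least $3\tau$, hence a $\tau$-run, whose Lyndon root is the Lyndon rotation of $P$'s period. For sparse-Lyndon representations, I would rely on the stated analogy and check that the alignment arguments of \cref{lem:per:substring} transfer.
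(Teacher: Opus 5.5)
Your reduction is the paper's proof. You group the $\tau$-runs by (sparse-)Lyndon root, build the multiset $\bigsqcup_{R\in\mathcal{R}_\lambda}\mathfrak{h}(R)$ per group, bucket-sort all points globally using their small coordinates, and solve each \SkylineProblem instance with \cref{lem:skyline-alg}. The one step that goes beyond the paper is discarding group minima of length below $3\tau$, and your justification for it does not hold as written.

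\textbf{Why the justification fails.} A skyline point only certifies that $\mathfrak{g}(p)$ occurs once among the $\tau$-runs of $\mathcal{R}_\lambda$. Your ``conversely'' sentence (a period-$r$ string of length $\ge 2r$ inside a run with root $\lambda$ lies in a run of $\mathcal{R}_\lambda$) fails when that run is shorter than $3\tau-1$. For example, let $\lambda^e$ be the only $\tau$-run with root $\lambda$. Then the minimum is $(1,1)$, i.e.\ the string $\lambda[r-1]\lambda^{e-2}\lambda[0]$ of length $re-2r+2$. This can be below $3\tau-1$, and it can also occur inside a shorter, non-$\tau$, run with root $\lambda$. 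So the discarded string need not be unique in $S$, and ``a shorter SUS found elsewhere dominates anyway'' does not follow from it. Since \SkylineProblem returns only the minimum, discarding it could a priori lose a valid candidate of length $\ge 3\tau$ from the same group. The paper's proof reports each group's minimum with no length check at all, so your instinct to filter is right; what is missing is why the filter is safe.

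\textbf{How to close it.} Three facts are needed.
\begin{itemize}
\item The shadow of a point $h$ is upward closed inside $[0,h_x]\times[0,h_y]$. So from any point of that shadow you can reach $h$ by unit steps without leaving the primary skyline, and the candidate length $r(e^{\max}_\lambda-2)+x+y$ grows by one per step.
\item For every point $h\in\mathfrak{h}(R)$ other than $(2r-1,\beta)$ and $(\alpha,2r-1)$ of a maximum-exponent run, $\mathfrak{g}(h)=R$. Hence $|\mathfrak{g}(h)|\ge 3\tau-1$.
\item A point $(x,y)$ with $x\ge r>y$ and its twin $(x-r,y+r)$ encode the same string. For every run $R$, the two points are dominated by equally many points of $\mathfrak{h}(R)$, since both count the left- and right-aligned occurrences from \cref{lem:per:substring}. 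Moreover, if $(x,y)$ is dominated by $(2r-1,\beta)$, then its twin is dominated by $(r+\alpha,r+\beta)$ of the same run; symmetrically for $(\alpha,2r-1)$.
\end{itemize}

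Now suppose a group's minimum has length below $3\tau$. Either it already has length $3\tau-1$, or you can walk from it, or from its twin, toward a generating point $h$ with $|\mathfrak{g}(h)|\ge 3\tau-1$. The walk passes a skyline point whose string has length exactly $3\tau-1$. That string occurs once among the $\tau$-runs. Because its length is at least $3\tau-1$, every occurrence of it in $S$ extends to a $\tau$-run with root $\lambda$, so it is unique in $S$. Hence the SUS is shorter than $3\tau$ and the lemma's hypothesis fails. Conversely, when the hypothesis holds, the minimum of $P$'s group has length at least $3\tau$ and is a genuine SUS. With this argument added, your filter is correct.
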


\begin{proof}
    By \cref{lem:per:find-roots} and \cref{prp:group}, all runs in $S$ can be computed and grouped by their Lyndon roots in $\cO(n/\log_\sigma n)$ time.
    Each $\tau$-run is encoded in $\cO(1)$ space. 
    Next, using \cref{lem:per:skyline-reduction}, we construct \SkylineProblem instances for all groups. Since each $\tau$-run generates a constant number of points, the total number of points across all instances is $\cO(n/\tau)$. These points are computed in $\cO(n/\tau)$ time as per \cref{lem:per:find-roots}. 
    To solve all \SkylineProblem instances in linear time (\cref{lem:skyline-alg}), the points for each $\mathcal{R}_{\lambda}$ instance must be sorted along one axis. This is achieved by globally sorting  all generated points using bucket sort. 
    Because the coordinates $x$ and $y$ for any point in an instance for root $\lambda$ are bounded by $2|\lambda|-1 < 2\tau$, the bucket sort can be performed in $\cO(n/\tau + \tau)$ time. After sorting,  the points are regrouped per their Lyndon roots. 
    For either chosen value of $\tau$, we have  $\cO(n/\tau + \tau)=\cO(n/\log_\sigma n)$. This yields a total running time of $\cO(n/\log_\sigma n)$.
\end{proof}

\subsection{Putting It All Together}

\sus*
\begin{proof}
    We have described algorithms to find SUSs with differing lengths and periods: \cref{lem:short} handles any length up to $\frac15 \log_\sigma n$ with any period; \cref{lem:med} handles lengths in the range $[\frac15 \log_\sigma n, 2^{\sqrt{\log n}}]$ with period greater than $\frac{1}{45}\log_\sigma n$; and \cref{lem:long} handles lengths above $\log^4n$ with period greater than $\frac19 \log^4n$.
    Finally, with parameter values $\tau:=\lfloor\frac{1}{15}\log_\sigma n \rfloor$ and $\tau:= \lfloor\frac13 \log^4n \rfloor$, \cref{lem:per} handles the larger length ranges with smaller periods. We run each of the algorithms and return the globally shortest substring output across all instances. The total running time is $\cO({n\log \sigma}/{\sqrt{\log n}})$ which is asymptotically dominated by the complexity of the medium-length aperiodic case (\cref{lem:med}).
\end{proof}

\section{Reduction of Packed SUS to Binary Alphabets}\label{sec:bin}
In this section, we show that the \SUS problem over an integer alphabet can be reduced to the binary case in optimal time without any blow-up in the space needed to represent the obtained string.
This reduction allows us to assume that $\sigma=2$ for \SUS algorithms in the packed setting.

\begin{lemma}
\label{lem:binary}
Any instance of \SUS for a string $S$ of length $n$ over alphabet $[0,\sigma)$ can be reduced to an instance of \SUS for a string $S'$ of length $\cO(n\log\sigma)$ over a binary alphabet. If both strings are packed, then the reduction works in $\cO\left(\frac{n\log \sigma}{\log n}\right)= \cO\left(\frac{n}{\log_{\sigma} n}\right) $ time.
\end{lemma}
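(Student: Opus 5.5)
We encode each letter of $S$ by a fixed-length binary codeword, placed so that any occurrence of a long enough binary string is forced to align with letter boundaries. Let $k := \lceil \log \sigma \rceil$. Map each letter $c \in [0,\sigma)$ to the codeword $\phi(c) := 1\cdot 0\cdot b_{k-1}\,0\,b_{k-2}\,0\cdots 0\,b_0$, where $b_{k-1}\cdots b_0$ is the binary expansion of $c$. Equivalently, each block starts with the marker $1$ and every payload bit is preceded by a separator $0$. Set $m := |\phi(c)| = 2k+2$. Put $S' := \phi(S[0])\cdots\phi(S[n-1])$, so $|S'| = nm = \cO(n\log\sigma)$.

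A simpler scheme is to make each block start with $11$ and have no other two consecutive $1$s. In the block $1\,1\,0\,b_{k-1}\,0\,b_{k-2}\cdots 0\,b_0$, the pattern $11$ occurs only at block starts, because every payload bit is preceded by a $0$ and a block ends with $b_0$ followed by the next block's $11$. The risk is $b_0 = 1$, which would create $111$; this is still detectable, since the first $11$ pair of a block is then the last two of the three $1$s. To avoid this case analysis, I will instead use the block $1\,1\,0\,(0b_i)_i\,0$, ending with a $0$. Then $11$ occurs exactly at block starts, and the block length is $m = 2k+4 = \cO(\log\sigma)$.

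Correctness.
\begin{itemize}
\item (a) If $P$ is unique in $S$ and occurs at $i$, then $\phi(P)$ occurs at $im$. Any other occurrence of $\phi(P)$ contains the block-start $11$ patterns, hence is aligned to a multiple of $m$, and so decodes to another occurrence of $P$. Therefore $\phi(P)$ is unique in $S'$.
\item (b) Conversely, let $Q$ be unique in $S'$ at position $j$. Extend $Q$ to the minimal block-aligned fragment $\phi(S[a\dd b])$ covering it, which adds fewer than $2m$ symbols. If $Q$ contains two full blocks, occurrences of $Q$ are aligned, and $S[a\dd b]$ is unique in $S$ with length at most $|Q|/m+2$. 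The short case, $|Q| < 2m$, has to be treated separately.
\end{itemize}
So the SUS lengths satisfy $\ell' \approx m\ell$ up to an additive $\cO(m)$. This is not exact, so I would not claim that SUS lengths correspond exactly. Instead, the reduction outputs $S'$, and from the binary SUS at position $j$ we recover a candidate $S[a\dd b]$ together with a few neighbouring lengths.

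Main obstacle. The difficulty is making the correspondence exact, so that a SUS of $S'$ maps back to a SUS of $S$. My fix is to append a boundary-preserving argument: choose the codeword so that the shortest unique binary substring is always an exact union of blocks up to a fixed trim. Concretely, I would argue that a minimal unique $Q$ must start right after a $11$ and end right before one. Otherwise, removing the partial boundary symbols keeps the string unique, because those symbols are determined by the alignment forced by the interior markers. Then $|Q| = m\ell - c$ for a constant $c$, and minimality transfers. Short SUSs of $S$ ($\ell \le 2$) can be checked directly in $\cO(n/\log_\sigma n)$ time by tabulation, as in \cref{lem:short}.

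Time. $\phi$ is applied to chunks of $\frac14\log_\sigma n$ letters via a lookup table of size $\cO(\sqrt n)$. Each such chunk yields $\cO(\log n)$ output bits, that is, $\cO(1)$ words. The total time is therefore $\cO(n\log\sigma/\log n)$.
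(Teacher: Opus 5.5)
Your alignment argument (part (a), and the fact that a unique $Q$ containing a $11$ lies inside $\phi(T)$ for a unique $T$) is fine, and so is the tabulation-based running time. The gap is the step you flag as the main obstacle. For your codeword a shortest unique substring of $S'$ need not decode to a shortest unique substring of $S$, and the trimming argument you propose is false. Only symbols fixed by the alignment can be deleted from the ends of a unique $Q$ without losing uniqueness: the marker $11$, the separator $0$s, and the final $0$. The payload bits of the first and last covered letters carry information, so they cannot be deleted. Hence a minimal unique $Q$ typically starts and ends at payload bits in the middle of blocks, not right after a $11$. Suppose $Q$ covers $\ell\ge 2$ letters and uses the $u$ lowest bits of the first one and the $v$ highest bits of the last one (your blocks list $b_{k-1}$ first). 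Then $|Q|=(\ell-1)m+2u+2v-2k-1$ with $u,v\in[1,k]$. This range has width $4k-4$, while one extra letter costs only $m=2k+4$. So for $k\ge 5$ the ranges for $\ell$ and $\ell+1$ letters overlap, $|Q|$ is not of the form $m\ell-c$, and a shortest unique substring of $S'$ can cover more letters than a SUS of $S$ has.

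Here is a concrete failure. Take pairwise distinct letters such that $A,A'$ and $Z,Z'$ differ only in the most significant bit, and $B,B'$ and $X,X'$ differ only in the least significant bit. Let $S$ consist of the fragments $ACB$, $A'CB$, $A'CB$, $ACB'$, $ACB'$, $XDEZ$, $XDEZ'$, $X'DEZ$, $X'DEZ'$, each preceded by $GG$, with a final $GG$ at the end. Every substring of length at most $3$ other than $ACB$ occurs at least twice, so $ACB$ is the only SUS of $S$; it has length $3$, so your side check for $\ell\le 2$ does not catch it. A unique binary substring decoding to $ACB$ needs all $k$ payload bits of $A$ (else it also matches $A'CB$) and of $B$ (else it matches $ACB'$), so its length is at least $2m+2k-1$. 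In contrast, the fragment of $S'$ from $b_0$ of $X$ to $b_{k-1}$ of $Z$ in the encoding of $XDEZ$ is unique and has length $2m+7$. Thus for $k\ge 5$ the SUS of $S'$ decodes to a length-$4$ unique substring at a different location, and no choice of ``neighbouring lengths'' around it recovers $ACB$.

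The missing idea is to give each block an information-free stretch at least as long as its payload. The paper does this with $\pi(a)=0^k\,1\,b(a)\,1$, where $k=\lceil\log\sigma\rceil+2$ and $|b(a)|=k-2$. Then the slack from partially covered end letters, at most $2(k-1)$, is below the per-letter cost $2k$. The paper also avoids any exact length formula: it compares $|U|\ge 2k(\ell-2)+k+2$ with the length $2k|T'|-k$ of the trimmed encoding of any unique $T'$, and concludes $|T'|\ge\ell$.
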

\begin{proof}
 Let $k:=\lceil\log\sigma\rceil+2$, and let $b(a)$ for $a\in[0,\sigma)$ be the binary representation of letter~$a$ padded to length $k-2$. We define the morphism  $\pi(a)=0^k \cdot 1\cdot b(a)\cdot 1$.
We construct $S'$ by concatenating $\pi(S[0])\pi(S[1])\cdots\pi(S[n-1])$ and appending $0^k$.

   \proofsubparagraph{Correctness.} The use of the $0^k$ blocks ensures that, for any string $T$, $\pi(T)$ can occur only at positions equivalent to $0\bmod{2k}$ in $S'$. Consequently, a string $T$ is unique in $S$ if and only if $\pi(T)$ is unique in $S'$. We now show that a SUS of $S'$ necessarily corresponds to a SUS of~$S$ (and lets us retrieve it).
   
    Let $U$ be a SUS of $S'$.
    If $U$ is a substring of some fragment $\pi(a)0^k$ then its uniqueness implies that the letter $a$ is unique in $S$, and thus that $a$ is a SUS of $S$.
    Otherwise, $U$ must contain at least one occurrence of~$0^k$.
    Additionally, $U$ cannot start (symmetrically end) inside a fragment equal to $0^k$ as removing the first $0$ would yield a shorter unique substring (this letter is fixed by the position modulo $2k$), 
    contradicting the minimality of $U$. 
    Thus, $U$ must be of the form $x 0^k U' 0^k y$ (or $x 0^k y$) with $|x|,|y| \in [1, k]$.
    $U$ is a substring of $\pi(T)$, for some unique substring $T$ of $S$ of length $\ell$ with $|0^k U'| = 2k (\ell-2)$---we claim that this $T$ is a shortest unique substring of $S$.
   
   To prove this, let $T'$ be a unique substring of $S$. We easily obtain a unique substring $U'=\pi(T')[k\dd 2k |T'|)$ of $S'$ (by removing the prefix $0^k$ from $\pi(T')$).
   By the minimality of $U$, we have $2k |T'|-k=|U'|\ge |U|= 2k(\ell-2) +k +|x|+|y|\ge 2k(\ell-2)+k+2=2k (\ell-1)-k+2$.
   Therefore, $|T'|\ge\ell$ and hence, $T$ is a SUS of $S$.

   \proofsubparagraph{Complexity.}
    We have $|S'|=2k |S|+k= (2n+1) k = (2n+1) (\lceil\log\sigma\rceil +2)=\cO(n\log\sigma)$.
   As the reduction proceeds by constructing length-$2k$ blocks letter by letter, it can be trivially applied to each machine word in $\cO(1)$ time (we insert $1\cdot0^k\cdot 1$ every $k-2$ bits).
   \end{proof}

\begin{example}
    Let $S=\texttt{gctctca}$ with $\sigma=4$, $b(\texttt{a})=\texttt{00}$, $b(\texttt{c})=\texttt{01}$, $b(\texttt{g})=\texttt{10}$, and $b(\texttt{t})=\texttt{11}$. We have $k=\lceil\log\sigma\rceil+2=4$,
    $\pi(\texttt{a})=\texttt{00001001}$, $\pi(\texttt{c})=\texttt{00001011}$, $\pi(\texttt{g})=\texttt{00001101}$, and $\pi(\texttt{t})=\texttt{00001111}$. We construct the new instance $S'$ of length $2k\cdot |S|+k=60$ as follows:
    \begin{multline*}
S' = \pi(\texttt{g})\pi(\texttt{c})\pi(\texttt{t})\pi(\texttt{c})\pi(\texttt{t})\pi(\texttt{c})\pi(\texttt{a}) \cdot \texttt{0000} \\
= \texttt{000011010000101100001111000010110000111100001011000010010000}.
\end{multline*}

The SUSs of $S'$ are of length $4$: \texttt{1101} and \texttt{1001} correspond to \texttt{g} and \texttt{a}, respectively.
\end{example}

\section{Computing a Shortest Absent Substring}\label{sec:SAS}

We begin by defining an auxiliary problem closely related to \SUS.

\defproblem{\SUSDiff}{Two strings $S_1$ and $S_2$ with $n = |S_1| + |S_2|$ over an integer alphabet $\Sigma=[0,\sigma)$.}{A shortest substring of $S_1$ that does not occur in $S_2$ (if one exists).}

Similar to the solution for \SUS, we solve the \SUSDiff problem by decomposing it into four cases based on the length $\ell$ and the period $p$ of the sought substring of $S_1$, obtaining the following result.

\begin{theorem}\label{thm:difference}
Any instance of \SUSDiff can be solved in $\cO(n\log \sigma/ \sqrt{\log n})$ time when $S_1$ and $S_2$ are given in packed representation.
\end{theorem}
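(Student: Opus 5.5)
We mirror the four-case decomposition used for \cref{the:sus}, applied to the concatenation $S := S_1 \$ S_2$ (with a separator letter, or equivalently keeping track of which string each anchor/run comes from). We seek a shortest substring $P$ of $S_1$ with $\Occ(P,S_2)=\emptyset$, and split by its length $\ell$ and period $p$ exactly as before: short ($\ell\le\frac15\log_\sigma n$), medium aperiodic, long aperiodic, and periodic with $\tau\in\{\lfloor\frac1{15}\log_\sigma n\rfloor,\lfloor\frac13\log^4 n\rfloor\}$. The key change everywhere is that each data item (fragment, anchor, run) carries a colour in $\{1,2\}$. The uniqueness condition ``covered by exactly one item'' is replaced by ``covered by at least one colour-$1$ item and by no colour-$2$ item''.

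\textbf{Short case.} We repeat \cref{lem:short}: we take overlapping length-$2\ell$ fragments of both strings, deduplicate them while keeping colour information, and build the generalized suffix tree. A post-order traversal marks the nodes whose leaves are all colour $1$. The answer is $\sd(\textsf{parent}(u))+1$ minimized over such nodes $u$, provided it is at most $\ell$.

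\textbf{Medium case.} We define \PairSES on a coloured (multi)set $\mathcal S$: find $\ell_1+\ell_2$ minimal and a colour-$1$ pair $(P_1,P_2)$ such that every colour-$2$ pair $(Q_1,Q_2)$ has $\LCP(P_1,Q_1)<\ell_1$ or $\LCP(P_2,Q_2)<\ell_2$. For prefix families, \cref{lem:med:prefix-max-lcp} adapts by letting $\maxLCP[i]$ range only over colour-$2$ $j$ with $|U_j|\ge|U_i|$. The previous/next-larger computation is done on the colour-$2$ subsequence, and the rest of \cref{lem:med:pairsus-prefix} is unchanged. For $(\alpha,\beta)$-families we use \cref{lem:med:lcs} directly with $\P$ the colour-$1$ pairs and $\Q$ the colour-$2$ pairs; its bit-vectors $G_v$ now matter. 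In each node we replace \cref{lem:med:lcps} by a variant: for each colour-$1$ string, the largest LCP with its nearest colour-$2$ neighbours in the sorted order is computed by a left-to-right and a right-to-left scan, maintaining the running minimum of $\L_v$ since the last colour-$2$ element. This stays linear, so the time bound remains $\cO(n\log\sigma/\sqrt{\log n})$, and synchronizing-set consistency (\cref{def:SSS}) holds across both strings because it is computed on $S$.

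\textbf{Long and periodic cases.} These reduce to a \SkylineDiff problem: given coloured points, find $(x,y)$ minimizing $x+y$ that is dominated by some colour-$1$ point and by no colour-$2$ point. It is solved in linear time on sorted input by the right-to-left sweep of \cref{lem:skyline-alg}. For candidate $x^\star=x_i+1$ we need $y^\star=y_2+1$, where $y_2$ is the maximum $y$ over processed colour-$2$ points, and we must check that this value is at most the maximum $y$ over processed colour-$1$ points; the analogue of \cref{lem:skyline-alg:hor} again restricts $x^\star$ to $0$ or one more than a colour-$2$ $x$-coordinate. \TwoTreesProblem becomes coloured leaves with the condition ``$L(u)\cap L(v)$ contains a colour-$1$ label and no colour-$2$ label'', and it is handled by the same heavy-path pairing. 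For runs, \cref{lem:per:substring} simplifies to plain containment. The map $\mathfrak h$ then needs only the dominating points for each quadrant and applies to runs of both strings with a common $e^{\max}_\lambda$ over the group. Runs of $S_1$ whose exponent is far above every colour-$2$ run with that root make the instance trivial (a short exponent already escapes $S_2$), so we cap exponents at $e_2^{\max}+1$. I expect this periodic adjustment to be the main obstacle: it requires checking that a shortest exclusive string inside a run still lies within $\cO(1)$ exponents of the relevant maxima, and that grouping by Lyndon root through \cref{prp:group} on $S$ does not merge runs crossing the separator. Taking the minimum over all cases yields \cref{thm:difference}.
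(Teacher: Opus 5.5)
Outside the periodic case your plan matches the paper's sketch. It uses the same four-way split, colour tags, \PairSES with a nearest-opposite-colour LCP scan, the same two-maxima sweep for \SkylineDiff, and coloured leaves in the two-trees reduction. Computing the synchronizing set and the runs on the concatenation of $S_1$ and $S_2$ with a separator is a sensible way to get consistency across the two strings.

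The gap is in the periodic case. There you go beyond the paper's one-line description (a \SkylineDiff instance per Lyndon root, with $P_1$ built from the $S_1$-runs and $P_2$ from the $S_2$-runs) and commit to a normalisation that fails. Let $E_2$ be the largest exponent of an $S_2$-run with root $\lambda$. Your claim that an $S_1$-run whose exponent is far above $E_2$ makes the instance trivial is false. Moreover, capping that run's exponent at $E_2+1$ while keeping its extensions $(\alpha,\beta)$ discards containments.

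Here is a counterexample. Take $r=|\lambda|\ge 3$, let the $S_2$-runs with root $\lambda$ be $(\lambda,E_2,r-1,0)$ and $(\lambda,E_2,0,r-1)$, and let $S_1$ contain the run $\lambda^{E_2+3}$.
\begin{itemize}
\item By \cref{lem:per:substring} read as plain containment, every string with root $\lambda$ and exponent $\le E_2-1$ occurs in $S_2$, and so does every string with exponent $E_2$ and a zero extension.
\item The string $P=\lambda[r-1]\cdot\lambda^{E_2}\cdot\lambda[0]$ (exponent $E_2$, extensions $(1,1)$, length $rE_2+2$) does not occur in $S_2$, yet it lies in the $S_1$-run.
\item After capping, that run becomes $(\lambda,E_2+1,0,0)$, with $\mathfrak h$-points $(2r-1,0)$, $(0,2r-1)$, $(r,r)$. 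These dominate neither encoding $(r+1,1)$ nor $(1,r+1)$ of $P$.
\item The \SkylineDiff instance therefore returns $(r,r)$, i.e.\ $\lambda^{E_2+1}$ of length $rE_2+r>rE_2+2$. Nothing in the other cases recovers $P$.
\end{itemize}

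The fix is to anchor the window at $E_2$, not at the capped or uncapped group maximum. A shortest exclusive string with root $\lambda$ has exponent in $\{E_2-1,E_2,E_2+1\}$:
\begin{itemize}
\item an exponent of at most $E_2-2$ means the string lies in the $S_2$-run of exponent $E_2$;
\item an exponent of at least $E_2+2$ means the string is beaten by $\lambda^{E_2+1}$, which lies in the same $S_1$-run and in no $S_2$-run.
\end{itemize}
So apply $\mathfrak h$ with $e^{\max}_\lambda:=E_2+1$ to all $S_2$-runs and to all $S_1$-runs of exponent at most $E_2+1$. Map every $S_1$-run of exponent at least $E_2+2$ to the single point $(2r-1,2r-1)$. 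This over-approximates containment only at exponent $E_2+1$, where the minimiser $(r,r)$, i.e.\ $\lambda^{E_2+1}$, is genuinely valid and no $S_2$-point dominates anything. Roots with no $S_2$-run need a separate rule, since $E_2$ is then undefined. With these changes the periodic case goes through with the same bucket-sorted, linear-time \SkylineDiff calls.
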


We explain how one can adapt our algorithm for the \SUS problem to obtain \Cref{thm:difference} in \Cref{sec:exclusive}.
We next present an efficient construction of de Bruijn sequences in the packed setting, which may be of independent interest.

\begin{definition}[De Bruijn Sequence~\cite{deBruijn1946}]
A \emph{de Bruijn sequence} of order $k$ over an alphabet~$\Sigma$ of size $\sigma$ is a string of length $\sigma^k+k-1$ in which every string from $\Sigma^k$ occurs exactly once.
\end{definition}

\begin{lemma}\label{lem:deBruijn}
A packed de Bruijn sequence of order $k$ over the integer alphabet $[0,\sigma)$ for $\sigma\ge 2$ can be constructed in $\cO(\frac{\sigma^k}{k})=\cO(\frac{n}{\log_{\sigma} n})$ time, where $n$ is the length of the sequence. In particular, we can construct its prefix of length $\ell$ in $\cO(\frac{\ell}{\log_{\sigma} n} + 1)$ time. 
\end{lemma}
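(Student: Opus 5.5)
We plan to build the sequence by the classical ``prefer-smallest'' / Lyndon-word concatenation construction (Fredricksen--Maiorana): the lexicographically least de Bruijn sequence of order $k$ equals the concatenation, in lexicographic order, of all Lyndon words over $[0,\sigma)$ whose length divides $k$, followed by the first $k-1$ letters (namely $0^{k-1}$) to close the cycle into a linear string. The output has length $\sigma^k+k-1$, so in packed form it occupies $\Theta(\sigma^k/\log_\sigma n)$ words. Since $n=\sigma^k+k-1$, we have $k=\Theta(\log_\sigma n)$, and therefore the target $\cO(\sigma^k/k)$ equals $\cO(n/\log_\sigma n)$, i.e., linear in the number of output words. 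The task thus reduces to emitting each output machine word in amortized $\cO(1)$ time.

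The first step is to bound the amount of work. The number of Lyndon words of length $d\mid k$ is $\frac1d\sum_{e\mid d}\mu(e)\sigma^{d/e}\le \sigma^d/d$. Summing over $d\mid k$ gives $\cO(\sigma^k/k)$, because the $d=k$ term dominates and the terms with $d\le k/2$ contribute at most $\cO(\sigma^{k/2})$. Hence we may spend $\cO(1)$ time per Lyndon word, provided that appending a word of length $d\le k$ to the packed output also takes $\cO(1)$ time. This holds whenever $k\log\sigma=\cO(w)$, since each word is then an integer of $\cO(1)$ machine words and can be shifted and OR-ed into the buffer in constant time. For large $k$ (with $k\log\sigma\gg\log n$) this is no longer automatic. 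Even then, the output has length $n\ge\sigma^k$, so $k\log\sigma\le\log n$ always holds. Every Lyndon word therefore fits in $\cO(1)$ words, and the issue disappears.

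The second step is to generate the Lyndon words in lexicographic order with $\cO(1)$ amortized word operations each. We use Duval's successor rule. Given the current word $x$ of length $k$, obtained by periodically extending the last Lyndon word, we strip trailing maximal letters $\sigma-1$, increment the last remaining letter, and periodically extend the resulting prefix of length $m$ to length $k$. The new prefix is a Lyndon word, and it is emitted if $m\mid k$. The classical analysis shows that this runs in constant amortized time per generated word in the letter model. In the packed model, each step is a bit-parallel operation on an $\cO(1)$-word integer: the trailing $(\sigma-1)$-run is found by a most/least-significant-bit computation on a mask, and incrementing is ordinary addition. Periodic extension of a length-$m$ prefix to length $k$ can be done by word-level doubling in $\cO(\log(k/m))$ time. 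Doing this naively costs too much, so we will instead precompute a table indexed by $(m,\text{prefix})$ only when $\sigma^k$ is small, or we will bound the total doubling cost via the amortization: steps with small $m$ are rare (roughly $\sigma^{m}$ of them), so $\sum_m \sigma^m\log(k/m)=\cO(\sigma^k)$, which must further be divided by $k$. I expect this amortization to be the main obstacle. The tentative fix is to note that the number of successor steps producing prefix length $m$ is at most $\sigma^{m}$, giving total cost $\sum_{m\le k}\sigma^m\lceil\log(k/m)+1\rceil=\cO(\sigma^k)$. This is too much only by a factor of $k$ when $\sigma$ is constant, so a finer argument is needed: the steps with $m<k$ that do not emit are dominated by steps with $m=k$, of which there are about $\sigma^k/k$ and each takes $\cO(1)$ time. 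Alternatively, Duval's count shows that there are only $\cO(\sigma^k/k)$ successor steps in total.

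For the prefix of length $\ell$, we simply run the generator and stop once $\ell$ letters have been emitted. The emitted words up to that point total $\Theta(\ell)$ letters, plus possibly one partial word. The number of Lyndon words processed is $\cO(\ell/k+1)$, amortized in the same way, which yields $\cO(\ell/\log_\sigma n+1)$ time.
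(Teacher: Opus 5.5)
Your construction is the paper's: Fredricksen--Maiorana concatenation, Duval's successor rule, the current length-$k$ word kept in $\cO(1)$ machine words, and $\cO(\sigma^k/k)$ iterations, one per Lyndon word of length at most $k$. The gap is exactly at the step you flag as the main obstacle: you never bound the total cost of the periodic extensions.
\begin{itemize}
\item Your first estimate (at most $\sigma^m$ steps of length $m$) yields only $\cO(\sigma^k)$.
\item The claim that non-emitting steps are ``dominated by'' those with $m=k$ is asserted, not shown.
\item ``Only $\cO(\sigma^k/k)$ successor steps in total'' counts steps rather than their cost. Combined with $\cO(\log(k/m))$-time doubling, it gives only $\cO(\sigma^k\log k/k)$.
\end{itemize}
As written, the argument therefore does not reach $\cO(\sigma^k/k)$.

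The paper closes this with a potential argument. Let $\pi(w)$ be the number of positions of $w$ holding $\sigma-1$. Incrementing $w[j]$ raises $\pi$ by at most $1$. All positions after $j$ held $\sigma-1$. After $w$ is replaced by the length-$k$ prefix of $(w[0\dd j])^\infty$, each copied block begins with $w[0]<\sigma-1$, so $\pi$ drops by at least $\lfloor (k-1)/(j+1)\rfloor$. This pays even for the naive $\cO(k/(j+1))$-time extension, so every iteration costs $\cO(1)$ amortized. Since $\pi=0$ initially, this holds for every prefix of the run, which is what the length-$\ell$ prefix claim needs.

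Within your own framework, you could instead keep the factor $1/m$ you already used in your first step. At most $\sigma^m/m$ iterations produce a Lyndon word of length $m$, and $\sum_{m\le k}\frac{\sigma^m}{m}\bigl(1+\log\frac{k}{m}\bigr)=\cO(\sigma^k/k)$. The terms with $m\ge k/2$ give $\cO(\sigma^k/k)$, and the rest give $\cO(\sigma^{k/2}\log k)$, which is $\cO(\sigma^k/k)$ since $\sigma\ge 2$. That settles the full sequence. Being a global count, however, it does not by itself give the prefix bound, whereas the potential argument does.

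A third option avoids amortization entirely. For each of the $k$ possible block lengths $m$, precompute the $\cO(1)$-word constant with a $1$ at the start of every $m$-th letter field. Then do each extension by one multiplication followed by truncation to $k$ letters; there are no carries because the copies do not overlap.
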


\begin{proof}
A \emph{Lyndon word} is a string that is lexicographically strictly smaller than all of its proper suffixes.
As noted by Fredricksen and Maiorana~\cite{DBLP:journals/dm/FredricksenM78}, the concatenation of all Lyndon words whose length divides $k$, listed in lexicographical order, forms a de Bruijn sequence.
Duval~\cite{DBLP:journals/tcs/Duval88} provided an algorithm to generate all Lyndon words of length at most $k$ in lexicographical order, generating each word by modifying the previous one.

We achieve the stated running time in the packed setting, by observing that $k \leq \lceil\log_\sigma n\rceil$, and using $\cO(1)$ machine words to iterate over all Lyndon words in $[0,\sigma)^{\le k}$.

In what follows, we describe Duval's algorithm~\cite{DBLP:journals/tcs/Duval88} without proving its correctness; we only explain how it can be performed efficiently in the packed setting.
We will return a string $S$, initialized as~$0$.
We then maintain a length-$k$ string $w$ initialized as $0^k$ and a binary string $w'$ whose $i$th bit is set if and only if $w[i] = \sigma - 1$ (we update $w'$ together with $w$).
We repeatedly apply the following steps of the generation loop while $w[0] \neq \sigma - 1$:
\begin{enumerate}
\item Locate $j=\max\{i: w[i]\neq \sigma - 1\}$. This is done by finding the rightmost $0$ in $w'$ in $\cO(1)$ time using standard bitwise operations.
\item Replace $w[j]$ with $w[j]+1$ in $\cO(1)$ time.
\item The prefix $w[0 \dd j]$ is now a Lyndon word. If $(j+1)$ divides $k$, we append $w[0\dd j]$ to $S$. This takes $\cO(1)$ time as the appended string fits into $\cO(1)$ machine words.
\item Replace $w$ with the length-$k$ prefix of $(w[0\dd j])^\infty$.
This can be naively performed using $\cO(k/(j+1))$ operations.
Consider a potential function $\pi(w)=|\{i:w[i] = \sigma - 1\}|$, noting that Step~$2$ increases $\pi(w)$ by at most $1$. Conversely, in Step~$4$, values $w[i]=\sigma - 1$ for $i>j$ are replaced with $w[i \bmod (j+1)]$. Since $w[0] < \sigma - 1$ (the algorithm terminates when $w[0]$ reaches $\sigma - 1$), $\Omega(k/(j+1))$ copies of $\sigma-1$ are overwritten by setting $w = (w[0\dd j])^\infty$.
As $\pi(w)\ge 0$ at all times, the total number of operations we perform in Step 4 is asymptotically upper bounded by the number of times Step 2 is executed, and hence the amortized running time of Step 4 is $\cO(1)$.
\end{enumerate}

After the last iteration, we complete the sequence by appending $0^{k-1}$.
Observe that, for each integer $\ell$ there are at most $\frac{\sigma^\ell}{\ell}$ Lyndon words of length $\ell$, and hence at most $\sum_{\ell\le k}\frac{\sigma^\ell}{\ell}\le 3\frac{\sigma^k}{k}$ Lyndon words of length at most $k$. This proves that there are $\cO(\frac{\sigma^k}{k})$ iterations of the loop, and the algorithm takes $\cO(\frac{n}{\log_{\sigma} n})$ time (where $n=\sigma^k+k-1$ and $k=\lfloor\log_{\sigma} n\rfloor$) as claimed.
To output a length-$\ell$ prefix of this sequence we terminate the algorithm once the length of the output sequence reaches $\ell$, potentially removing up to $k-1$ excessive letters from the final Lyndon word.  Since $k/\log_\sigma n = \cO(1)$, the total time complexity is
\[\cO\left(\frac{\ell+k}{\log_{\sigma} n} + 1 \right)=\cO\left(\frac{\ell}{\log_{\sigma} n}+1\right).\]\qedhere 
\end{proof}

\sas* 

\begin{proof}
Let $k:=\lfloor\log_\sigma n\rfloor+1$. Since $\sigma^k >n$, the length of any SAS is at most $k$.
We first check for a SAS of length at most $k-1$ by applying \cref{thm:difference} to a packed de Bruijn sequence $S_1$ of order $k-1$ (\cref{lem:deBruijn}) and $S_2:=S$.
If this yields a substring of length at most $k - 1$, we are done. Otherwise, if $\sigma^k = \cO(n)$, we create a full packed de Bruijn sequence of order $k$ and repeat the process to find a SAS of length $k$.
If $\sigma^k = \omega(n)$, we instead
generate a prefix $S'_1$ of a de Bruijn sequence of order $k$ of length $n+1$ in $\cO(\frac{n}{\log_\sigma n})$ time. $S'_1$ contains $n-k+2$ distinct substrings of length $k$. Since $S_2$ has at most $n-k+1$ such substrings, at least one substring of $S'_1$ must be absent from $S_2$. We find this witness of length $k$ using \cref{thm:difference} for $S'_1$ and $S_2$.
The calls to \cref{thm:difference} dominate the total running time, which is $\cO\left(\frac{n \log \sigma}{\sqrt{\log n}}\right)$.
\end{proof}

\subsection{Computing a Shortest Exclusive Substring}\label{sec:exclusive}

Let us fix an instance of the \SUSDiff problem.
We first decide if there exists a substring of $S_1$ that does not occur in $S_2$ in $\cO(n/\log_{\sigma} n)$ time by checking if $S_1$ occurs in $S_2$~\cite{DBLP:journals/tcs/Ben-KikiBBGGW14}. If it does, then $S_2$ contains all the substrings of $S_1$; otherwise, the full~$S_1$ is an exclusive substring.
We henceforth assume that an exclusive substring exists.

The algorithm presented below is an adaptation of the one presented in \cref{sec:SUS} for the \SUS problem.
We first define an appropriate variant of the \SkylineProblem problem and present an algorithm for it.

\defproblem{\SkylineDiff}{Two sets of points $P_1,P_2$ in  $\mathbb{Z}^2_{\ge 0}$.}{A point $(x, y) \in \mathbb{Z}^2_{\ge 0}$ (if one exists) that is dominated by at least one point $p_1 \in P_1$, is not dominated by any point $p_2 \in P_2$, and minimizes $x+y$.}

\begin{lemma}
    \label{lem:sas:skyline}
    Any instance of \SkylineDiff can be solved in $\cO(|P_1| + |P_2|)$ time, provided that $P_1$ and $P_2$ are given as lists sorted with respect to one of the two coordinates.
\end{lemma}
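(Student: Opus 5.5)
We follow the structure of the proof of \cref{lem:skyline-alg}: first we show that an optimal point has one of $\cO(1)$ canonical forms per distinct $x$-coordinate, and then we evaluate all candidates in one right-to-left sweep. Let $(x^\star,y^\star)$ be a feasible point minimizing $x+y$.

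\textbf{Left boundary.} We claim that either $x^\star=0$ or some point of $P_2$ has $x$-coordinate exactly $x^\star-1$. Suppose not. Consider $(x^\star-1,y^\star)$. It is still dominated by the same point of $P_1$, since that point's coordinates are at least $x^\star > x^\star-1$ and at least $y^\star$. By minimality of $x^\star+y^\star$, the point $(x^\star-1,y^\star)$ must therefore be dominated by some $p_2=(x',y')\in P_2$. Because $p_2$ does not dominate $(x^\star,y^\star)$, we get $x'=x^\star-1$, which contradicts the assumption.

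\textbf{Lower boundary.} Fix $x^\star$ and set
\[
Y_2=\max\{y:(x,y)\in P_2,\ x\ge x^\star\},
\]
with $Y_2=-1$ if this set is empty. A point $(x^\star,y)$ avoids domination by $P_2$ if and only if $y>Y_2$, so we take $y^\star=Y_2+1$ (this is $0$ when $Y_2=-1$). Set
\[
Y_1=\max\{y:(x,y)\in P_1,\ x\ge x^\star\}.
\]
Then $(x^\star,Y_2+1)$ is dominated by a point of $P_1$ if and only if $Y_1\ge Y_2+1$. If this fails, the column $x=x^\star$ contains no feasible point, because increasing $y$ only makes domination by $P_1$ harder. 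This is the analogue of \cref{lem:skyline-alg:ver}, and it is easier, since we no longer need the second-largest coordinate.

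\textbf{Algorithm.} Assume both lists are sorted by $x$-coordinate; if they are sorted by $y$, swap the roles of the coordinates. Merge the two lists in $\cO(|P_1|+|P_2|)$ time and scan the distinct $x$-values from right to left, maintaining the suffix maxima $Y_1$ and $Y_2$. Before processing the group with coordinate $x_i$, and only when $x_i$ occurs in $P_2$, evaluate the candidate $x^\star=x_i+1$. At that moment exactly the points with $x\ge x^\star$ have been processed, so the candidate is valid iff $Y_1>Y_2$ and gives the value $x^\star+Y_2+1$. After the scan, evaluate $x^\star=0$ in the same way. Return the best candidate found; if none is valid, report that no such point exists.

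\textbf{Correctness.} Every candidate we accept is feasible by the lower-boundary argument. Conversely, the left-boundary claim guarantees that the optimum's $x^\star$ is among the candidates we evaluate, and the lower-boundary claim shows that we then compute exactly its $y^\star$. Each point is touched $\cO(1)$ times, which gives the linear bound.

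The only delicate step is the left-boundary claim. Unlike in the SUS version, the claim must refer only to $P_2$: a shift to the left can never destroy domination by $P_1$.
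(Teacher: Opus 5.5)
Your proposal is correct and follows essentially the paper's route: a right-to-left sweep that maintains the suffix maxima $Y_1,Y_2$ of the $y$-coordinates in $P_1$ and $P_2$, tests each candidate $x^\star=x_i+1$ (and finally $x^\star=0$), and takes $y^\star=Y_2+1$ whenever $Y_1>Y_2$. The paper leaves the structural justification implicit (``adapt the scan of \cref{lem:skyline-alg}'') and tries every $x$-coordinate in $P_1\cup P_2$, whereas you prove the left- and lower-boundary claims explicitly and correctly note that only $x$-coordinates from $P_2$ need be tried, a harmless refinement.
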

\begin{proof}
   We adapt the right-to-left scanning algorithm from \cref{lem:skyline-alg}. In the original problem, the goal was to find a point dominated by exactly one point in a set $P$, which required tracking the highest and second-highest $y$-values ($y_{\max}$ and $y'$). For \SkylineDiff, the requirement is to find a point dominated by \emph{at least one} point in $P_1$ and \emph{no} points in $P_2$.
   
   Assume $P_1$ and $P_2$ are sorted by $x$-coordinate in non-increasing order. We perform a joint scan from right to left, maintaining $y^{(1)}_{\max}$ and $y^{(2)}_{\max}$—the maximum $y$-coordinates encountered so far in $P_1$ and $P_2$, respectively. Both are initialized to $-1$. For each unique $x$-coordinate $x_i$ in $P_1 \cup P_2$, we evaluate the candidate $x^\star = x_i + 1$ \emph{before} updating the maxima with the points at $x_i$. For this $x^\star$, a valid $y^\star$ must satisfy two conditions:
    \begin{enumerate}
        \item $y^\star \le y^{(1)}_{\max}$ (the point is dominated by $P_1$);
        \item $y^\star > y^{(2)}_{\max}$ (the point is not dominated by $P_2$).
    \end{enumerate}
    Such a $y^\star$ exists if and only if $y^{(1)}_{\max} > y^{(2)}_{\max}$. To minimize the sum $x+y$ for the current $x^\star$, we choose the smallest possible $y$-coordinate: $y^\star = \max\{0, y^{(2)}_{\max} + 1\}$. We then update $y^{(1)}_{\max}$ and $y^{(2)}_{\max}$ with all points $(x_i, y) \in P_1 \cup P_2$ and continue the scan. After processing all points, we perform a final check for $x^\star = 0$.
    
    Since the scan visits each point in $P_1 \cup P_2$ once and updates the maxima in $\cO(1)$ time, the total running time is $\cO(|P_1| + |P_2|)$.
    \end{proof}

The algorithm underlying \Cref{thm:difference} decomposes the \SUSDiff  problem into the same four cases---short, medium aperiodic, long aperiodic, and periodic---as our algorithm for the \SUS problem.

Below, we sketch how the algorithm for each of the cases of the \SUS problem can be adapted for the corresponding case of the \SUSDiff problem.

\subsubsection{Short Case}
Let $\ell := \lfloor\frac15 \log_\sigma n \rfloor$. 
As in \cref{lem:short}, we take every length-$2\ell$ fragment of $S_1$ and $S_2$, starting at positions equivalent to $0$ modulo $\ell$. 
For each distinct string encountered as a fragment, we store at most one representative from $S_1$ and $S_2$.
Each distinct fragment $F_i$ is labeled with $1$ if it occurs in $S_2$ and with $0$ if it occurs \emph{exclusively} in~$S_1$. Using bucket sort, we produce a lexicographically sorted list $F_1 < F_2 < \dots < F_q$ of distinct fragments, each inheriting its label.

Next, we construct the string $S' = F_1\#_1F_2\#_2\dots \#_{q-1}F_q\#_q$; where each non-$\#$ position inherits the label of its fragment.
Our goal is to find the shortest substring $P$ of $S'$, not containing any~$\#$, such that all occurrences of $P$ of $S'$ correspond to positions labeled $0$. This ensures that $P$ appears in $S_1$ but not in $S_2$ within the sampled positions.
To achieve this, we build the suffix tree $\ST(S')$ and propagate the labels from the leaves to internal nodes: a leaf is labeled $0$ or $1$ based on its starting fragment, and an internal node is labeled~$1$ if any of its children has label~$1$, and $0$ otherwise.
We prune all branches starting with a $\#$, obtaining a trie of the fragments $F_1,\dots,F_q$, where each prefix is labeled based on whether it occurs solely in $S_1$ or not. 

A shortest exclusive substring of length at most $\ell$, if one exists, is the path label of node~$u$ with label $0$ that minimizes the value $\sd(\textsf{parent}(u))+1$ (which is at most $\ell$). This string is the prefix of the path label of $u$ that is exactly one letter longer than the path label of its $1$-labeled parent. Note that if no node has label $0$, then there is no exclusive substring of length at most $\ell$.

The entire process runs in  $\cO(n^{2/5} + n/\log_\sigma n)=\cO(n/\log_\sigma n)$ time, similarly to \cref{lem:short}: we consider $\cO(n/\log_\sigma n)$ fragments, yielding $\cO(n^{2/5})$ distinct strings, which are then processed in $\cO(n^{2/5} \log_\sigma n)$ total time.

\subsubsection{Medium Aperiodic Case}
We first define the following analogous problem of \PairSUS:  

\defproblem{\PairSES}{%
Compacted tries $\mathcal{T}(\mathcal{F}_1)$ and $\mathcal{T}(\mathcal{F}_2)$  of $\mathcal{F}_1, \mathcal{F}_2 \subseteq \Sigma^*$,
and two (multi)sets $\mathcal{S}_1, \mathcal{S}_2 \subseteq \mathcal{F}_1\times \mathcal{F}_2$ 
with $|\mathcal{S}_1|, |\mathcal{S}_2|, |\mathcal{F}_1|, |\mathcal{F}_2| \le N$.}{Integers $\ell_1,\ell_2\ge0$ and a pair $(P_1,P_2) \in \mathcal{S}_1$, such that $(\ell_1 + \ell_2)$ is minimized and for all  $(Q_1, Q_2) \in \mathcal{S}_2$, either $\LCP(P_1,Q_1) < \ell_1$ or $\LCP(P_2, Q_2) < \ell_2$.}

In other words, we want to find a shortest pair of strings that are prefixes of some pair in~$\mathcal{S}_1$ but not of any pair in $\mathcal{S}_2$. For both prefix families and $(\alpha,\beta)$-families, we show how to adapt the algorithms from \cref{sec:medium} to \PairSES.

\subparagraph{Solution for prefix families.} We consider the case in which $\mathcal{S}_1 = \{(U_1,V_1),\ldots,(U_N,V_N)\}$ and $\mathcal{S}_2 = \{(U'_1,V'_1),\ldots,(U'_{N'},V'_{N'})\}$ and $\mathcal{S}_1 \cup \mathcal{S}_2$ is a prefix family, i.e. all $U_i$ and $U'_i$ are prefixes of some common string and can thus be represented by their lengths. We first present the analogue of \cref{lem:med:prefix-formula}:

\begin{lemma}
    Consider an instance of \PairSES in which $\mathcal{S}_1$ and $\mathcal{S}_2$ are both prefix families. For any integer $\ell_1 > 0$, let $I_{\ell_1} = \{i \mid |U_i| \ge \ell_1\}$ and $I'_{\ell_1} = \{i \mid |U'_i| \ge \ell_1\}$. Further let
    \begin{equation*}
        f(\ell_1) = \min_{i \in I_{\ell_1}} \left( \max_{j \in I'_{\ell_1}} \LCP(V_i, V'_j) \right),
    \end{equation*}
    \noindent with $f(\ell_1)=0$ if $I'_{\ell_1} = \emptyset$. The pair $(\ell_1, f(\ell_1) + 1)$ minimizing the sum $\ell_1 + f(\ell_1) + 1$ is an optimal solution.
\end{lemma}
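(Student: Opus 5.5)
I would mirror the proof of \cref{lem:med:prefix-formula}: first show that for each fixed $\ell_1$ the pair $(\ell_1, f(\ell_1)+1)$ is feasible, then show that any feasible solution costs at least $\ell_1 + f(\ell_1) + 1$ for its own $\ell_1$. Minimizing over $\ell_1$ then gives the claim. The main tool, as before, is that in a prefix family $\LCP(U_i, U'_j) = \min\{|U_i|, |U'_j|\}$, because all first components are prefixes of a common string $Y$.

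\emph{Feasibility.} Fix $\ell_1$ with $I_{\ell_1} \neq \emptyset$, let $i^\star \in I_{\ell_1}$ attain the minimum in $f(\ell_1)$, and set $\ell_2 = f(\ell_1) + 1$. Take $(P_1,P_2) = (U_{i^\star}, V_{i^\star}) \in \mathcal{S}_1$ as the witness, and consider any $(U'_j, V'_j) \in \mathcal{S}_2$.
\begin{itemize}
\item If $j \notin I'_{\ell_1}$, then $|U'_j| < \ell_1 \le |U_{i^\star}|$, so $\LCP(U_{i^\star}, U'_j) = |U'_j| < \ell_1$ and the first condition holds.
\item If $j \in I'_{\ell_1}$, then $\LCP(V_{i^\star}, V'_j) \le \max_{j' \in I'_{\ell_1}} \LCP(V_{i^\star}, V'_{j'}) = f(\ell_1) < \ell_2$ and the second condition holds.
\end{itemize}
When $I'_{\ell_1} = \emptyset$, only the first case occurs, so $\ell_2 = 1$ is valid. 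In fact $\ell_2 = 0$ would also be valid here, a subtlety I return to below. Unlike \PairSUS, there is no exclusion of the witness itself, because $\mathcal{S}_1$ and $\mathcal{S}_2$ are separate sets.

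\emph{Optimality.} Let $(\ell_1^\star, \ell_2^\star)$ be optimal with witness $(U_k, V_k) \in \mathcal{S}_1$. If $|U_k| < \ell_1^\star$, I would lower $\ell_1$ to $|U_k|$; this keeps the solution feasible, because for $j$ with $|U'_j| \ge |U_k|$ we have $\LCP(U_k,U'_j) = |U_k|$, which is not below $\ell_1^\star$ either. So I may assume $k \in I_{\ell_1^\star}$. Every $j \in I'_{\ell_1^\star}$ has $\LCP(U_k, U'_j) \ge \ell_1^\star$, so it must satisfy $\LCP(V_k, V'_j) \le \ell_2^\star - 1$. Hence $\max_{j \in I'_{\ell_1^\star}} \LCP(V_k, V'_j) \le \ell_2^\star - 1$, which gives $f(\ell_1^\star) \le \ell_2^\star - 1$ and therefore cost at least $\ell_1^\star + f(\ell_1^\star) + 1$.

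The main obstacle is the boundary case $I'_{\ell_1} = \emptyset$. There the convention $f = 0$ yields $\ell_2 = 1$, whereas the true optimum for that $\ell_1$ is $\ell_2 = 0$. This makes the formula off by one, unless the intended meaning is that substrings must extend at least one letter past the anchor, as in the SUS setting. I would handle it as the paper evidently does, reading the $+1$ as the convention that the second component has positive length, so that every candidate string crosses the run boundary. I would state this explicitly, or alternatively note that the case $\ell_2 = 0$ is covered by the $\ell_1$-only candidate, which is dominated by the other cases of the algorithm. Restricting $\ell_1$ to $\ell_1 > 0$ matches the lemma's hypothesis. Taking the minimum over all $\ell_1$ of $\ell_1 + f(\ell_1) + 1$ then completes the argument.
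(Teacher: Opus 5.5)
Your feasibility argument and the core of your optimality argument match the paper's proof: bounding $\max_{j\in I'_{\ell_1^\star}}\LCP(V_k,V'_j)\le \ell_2^\star-1$ and concluding from the min-of-max definition of $f$. Your observation about the boundary case $I'_{\ell_1}=\emptyset$ is correct, and the paper passes over it. Fixing the convention $\ell_2\ge 1$ explicitly is a reasonable way to handle it.

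The step that fails is your reduction to $k\in I_{\ell_1^\star}$. Suppose $|U_k|<\ell_1^\star$. Lowering $\ell_1$ to $|U_k|$ does not preserve feasibility. For every $j$ with $|U'_j|\ge|U_k|$ you now have $\LCP(U_k,U'_j)=|U_k|$, which is \emph{not} below the new threshold $|U_k|$. These pairs must then be separated by $\LCP(V_k,V'_j)<\ell_2^\star$, and nothing guarantees that. In the original solution they were separated by the first condition precisely because $|U_k|<\ell_1^\star$, so your justification (``not below $\ell_1^\star$ either'') has the inequality backwards.

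No argument of this kind can work under the literal problem statement, which does not require $\ell_1\le|P_1|$. There, $(|U_k|+1,0)$ with witness $(U_k,V_k)$ is always feasible, since $\LCP(U_k,Q_1)\le|U_k|$ for every $Q_1$. For example, take $\mathcal S_1=\{(U_1,V_1)\}$ and $\mathcal S_2=\{(U'_1,V'_1)\}$ with $|U_1|=2$, $|U'_1|=5$, and $V_1=V'_1$ of length at least $2$. The literal optimum is $3$, while $\min_{\ell_1}(\ell_1+f(\ell_1)+1)=|V_1|+2$, so the lemma as literally stated is false.

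The missing ingredient is that a witness with lengths $(\ell_1,\ell_2)$ is only meaningful when $\ell_1\le|P_1|$, because the reported string is assembled from actual prefixes of $P_1$ and $P_2$. The paper states this validity condition explicitly in the proof of \cref{lem:med:prefix-formula} (``For this witness to be valid, we must have $|U_k|\ge\ell_1^*$'') and uses it tacitly here. With it, $k\in I_{\ell_1^\star}$ holds by definition and the rest of your argument goes through. So replace the lowering step by this assumption rather than trying to derive it.
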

\begin{proof}
    The proof borrows heavily from \cref{lem:med:prefix-formula}.

    \proofsubparagraph{Feasibility.} For a fixed $\ell_1$, let $i^* \in I_{\ell_1}$ be the index attaining the minimum in $f(\ell_1)$ and let $\ell_2 = f(\ell_1) + 1$. We will show that $(U_{i^*}, V_{i^*})$ with lengths $(\ell_1,\ell_2)$ is a feasible solution for \PairSES. For this it must be that every $(U'_j,V'_j) \in \mathcal{S}_2$ has $\LCP(U_{i^*},U'_j) < \ell_1$ or $\LCP(V_{i^*}, V'_j) < \ell_2$. There are two cases for each index $j$:
    \begin{itemize}
        \item Case 1: $j \notin I'_{\ell_1}$. Then by definition $|U'_j| < \ell_1$, fulfilling the first condition.
        \item Case 2: $j \in I'_{\ell_1}$. Then by  definition of $f(\ell_1)$, we have $\LCP(V_{i^*},V'_j)\le \max_{j' \in I'_{\ell_1}} \LCP(V_{i^*}, V'_{j'}) $ $\le \ell_2 - 1$, fulfilling the second condition.
    \end{itemize}

    \proofsubparagraph{Optimality.} Let $(\ell_1^*, \ell_2^*)$ be an optimal solution with witness $(U_i,V_i)$. Since this solution must be feasible, we have for every $j \in I'_{\ell^*_1}$ that $\LCP(V_i, V'_j) \le \ell^*_2 - 1$. Thus we have

    \begin{equation*}
        \max_{j \in I'_{\ell^*_1}} \LCP(V_i, V'_j) \le \ell^*_2 - 1.
    \end{equation*}

    From the definition of $f(\ell^*_1)$ it follows that $\ell^*_2 \ge f(\ell^*_1)+1$, which implies that $\ell^*_1+\ell^*_2 \ge \ell^*_1 + f(\ell^*_1) + 1$. 
\end{proof}

We define $\maxLCP[i] := \max_{j,|U'_j|\ge|U_i|}\LCP(V_i,V'_j)$ for $(U_i,V_i) \in \mathcal{S}_1$ and adapt \cref{lem:med:prefix-max-lcp} to compute these values in $\cO(N + N')$ time using auxiliary arrays $\prev$ and $\nextt$ on a sorted list of pairs from both families.

A proof for the following lemma is nearly identical to that of \cref{lem:med:pairsus-prefix}.
\begin{lemma}
\label{lem:ses-med-prefix}
    An instance of $\PairSES$ in which $\mathcal{S}_1 \cup \mathcal{S}_2$ is a prefix family can be solved in $\cO(N + N')$ time.
\end{lemma}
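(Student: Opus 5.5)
By the preceding (unnumbered) analogue of \cref{lem:med:prefix-formula}, it suffices to compute $\min_{\ell_1} \bigl(\ell_1 + f(\ell_1) + 1\bigr)$, where $f(\ell_1)=\min_{i\in I_{\ell_1}}\max_{j\in I'_{\ell_1}}\LCP(V_i,V'_j)$. Thresholds $\ell_1>\max_i|U_i|$ have $I_{\ell_1}=\emptyset$ and give no solution. On any interval of $\ell_1$ containing no length $|U_i|$ or $|U'_j|$, the sets $I_{\ell_1}$ and $I'_{\ell_1}$ are fixed. Hence $f$ is constant there, and the objective is minimized at the left end of the interval. So it suffices to evaluate the candidates $\ell_1\in\{1\}\cup\{|U_i|\}\cup\{|U'_j|+1\}$. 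The case $I'_{\ell_1}=\emptyset$ gives cost $\ell_1+1$, and the smallest such $\ell_1$ with $I_{\ell_1}\neq\emptyset$ is read off directly. Throughout, I treat a pair with $\LCP(U_i,U'_j)=\min(|U_i|,|U'_j|)$ as in the prefix-family setting.

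The main step is to compute $\maxLCP[i]=\max_{j:|U'_j|\ge |U_i|}\LCP(V_i,V'_j)$ for every $i$, following \cref{lem:med:prefix-max-lcp}.
\begin{enumerate}
\item Traverse $\mathcal{T}(\mathcal{F}_2)$ to obtain all second components of $\mathcal S_1\cup\mathcal S_2$ in one lexicographically sorted list, together with adjacent LCP values.
\item Build an RMQ structure over these LCP values, so that arbitrary LCPs are answered in $\cO(1)$ time.
\item For each $\mathcal S_1$-element $i$, find the nearest $\mathcal S_2$-element to its left and to its right in the sorted list with $|U'_j|\ge|U_i|$. By \cref{lem:min-lcp}, the maximum is attained at one of these two neighbours.
\end{enumerate}
The obstacle is step 3: the query threshold comes from the $\mathcal S_1$ element, while the eligible elements come from $\mathcal S_2$, so the stack-based nearest-larger-value algorithm does not apply verbatim. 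I would resolve this offline. Process elements in non-increasing order of $|U|$, breaking ties so that $\mathcal S_2$ elements come before $\mathcal S_1$ elements of the same length. Each $\mathcal S_2$ element is activated when processed, and each $\mathcal S_1$ element queries its nearest activated predecessor and successor in list order. Run in reverse as deletions, this becomes a decremental predecessor/successor problem over a linked list, which union--find on intervals (Gabow--Tarjan) solves in linear time. The lengths are bounded by $N$ (or I would rank-reduce them), so bucket sorting keeps this step at $\cO(N+N')$.

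It remains to compute $f(\ell_1)$ at the candidates. For $\ell_1=|U_i|$, every $k$ with $|U_k|\ge\ell_1$ satisfies $I'_{|U_k|}\subseteq I'_{\ell_1}$. So $\maxLCP[k]$ is at most the inner maximum over $I'_{\ell_1}$, with equality when $|U_k|=\ell_1$. This is the wrong direction compared to \cref{lem:med:pairsus-prefix}, so I would instead compute the inner maximum for the threshold $\ell_1$ directly. I would redefine $\maxLCP_{\ell_1}[k]$ with threshold $\ell_1$ and note that $f$ only needs, for each $k$, the piecewise behaviour as $\ell_1$ decreases from $|U_k|$. That behaviour is non-decreasing in $I'$, so for fixed $k$ the inner maximum is non-increasing in $\ell_1$. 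Since $f(\ell_1)$ is a minimum over $k$ with $|U_k|\ge\ell_1$, and the objective adds $\ell_1$, choosing $\ell_1=|U_k|$ for the minimizing $k$ is optimal. Indeed, raising $\ell_1$ up to $|U_k|$ keeps $k$ eligible and can only shrink $I'$.

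Thus the answer is $\min_k\bigl(|U_k|+\maxLCP[k]+1\bigr)$, combined with the $I'=\emptyset$ candidates, and it is computable in $\cO(N+N')$ time. The witness is the pair $(U_k,V_k)$ with lengths $(|U_k|,\maxLCP[k]+1)$.
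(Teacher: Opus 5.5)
Your computation of $\maxLCP$ (offline nearest active $\mathcal S_2$-neighbours via interval union--find) is sound. You also correctly note that $\maxLCP[k]$ only \emph{lower}-bounds the inner maximum at a smaller threshold.

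The gap is the final step. You claim that for the minimizing $k$ one may take $\ell_1=|U_k|$, because raising $\ell_1$ ``can only shrink $I'$''. This ignores that the objective contains $\ell_1$ itself. Raising $\ell_1$ from $t$ to $|U_k|$ costs $|U_k|-t$, while the inner maximum may drop by much less. It also contradicts your own (correct) first paragraph, where the left endpoint of each constant piece is the best choice.

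Concretely, take $\mathcal S_1=\{(Y[0\dd 100),\texttt{a})\}$ and $\mathcal S_2=\{(Y[0\dd 99),\texttt{b})\}$. The pair $(\ell_1,\ell_2)=(1,1)$ is feasible with cost $2$, since $\LCP(\texttt{a},\texttt{b})=0$. Yet $\maxLCP[1]=0$ because no $j$ has $|U'_j|\ge 100$, so your formula returns $101$, as does the $I'_{\ell_1}=\emptyset$ candidate $\ell_1=100$.

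For a fixed witness $k$, the cost $\ell_1+1+\max_{j:|U'_j|\ge\ell_1}\LCP(V_k,V'_j)$ has to be minimized over all $\ell_1\in\{1\}\cup\{|U'_j|+1: |U'_j|<|U_k|\}$, i.e., over a whole staircase. One number per $k$, taken at threshold $|U_k|$, cannot capture this. The paper's one-line argument transplants the thresholds $\ell_1=|U_i|$ and the prefix minima of $\maxLCP$ from \cref{lem:med:pairsus-prefix}. It also outputs $101$ on this example, so following it more literally does not help.

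One way to close the gap within $\cO(N+N')$ time is to aggregate over the nodes of $\mathcal T(\mathcal F_2)$ rather than over the elements of $\mathcal S_1$. For a node $w$, let $m_1(w)$ be the maximum of $|U_k|$ over pairs of $\mathcal S_1$ whose second component lies in the subtree of $w$. Let $m_2(w)$ be the maximum of $|U'_j|$ over such pairs of $\mathcal S_2$. Use $-\infty$ when there is no such pair.

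Now let $w$ be a non-root node on the path to $V_k$, and take $\ell_2\in(\sd(\textsf{parent}(w)),\sd(w)]$. The $\mathcal S_2$-pairs with $\LCP(V_k,V'_j)\ge\ell_2$ are then exactly those in the subtree of $w$. So witness $k$ with $(\ell_1,\ell_2)$ is feasible if and only if $m_2(w)<\ell_1\le|U_k|$.

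Hence the optimum is the minimum of $\max\{m_2(w)+1,1\}+\sd(\textsf{parent}(w))+1$ over non-root nodes $w$ with $m_1(w)\ge\max\{m_2(w)+1,1\}$. A single bottom-up traversal computes this in linear time.
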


\subparagraph{Solution for $(\alpha,\beta)$-families.}

\begin{lemma}
\label{lem:sas:med:lcps}
    Let $\mathcal{S}$ be a lexicographically sorted list of $N$ strings, and let $L$ be an array of integers, where $L[i] = \LCP(\mathcal{S}[i-1],\mathcal{S}[i])$, for all $i \in [1,N)$. Let $f : [0,N) \to \{0,1\}$ be a coloring function on the strings of $\mathcal{S}$, with $f(i) = 0$ for at least one $i$. In $\cO(N)$ time, we can find the shortest string $P$ that is a prefix of some $\mathcal{S}[i]$ with $f(i) = 0$, such that $P$ is not a prefix of any $\mathcal{S}[j]$ with $f(j) = 1$.
\end{lemma}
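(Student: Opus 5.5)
Fix an index $i$ with $f(i)=0$; we mirror the proof of \cref{lem:med:lcps}, but only measure agreement with $1$-colored strings. Let $M[i]$ be the maximum of $\LCP(\mathcal{S}[i],\mathcal{S}[j])$ over all $j$ with $f(j)=1$, taken as $-1$ (or simply $0$ with the convention below) if no such $j$ exists.

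The shortest prefix of $\mathcal{S}[i]$ that is not a prefix of any $1$-colored string has length $M[i]+1$. This is well defined when $M[i]<|\mathcal{S}[i]|$, and otherwise every prefix of $\mathcal{S}[i]$ is covered. When no $1$-colored string exists, the empty string already works. Our output will be the minimum of $M[i]+1$ over all $0$-colored $i$ with $M[i]<|\mathcal{S}[i]|$, together with the corresponding prefix of $\mathcal{S}[i]$. Optimality holds because any valid $P$ is a prefix of some $0$-colored $\mathcal{S}[i]$ and must exceed every LCP of $\mathcal{S}[i]$ with a $1$-colored string, so $|P|\ge M[i]+1$.

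To compute $M[i]$, use \cref{lem:min-lcp} as in the proof of \cref{lem:med:prefix-max-lcp}. Because the list is sorted, the maximum LCP with a $1$-colored string is attained by the nearest $1$-colored string to the left or to the right of $i$. Moreover, $\LCP(\mathcal{S}[j],\mathcal{S}[i])=\min\{L[j+1],\dots,L[i]\}$ for $j<i$, and symmetrically for $j>i$.

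The plan is therefore two linear scans. The left-to-right scan maintains a running minimum of the $L$ values since the last $1$-colored index. It resets to $+\infty$ just after each $1$-colored index and is undefined before the first one. Reading this running minimum at each $0$-colored $i$ (after taking $\min$ with $L[i]$) gives the LCP with the nearest $1$-colored string on the left. The right-to-left scan is symmetric and gives the LCP with the nearest $1$-colored string on the right. $M[i]$ is the maximum of the defined values, and $-1$ if neither is defined, which makes the answer length $0$. Both scans take $\cO(N)$ time, and a final pass takes the minimum.

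The only delicate step is the boundary bookkeeping. We need the correct convention for $L$ at the ends and when no $1$-colored neighbour exists. We also need to handle the case where $\mathcal{S}[i]$ is itself equal to, or a prefix of, a $1$-colored string, giving $M[i]=|\mathcal{S}[i]|$; such an $i$ must be discarded. Equal strings with different colors give LCP equal to their common length, and the scans handle this correctly because consecutive equal strings have $L$ equal to their full length.
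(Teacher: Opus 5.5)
Your proposal is correct and takes essentially the same route as the paper. For each $0$-colored index you compute the LCP with the nearest $1$-colored string on each side, using a running minimum of $L$; this is exactly the paper's recurrence $L_p(i)=\min\{L_p(i-1),L[i]\}$, or $L_p(i)=L[i]$ when $f(i-1)=1$. You then take the maximum plus one and minimize over $0$-colored indices, as the paper does. Your explicit discarding of indices with $M[i]=|\mathcal{S}[i]|$ (where $\mathcal{S}[i]$ is a prefix of a $1$-colored string) is a detail the paper's formula $\min_{i}(\max\{L_p(i),L_n(i)\}+1)$ omits, so your version is the more careful of the two.
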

\begin{proof}
    For each $i \in [0,N)$ with $f(i)=0$,  let $L_p(i) := \LCP(\mathcal{S}[j],\mathcal{S}[i])$ where $j$ is the largest index $j < i$ such that $f(j)=1$. Similarly, let $L_n(i) := \LCP(\mathcal{S}[k],\mathcal{S}[i])$ where $k$ is the smallest index $k > i$ such that $f(k)=1$. If no such $j$ or $k$ exists, the corresponding value is $0$. All $L_p$ and $L_n$ values can be computed in linear time: if $f(i-1)=0$, then $L_p(i)=\min\{L_p(i-1),L[i]\}$ by \cref{lem:min-lcp}; if $f(i-1)=1$, then $L_p(i) = L[i]$. The  computation of $L_n$ is analogous, but in the reverse direction.

    By \cref{lem:min-lcp}, for any $i,j$ with $f(i)=0$ and $f(j)=1$, we have $\LCP(\mathcal{S}[i],\mathcal{S}[j]) \le \max\{L_p(i),L_n(i)\}$. Hence, any prefix of $\mathcal{S}[i]$ longer than this value  cannot be shared with any $\mathcal{S}[j]$ where $f(j)=1$. Thus, the length of the shortest desired prefix is 
    \[\min_{i \in [0,N),f(i)=0}(\max\{L_p(i),L_n(i)\}+1),\] which can be computed in $\cO(N)$ time.
\end{proof}

\begin{example}
Let $\mathcal S = \{\texttt{ape}, \texttt{apple}, \texttt{bacon}, \texttt{band}, \texttt{bank}\}$ be a set of $N=5$ lexicographically sorted strings, and let the coloring function be
$f(0)=0$, $f(1)=1$, $f(2)=0$, $f(3)=0$, and $f(4)=1$.
Our goal is to find the shortest prefix of a string with color $f=0$ that is not a prefix of any string with color $f=1$.

\medskip

\begin{center}
\begin{tabular}{c|l|c|c|c|c|c}
$i$ & $S[i]$ & $f(i)$ & $L[i]$ & $L_p(i)$ & $L_n(i)$ & $1+\max\{L_p(i),L_n(i)\}$ \\
\hline
0 & \texttt{ape}   & 0 & 0 & 0 & 2 & 3 \\
1 & \texttt{apple} & 1 & 2 & -- & -- & -- \\
2 & \texttt{bacon} & 0 & 0 & 0 & 2 & 3 \\
3 & \texttt{band}  & 0 & 2 & 0 & 3 & 4 \\
4 & \texttt{bank}  & 1 & 3 & -- & -- & -- \\
\end{tabular}
\end{center}

\medskip
\raggedright
Among the indices with $f(i)=0$, the minimum length is $\min\{3,3,4\}=3$. This corresponds to the length-$3$ prefixes of $\mathcal S[0]$ and $\mathcal S[2]$, namely \texttt{ape} and \texttt{bac}. These strings appear in the $f=0$ set but do not occur as prefixes in the $f=1$ set.
\end{example}

\subparagraph{Wrapping up.}
We construct $\tau$-synchronizing sets for $S_1$ and $S_2$, with $\tau := \lfloor\frac{1}{15} \log_\sigma n \rfloor$.
We use these synchronizing sets to identify $\tau$-runs in both strings and group these by their suffixes, in order to construct \PairSES instances akin to \cref{lem:med}. Each instance is a prefix family and can be solved using \cref{lem:ses-med-prefix}. For candidates that do not start in a $\tau$-run, we take prefix-suffix pairs according to the synchronizing sets and apply \cref{lem:med:wavelet} to their union to build a wavelet tree with suffix lists at each node. Each node  additionally stores a bit vector indicating  whether each attached suffix belongs to $S_1$ or~$S_2$. This bit vector acts as the coloring function $f$ in \cref{lem:sas:med:lcps}. These can be computed without increasing the asymptotic time and space complexities, as in \cref{lem:med:lcs}. Finally, we apply \cref{lem:sas:med:lcps} to each node of the wavelet tree to obtain, for every prefix, the shortest corresponding suffix occurring in $S_1$ but not in $S_2$.

\subsubsection{Long Aperiodic Case}
We apply the same technique as in \cref{lem:long}. First, we construct $\tau$-synchronizing sets for $S_1$ and $S_2$, with $\tau:= \lfloor\frac13 \log^4 n \rfloor$. Following the original algorithm, we build two compacted tries~$T$ and $T^R$ containing the suffixes and prefixes of  strings starting or ending at anchor positions. The tries include strings from both $S_1$ and $S_2$; we distinguish them by coloring their leaves based on their source string. Since $n=|S_1|+|S_2|$, these tries occupy $\cO(n\log \sigma / \log^4 n)$ space. The rest of the algorithm proceeds identically to the \SUS case, except when mapping pairs of heavy paths to the 2D domain. We use \SkylineDiff to identify points dominated by $S_1$-leaves that are not dominated by any $S_2$-leaves. This ensures that the resulting substrings are exclusive to $S_1$.

\subsubsection{Periodic Case}

We apply the same technique as in \cref{lem:per}. First, we group all $\tau$-runs from both strings by their Lyndon root. For each distinct root, we have a set of runs from $S_1$ and a set from~$S_2$.
For each Lyndon root, we construct a \SkylineDiff instance where $P_1$ represents the periods/length available in $S_1$ and $P_2$ represents those in $S_2$.  Solving these instances allows us to determine the shortest periodic string that appears in $S_1$ but not in $S_2$.
Since our solution for the \SkylineDiff  (\cref{lem:sas:skyline}) matches the  asymptotic complexity of \SkylineProblem  (\cref{lem:skyline-alg}), the overall time complexity remains identical to that of \cref{lem:per}.

\bibliographystyle{plainurl}
\bibliography{references}

\end{document}